\newtheorem{theorem}{Theorem}
\newtheorem{lemma}{Lemma}
\newtheorem{ass}{Assumption}
\newtheorem{definition}{Definition}
\newtheorem{remark}{Remark}
\newtheorem{corol}{Corollary}
\def\ve{\varepsilon}
\def\mb{\mathbf}
\def\mc{\mathcal}
\DeclareMathOperator*{\argmin}{argmin}
\DeclareMathOperator*{\argmax}{argmax}
\journal{Systems and Control Letters}
\begin{document}

\begin{frontmatter}



\title{Distributed Delay-Tolerant Strategies for Equality-Constraint Sum-Preserving Resource Allocation}


\author[Sem]{Mohammadreza Doostmohammadian}
\affiliation[Sem]{Faculty of Mechanical Engineering, Semnan University, Semnan, Iran, doost@semnan.ac.ir.}
\author[AA]{ Alireza Aghasi} 
\affiliation[AA]{Department of Electrical Engineering and Computer Science, Oregon State University, USA, name.surname@oregonstate.edu.}
\author[MV]{Maria Vrakopoulou}
\affiliation[MV]{University of Melbourne, Melbourne, Australia, maria.vrakopoulou@unimelb.edu.au}
\author[HRR]{ Hamid R. Rabiee}
\affiliation[HRR]{Department of Computer Engineering, Sharif University of Technology, Tehran, Iran, rabiee@sharif.edu.}
\author[UK]{ Usman A. Khan}
\affiliation[UK]{Electrical and Computer Engineering Department, Tufts University, MA, USA, khan@ece.tufts.edu.} 
\author[Aalto,TC]{Themistoklis Charalambous} 
\affiliation[Aalto]{School of Electrical Engineering, Aalto University, Espoo, Finland, name.surname@aalto.fi.}         
\affiliation[TC]{School of Electrical Engineering, University of Cyprus, Nicosia, Cyprus, surname.name@ucy.ac.cy.}

\begin{abstract}
	This paper proposes two nonlinear dynamics to solve constrained distributed optimization problem for resource allocation over a multi-agent network. In this setup, coupling constraint refers to resource-demand balance which is preserved at all-times. The proposed solutions can address various model nonlinearities, for example, due to quantization and/or saturation. Further, it allows to reach faster convergence or to robustify the solution against impulsive noise or uncertainties. We prove convergence over weakly connected networks using convex analysis and Lyapunov theory. Our findings show that convergence can be reached for general sign-preserving odd nonlinearity. We further propose delay-tolerant mechanisms to handle general bounded heterogeneous time-varying delays over the communication network of agents while preserving all-time feasibility. This work finds application in CPU scheduling and coverage control among others. This paper advances the state-of-the-art by addressing \emph{(i)} possible nonlinearity on the agents/links, meanwhile handling \emph{(ii)} resource-demand feasibility at all times,  \emph{(iii)} uniform-connectivity instead of all-time connectivity, and \emph{(iv)} possible heterogeneous and time-varying delays. To our best knowledge, no existing work addresses contributions \emph{(i)}-\emph{(iv)} altogether. Simulations and comparative analysis are provided to corroborate our contributions.
\end{abstract}

\begin{graphicalabstract}
	\includegraphics{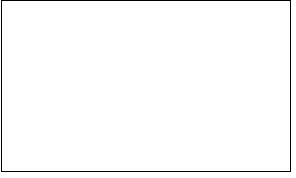}
\end{graphicalabstract}

\begin{highlights}
	\item Distributed all-time feasible strategies that solve resource allocation.
	\item Algorithm with tolerating heterogeneous bounded delays over the communication network.
	\item Convergence over uniformly-connected networks, even disconnected at some times.
	\item Addressing possible nonlinearity on the nodes or links for different applications. 
\end{highlights}

\begin{keyword}                        
	Constrained distributed resource allocation \sep graph theory \sep convex analysis, time-varying delays, uniform-connectivity
\end{keyword}

\end{frontmatter}


\section{Introduction} \label{sec_intro}
\subsection{Background}
Distributed algorithms have gained considerable attention because of recent advances in the Internet of Things (IoT), cloud computing, and parallel processing.
Distributed optimization over multi-agent networks, in particular, has emerged as an effective solution in large-scale applications ranging from machine-learning for large-scale data mining \cite{svm,khan2020optimization} to power flow optimization over the smart grid \cite{molzahn2017survey,yang2013consensus,themis2011asynchronous,cherukuri2015distributed}. 
Distributed/decentralized algorithms outperform centralized ones in many aspects: no-single-node-of-failure, scalability, speed/efficiency, etc. Distributed or decentralized optimization allows for scaling the problem-solving process to handle larger and more complex optimization problems by distributing the computational load across multiple nodes. If one of the node fails, the others can still solve the remaining objective function and this makes distributed optimization inherently more resilient to node-failures. Distributed resource allocation can use parallel processing to speed up the optimization process. By dividing the problem into sub-problems and solving them concurrently, the overall solution can be obtained faster and computationally more efficient than a centralized approach. 
In this work, we propose a distributed setup for resource allocation over general multi-agent systems with constraints on the communication network or the agents.

\subsection{Challenges}
In real-world applications,  many model nonlinearities exist in practice, some of which stem from the inherent physical constraints, e.g., actuator saturation or quantized data exchange \cite{wei2018nonlinear,Hadjicostis_run:sum}, and some are purposely added, e.g., to improve the convergence rate \cite{parsegov2013fixed,garg2019fixed2} or robustness to impulsive noise and uncertainties \cite{stankovic2019robust}. For example, in automatic generation control setup for power systems there exist ramp-rate-limit (RRL) due to limited rate of power generation by the real-world generators. In other words, the increase and decrease of the power generation cannot follow any rate but is limited. Most existing results in the literature cannot address this limit and therefore the generators cannot follow the  rates assigned by these solutions. Another example is when the information exchange among the agents is quantized. Most existing linear solutions do not address quantized information exchange among the agents and, in this aspect, the algorithm is not realistic.   
Such unseen nonlinearities make networked optimization (both constrained and unconstrained) more challenging in terms of computation, accuracy, feasibility, optimality, and convergence. 
Further, the network itself might be subject to time delays, or asynchronous data-transmission \cite{agarwal2012delay,al2020gradient,wang2018distributed,wang2019distributed,themis2011asynchronous}. The network also may lose network connectivity over some intermittent time-intervals (e.g., due to packet loss or link failure). The notion of constraint feasibility is another challenging issue. The equality-constraint ensures the resource-demand balance, and its violation may cause service disruption.
This paper aims to address general nonlinear solutions while handling network latency, uniform network connectivity, and all-time feasibility. The objective function is also nonlinear and this nonlinearity might be due to addressing the box constraints as penalty terms.


\subsection{Literature Review} 
The literature on distributed optimization, both constrained and unconstrained, mainly assumes linear models and no delay for data exchange over the network. There exist a few nonlinear reinforcement-learning-based models for resource allocation and economic dispatch \cite{hu2023distributed,dai2019distributed}. These works find optimal allocation with no prior knowledge of the mathematical formulation of the actual generation costs. In \cite{hu2023distributed} a Neural-Network is used to learn the relationship between the demand and the optimal output of each generation unit, while \cite{dai2019distributed} combines the state-action-value function approximation with a distributed optimization based on multiplier splitting.
Some existing works focus on one specific nonlinearity, e.g., quantized allocation/information-sharing \cite{fast,nedic2008distributed,rikos2020distributed,rikos_quant,wei2018nonlinear,nekouei2016convergence,nekouei2016performance} or saturated dynamics \cite{wei2018nonlinear,liu2020global}. Some others are devoted to address sign-based nonlinearities to improve the convergence rate, e.g., to reach the optimal value in finite-time  \cite{taes2020finite,chen2016distributed} or fixed-time \cite{ning2017distributed,parsegov2013fixed,garg2019fixed2}. Such fixed/finite dynamics are also prevalent in consensus literature  \cite{taes2020finite,wei2017consensus,stankovic2019robust} which also allow for robust and noise-resilient design.
However, these solutions are designed for a \textit{specific} case and cannot handle \textit{composition} of two or more nonlinearities. 
Moreover, in many applications, the network is dynamic (time-varying) due to, e.g., agents' mobility and may even get disconnected over some time intervals. Therefore, it is more practical to assume \textit{uniform connectivity}, in contrast to all-time connectivity in many existing solutions \cite{mikael2021cdc,rikos2021optimal}. Latency is another networking issue that may cause the optimization algorithm to diverge. Few works in this literature address possible \textit{homogeneous} delays at all links or asynchronous communication \cite{wang2019distributed,zhu2019distributed}, with no consideration of model nonlinearities.
In the sum-preserving constrained optimization, another challenge (other than stability) is to preserve all-time feasibility \cite{mikael2021cdc,cherukuri2015distributed} (in contrast to asymptotic feasibility \cite{aybat2016distributed,nedic2018improved,wang2019distributed}). All-time feasibility implies that, as the solution evolves, the constraint on the states always holds. For example, in the economic dispatch problem (EDP), at any termination point of the algorithm, the sum of the power states must be equal to the load demand. Otherwise, it causes disruption, power delivery issues, and even system breakdown \cite{mikael2021cdc,cherukuri2015distributed}. Such all-time feasibility conditions cannot be addressed by dual-based solutions, e.g., alternating-direction-method-of-multipliers (ADMM) methods \cite{banjac2019decentralized,falsone2020tracking,chang2016proximal,wei_me_cdc,falsone2018distributed,aybat2019distributed,dtac}. These works claim to reach feasibility fast enough within the running interval of the algorithm. Besides, the mentioned nonlinearities, network variation, uniform connectivity, and latency have not been addressed by the existing ADMM solutions \cite{banjac2019decentralized,falsone2020tracking,chang2016proximal,wei_me_cdc,falsone2018distributed,aybat2019distributed}.
Some of the existing algorithms, on the other hand,  solve specific quadratic-form problems, for example, consensus-based solutions for CPU scheduling \cite{rikos2021optimal} or economic dispatch \cite{Kar6345156}.
In general, however, the objective could be non-quadratic, e.g., because of additional penalty terms and barrier functions to handle the so-called \textit{box constraints}.

\subsection{Motivations}
What is missing in the existing literature is a general method to address ``model nonlinearitiy'' to solve (general) non-quadratic objectives while considering uniform connectivity, latency, and sum-preserving all-time feasibility altogether. The nonlinearity of the node dynamics might be due to, for example, ramp-rate-limit (RRL) of the generators in automatic generation control (AGC) \cite{hiskens_agc}.
The other common issue is quantized information exchange in real-world communication networks that motivates the use of algorithm which can handle such a nonlinearity. In this work, we kept the application general and we focus on general multi-agent systems with communication networks. The communication networks often introduce delays due to packet processing, transmission time, and network congestion. Additionally, agents are subject to some processing time before state-update or message-sharing. Therefore, the information sent from one agent/node may reach the receiver agent with certain time-delay. These delays may cause divergence of the solution and/or feasibility and optimality gap. This motivates the need to introduce a delay-tolerant solution for real-world applications. Furthermore, the network itself might be subject to some changes and lose connectivity over intermittent time-intervals due to, for example, agents' mobility. This motivates the use of algorithms to handle uniform-connectivity instead of all-time network connectivity. Another motivation is to address all-time feasibility of the equality-constraint to ensure resource-demand balance at all times. This guarantees that there is no service disruption before the termination of the algorithm at any time. This work finds many applications, e.g., in CPU scheduling \cite{ccta_quan}, coverage control \cite{MiadCons}, and plug-in electric vehicle (PEV) charging coordination \cite{falsone2017dual} among others.

\subsection{Main Contributions}
In this work, we propose two general nonlinear solutions for distributed constrained convex optimization: one addressing the nonlinearity of the nodes and one of the links. Sufficient conditions on the nonlinearity and network connectivity to ensure convergence are discussed. The Lyapunov-type proof analysis is irrespective of the specific type of nonlinearity. Therefore, the algorithm can address certain inherent physical model nonlinearity (e.g., quantization and/or saturation) or purposely added nonlinearity (e.g., signum-based) to \textit{design} fast-convergent or robust-to-noise solutions. Our model can handle the composition of more than one nonlinear mapping. In power networks, for example, the saturated generator dynamics due to RRLs is not addressed by the existing methods \cite{cherukuri2015distributed,boyd2006optimal,yang2013consensus,chen2016distributed,yi2016initialization,Kar6345156,zhu2019distributed} in the AGC setup. 
Our proposed general multi-agent network can be adapted to communicate and transmit quantized information, address actuator saturation, reach a tunable (or predefined) rate of convergence, embrace resiliency and impulsive-noise-tolerance, or any composition of such nonlinear models. We prove coupling-constraint feasibility and convergence  subject to general upper/lower sector-bound nonlinearity. In other words, sufficient conditions  on the model's nonlinearity are derived to not violate all-time sum-preserving feasibility (resource-demand balance) and convergence to the optimal value; see examples in Section~\ref{sec_sim}.

We discuss both continuous-time (CT) and discrete-time (DT) solutions. In the DT case, we design delay-tolerant solutions to handle finite \textit{arbitrary} and \textit{heterogeneous} time delays over the network. We prove constraint feasibility and
convergence/stability under bounded step sizes and certain assumptions on the time-delays. Two approaches to handle network latency are given: i) Case I, updating over a longer time scale after receiving all delayed information, and ii) Case II, updating by all the information received at the same time scale as the DT dynamics. The proposed delay-tolerant solutions lead to \textit{no feasibility gap} over switching and uniformly connected networks (instead of all-time connectivity) and in the presence of heterogeneous delays and model nonlinearity. This is in contrast to, e.g., the solution by \cite{wang2019distributed} with some feasibility gap.
The uniform connectivity is motivated by applications in mobile sensor networks, where the links may come and go as the mobile sensors (or robots) move into and out of line-of-sight (or broadcast range) of each other. The network may sometimes even lose connectivity due to link failure or even packet loss, while it maintains \textit{uniform connectivity} over some finite time intervals. Finally, we apply our proposed solutions in distributed setup with  \emph{(i)} quadratic costs and \emph{(ii)} non-quadratic costs with logarithmically quantized values. 

We summarized our contributions in the following: 
\begin{enumerate} [(i)]
	\item Possible inherent and additive nonlinearity in the model dynamics can be addressed by our proposed resource allocation model with general non-quadratic objective functions allowing for consideration of penalty/barrier functions.
	\item Our proposed delay-tolerant solution can handle possible heterogeneous and arbitrary delays of the links over general uniformly-connected networks. The values of delays might be different at the links and change over time while the network may lose connectivity over some intermittent time-intervals.
	\item Our proposed distributed resource allocation is all-time feasible implying that at all times the resource-demand balance holds even in the presence of time-delays and losing network connectivity. This is in contrast to dual formulation methods in which the feasibility gap asymptotically reduces over time. The all-time resource-demand feasibility prevents service disruption at any termination/assignment time of the algorithm.  
\end{enumerate}	
 To the best of our knowledge, no previous works in the literature address contributions \emph{(i)}-\emph{(iii)} altogether.

\subsection{Paper organization}
We introduce the preliminary notions and some useful lemmas to set up the problem in Section~\ref{sec_prob}. 
The CT solutions are proposed in Section~\ref{sec_dynamic} and their convergence is discussed in Section~\ref{sec_conv}. The DT counterparts subject to time delays are discussed in Sections~\ref{sec_DT} and \ref{sec_delay_DT}. Applications and more simulations on sparse dynamic networks are presented in Sections~\ref{sec_qra} and~\ref{sec_sim}. Finally, Section~\ref{sec_conclusion} concludes the paper. 

\subsection{General Notation} 
In this paper, we present the column vectors in bold small letters and scalars with small letters. The matrices are denoted by capital letters. For notation simplicity, $\partial f_i$  and $\partial^2 f_i$ denote $ \frac{df_i(x_i)}{dx_i}$  and $\frac{d^2f_i(x_i)}{dx_i^2}$, respectively. See the full list of notations in Table~\ref{tab_notation}.

\begin{table} [hbpt!] 
\centering
\caption{List of Notations}\
\scriptsize\setlength{\tabcolsep}{3pt}
\begin{tabular}{|c|c|}
	\hline 		\hline
	$\mb{x}$, $\mb{z}$ & column state vectors \\ \hline
	$\mb{1}_n$, $\mb{0}_n$ & column vector of ones and zeros of size $n$ \\
	\hline
	$\mb{x}^*$, $\mb{z}^*$ & optimal state vector \\ \hline
	$\widetilde{\mc{S}}_b$, ${\mc{S}}_b$ & feasibility sets (associated with $b$) of $\mb{z}$ and $\mb{x}$ states  \\
	\hline
	$x_i$, $z_i$ & element $i$ of the state vector (state of node $i$) \\\hline
	$t$, $k$ & continuous and discrete time index \\
	\hline
	$F(\cdot)$ & global cost function \\
	\hline $\mc{G}$, $\mc{V}$, $\mc{E}$ & network, set of nodes, set of links \\
	\hline
	$\nabla F(\cdot)$  & gradient of $F(\cdot)$ \\ \hline 
	$d_g$ & network diameter \\
	\hline
	$f_i(\cdot)$ & local cost function  \\ \hline $W$, $L$ & weight matrix, Laplacian matrix \\
	\hline
	$\partial f_i(\cdot)$ & first derivative of $f_i(\cdot)$  \\  \hline $\lambda_i$ & $i$th eigenvalue of $L$ \\
	\hline
	$\partial^2 f_i(\cdot)$ & second derivative of $f_i(\cdot)$ \\ \hline $\tau_{ij}$ & time-delay of link $(j,i)$ (or $j \rightarrow i$) \\		
	\hline
	$\mc{N}_i$ & set of neighbors of node $i$ \\ \hline  $\overline{\tau}$ & global upper-bound on the time-delay \\
	\hline
	$K_f$ & Lipschitz constant of $f(\cdot)$ \\ \hline $T$ & sampling step \\ 
	\hline		
	$g(\cdot)$ & nonlinear mapping \\ \hline $\mc{I}(\cdot)$ & indicator function \\
	\hline		
	$n$ & number of agents/nodes \\ \hline $|\cdot|$, $ \lceil \cdot \rceil $, $ \lfloor \cdot \rfloor $   & absolute value, ceil function, floor function  \\
	\hline			\hline
	\label{tab_notation}
\end{tabular} \normalsize
\label{tab_par}
\end{table}

The communication network of agents (also referred to as the multi-agent network) considered in this paper is modelled by a (possibly) time-varying graph $\mc{G}(t)=\{\mc{V},\mc{E}(t)\}$ with a set of time-dependent links $\mc{E}(t)$ and time-invariant set of nodes $\mc{V}=\{1,\dots,n\}$. A link $(i,j) \in \mc{E}(t)$ represents possible information exchange (communication) from the agent $i$ to the agent $j$, and further implies that the agent $i$ is in the neighbouring set of the agent $j$, defined as $\mc{N}_j(t)=\{i|(i,j)\in \mc{E}(t)\}$. The link $(j,i) \in \mc{E}(t)$ is weighted by $W_{ij}>0$ and $W(t)=[W_{ij}(t)] \in \mathbb{R}^{n \times n}_{\geq0}$ represents the weight matrix of the network $\mc{G}(t)$. Clearly, $W(t)$ follows the structure (zero-nonzero pattern) of the adjacency matrix associated with $\mc{G}(t)$. In $\mc{G}(t)$, we define a \textit{spanning tree} as a sub-graph of size $n$ (covering all nodes) in which there is only one path between every two nodes \cite{algorithm}. This spanning tree is known to include minimum possible links for connectivity \cite{algorithm}.
Define the Laplacian matrix $L(t)=[L_{ij}(t)]$ as,
\begin{align} \label{eq_laplacian}
L_{ij}(t) = \left\{ \begin{array}{ll}
	\sum_{j \in \mc{N}_i} W_{ij}(t), & \text{for}~  i=j,\\
	-W_{ij}(t), & \text{for}~i\neq j.
\end{array}\right.
\end{align}

\section{Problem Setup} \label{sec_prob}
This paper considers equality-constraint  optimization problems in the primal nonlinear formulation over a multi-agent network. The objective of the problem is to minimize the cost function while satisfying the (weighted) resource-demand balance constraint. This balance equality-constraint ensures that the weighted sum of resources meets the demand by the user, otherwise it may cause service disruption in the application under consideration. The mathematical formulation is as follows:
\begin{equation} \label{eq_dra0}
\begin{aligned}
	\displaystyle
	\min_\mb{z}
	~~ & \widetilde{F}(\mb{z}) := \sum_{i=1}^{n} \widetilde{f}_i(z_i) \\
	\text{s.t.} ~~&  \mb{z}^\top\mb{a} = b
\end{aligned}
\end{equation}
where element $z_i \in \mathbb{R}$ represents the state variable at agent $i$,
Column vector $\mb{z} = [z_1;\dots;z_n] \in \mathbb{R}^{n}$ is the collective vector state\footnote{
In its most general form, 
the problem can be extended to $\mb{z}_i \in \mathbb{R}^m$ with $m>1$ (as in \cite{doostmohammadian20211st,lakshmanan2008decentralized}). Considering $m=1$ is for the sake of simplifying the proof analysis in Sections~\ref{sec_conv} and \ref{sec_DT}.},
and column vector $\mb{a}=[{a}_1;\dots;{a}_n] \in \mathbb{R}^n$ and $b \in \mathbb{R}$ are the constraint parameters. $f_i: \mathbb{R} \mapsto  \mathbb{R}$ is the local cost (or loss) function at agent $i$, and the overall cost function is $F: \mathbb{R}^{n} \mapsto  \mathbb{R}$. 
The problem can be also extended to consider box constraints on the states, i.e., $\underline{z}_i \leq z_i \leq \overline{z}_i$. In this case, one can eliminate these extra constraints by adding proper exact (nonlinear) penalty functions to every local objective; for example, changing the local objectives to $f_i^{\sigma}(z_i) = f_i(z_i) + c([z_i - \overline{z}_i]^+ + [\underline{z}_i - z_i ]^+)$, with $[u]^+=\max \{u, 0\}^\sigma,~\sigma \in \mathbb{N}$, and $c \in \mathbb{R}^+$. 
Some other example penalties and barrier functions are discussed in \cite{bertsekas1975necessary}.
In general, such penalty (or barrier) functions are non-quadratic and nonlinear, which makes the objective function non-quadratic and nonlinear; see examples in \cite{doostmohammadian20211st,mikael2021cdc}. \textit{Feasible initialization} algorithms under such local constraints are discussed, for example, in \cite{cherukuri2015distributed}.

The problem is generally stated in the following standard sum-preserving form
\begin{equation} \label{eq_dra}
\begin{aligned}
	\displaystyle
	\min_\mb{x}
	~~ & F(\mb{x}) = \sum_{i=1}^{n} f_i(x_i) \\
	\text{s.t.} ~~&  \sum_{i=1}^{n} x_i = b
\end{aligned}
\end{equation}
which can be obtained from \eqref{eq_dra0} by simple change of variables $z_i a_i =: x_i$, with box constraints transformed into $a_i \underline{z}_i \leq  x_i \leq a_i  \overline{z}_i$ for $a_i>0$ and reversed otherwise.
We aim to provide general nonlinear dynamics to solve problem \eqref{eq_dra} over a multi-agent network for different applications. The proposed solutions need to be distributed, implying that the information available at each agent $i$ includes its own information (for example, its state and local objective) and the data received from agents $j \in \mc{N}_i$  (its direct neighbours). This work addresses the case that agents/links are constrained with some nonlinearity. Moreover, the proposed distributed solution remains feasible (i.e.,  $\sum_{i=1}^{n} x_i(t) = b$, $\forall t\geq0$) and delay-tolerant (under heterogeneous time-delays). Further, it is possible that the communication network changes over time and loses connectivity at some bounded time-intervals. The assumptions on the objective function convexity (to include possible penalty terms), the network connectivity, feasibility, and the time-delay model are essential in the problem setup, as discussed next. 

\subsection{Useful Lemmas and Definitions on  Convexity} \label{sec_lem}

\begin{definition} [Lipschitz Condition] \label{def_lips}
Let $h: \mathbb{R} \mapsto  \mathbb{R} $ represent a nonlinear mapping. Function $h(y)$ is called Lipschitz continuous if there exists a real constant $K_h$ such that for any $y_1, y_2 \in \mathbb{R}$,
\begin{align}
	|h(y_1)-h(y_2)| \leq K_h |y_1-y_2|.	 
\end{align} 
\end{definition}

\begin{definition} [Strict Convexity] A function $f:\mathbb{R} \mapsto \mathbb{R}$ is strictly convex if $\forall y_1,y_2 \in \mathbb{R},~\forall \kappa \in (0,1)$,
\begin{align}
	f(\kappa y_1+(1-\kappa)y_2)<\kappa f(y_1)+(1-\kappa)f(y_2).
\end{align}
\end{definition}

It is known that for a smooth strictly convex function, $\partial^2 f(y)> 0$
for $y \in \mathbb{R}$ \cite{bertsekas_lecture}.

\begin{ass} \label{ass_strict}
The  local cost/objective functions ${f}_i(x_i):\mathbb{R} \mapsto \mathbb{R}$, $i \in \{1,\dots,n\}$ are smooth and  strictly convex, i.e., $\partial^2 f_i(x_i) > 0$. 
\end{ass}
Note that the penalized objective function $ f_i^{\sigma}$ is also Lipschitz since $[u]^+$ is Lipschitz for $\sigma \in \mathbb{N}_{\geq 2}$.
Recalling the Taylor series expansion, the following holds.
\begin{lemma} \label{lem_strict}
\cite{boyd2006optimal} Given a continuous strictly-convex function $f(y)$, two points $y_1, y_2$, and $\Delta y =: y_1-y_2$, there exist  $\overline{y} := \kappa y_1 + (1-\kappa)y_2, 0<\kappa<1 $ such that,
\begin{align}
	f(y_1) = f(y_2) + \nabla F(y_2)^\top \Delta y +  \frac{1}{2} \Delta y^\top \partial^2 f(\overline{y})\Delta y. \label{eq_taylor}
\end{align}
\end{lemma}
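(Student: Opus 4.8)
The plan is to prove this as a direct consequence of Taylor's theorem with the Lagrange (or integral) form of the remainder, applied to the single-variable function $f$. Although the statement is written with the notation $\nabla F(y_2)$ and transposes, for a scalar argument $y \in \mathbb{R}$ this just reads $f(y_1) = f(y_2) + \partial f(y_2)\,\Delta y + \tfrac{1}{2}\,\partial^2 f(\overline{y})\,(\Delta y)^2$, so the transposes are cosmetic and I would simply note that. First I would invoke the smoothness assumption (Assumption~\ref{ass_strict}) so that $f$ is at least twice continuously differentiable; this is exactly what is needed for the second-order Taylor expansion with remainder to be valid.

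Second, I would write down Taylor's theorem at the base point $y_2$ evaluated at $y_1 = y_2 + \Delta y$. The cleanest route is the integral form of the remainder,
\begin{align}
f(y_1) = f(y_2) + \partial f(y_2)\,\Delta y + (\Delta y)^2 \int_0^1 (1-s)\,\partial^2 f\bigl(y_2 + s\,\Delta y\bigr)\,ds,
\end{align}
which holds whenever $\partial^2 f$ is continuous on the segment between $y_2$ and $y_1$. Third, I would apply the mean value theorem for integrals to the continuous function $s \mapsto \partial^2 f(y_2 + s\Delta y)$ against the nonnegative weight $(1-s)$ on $[0,1]$: since $\int_0^1 (1-s)\,ds = \tfrac12$, there exists some $\kappa \in (0,1)$ with
\begin{align}
\int_0^1 (1-s)\,\partial^2 f\bigl(y_2 + s\,\Delta y\bigr)\,ds = \frac{1}{2}\,\partial^2 f\bigl(y_2 + \kappa\,\Delta y\bigr).
\end{align}
Setting $\overline{y} := y_2 + \kappa\,\Delta y = \kappa y_1 + (1-\kappa)y_2$ (using $\Delta y = y_1 - y_2$) gives exactly the claimed identity \eqref{eq_taylor}. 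An alternative that avoids the integral form altogether is to apply the Lagrange form of Taylor's remainder directly, which immediately yields the existence of $\overline{y}$ strictly between $y_1$ and $y_2$; I would probably mention both but present the Lagrange version as the one-line argument.

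The only subtlety — and the one place I would be careful rather than glib — is the endpoint behavior: the mean value theorem for integrals gives $\kappa \in [0,1]$ in general, and one wants the open interval $(0,1)$ so that $\overline{y}$ is genuinely an interior point. This is not an obstacle so much as a remark: if $\Delta y = 0$ the statement is trivial (any $\kappa$ works), and if $\Delta y \neq 0$ then because $(1-s)\partial^2 f(y_2+s\Delta y)$ is continuous and strictly positive on $[0,1)$ by strict convexity (Assumption~\ref{ass_strict}), the averaged value is strictly less than the supremum and strictly greater than the infimum unless $\partial^2 f$ is constant, and in all cases one can choose the witnessing point in the open interval. So I expect essentially no real difficulty; the "proof" is a citation to Taylor's theorem (consistent with the paper attributing Lemma~\ref{lem_strict} to \cite{boyd2006optimal}) plus the observation that scalar-argument notation collapses the transposes, and I would keep it to a few lines.
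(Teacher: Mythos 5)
Your proposal is correct and follows exactly the route the paper intends: the paper offers no proof of Lemma~\ref{lem_strict} beyond the preface ``Recalling the Taylor series expansion, the following holds'' and the citation to \cite{boyd2006optimal}, so your Taylor-with-Lagrange/integral-remainder argument (plus the observation that the transposes are vacuous for scalar arguments) is simply the fully written-out version of the same standard fact. No gaps.
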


\begin{lemma} \label{lem_strict2}
\cite{bertsekas_lecture}  For cost function ${f}_i:\mathbb{R} \mapsto \mathbb{R}$, assume $2 v < \partial^2 f_i < 2 u $ with $0<v \leq u <\infty$\footnote{The condition $2 v  < \partial^2 f_i(x_i) $ implies that the cost function  ${f}_i$ is strongly convex, see Assumption~\ref{ass_lips_strict} in Section~\ref{sec_rate}.}. Then,  for two points $x_1, x_2 \in \mathbb{R}$, and ${\Delta x := x_1 - x_2}$, the following statements hold:
\begin{align}
	F(x_1) &< F(x_2) + \nabla F(x_1) \Delta x + {u} \Delta x \Delta x. \label{eq_taylor1} \\ \nonumber
	F(x_1) &> F(x_2) + \nabla F(x_1) \Delta x + {v} \Delta x \Delta x. \label{eq_taylor2}
\end{align}
\end{lemma}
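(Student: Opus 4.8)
The plan is to derive both inequalities in Lemma~\ref{lem_strict2} directly from the second-order Taylor expansion in Lemma~\ref{lem_strict}, applied to the global cost $F(\mb{x}) = \sum_i f_i(x_i)$, together with the two-sided bound $2v < \partial^2 f_i < 2u$. First I would note that, since $F$ is separable, its Hessian is the diagonal matrix $\mathrm{diag}(\partial^2 f_1, \dots, \partial^2 f_n)$, and hence for any vector $\mb{y}$ the quadratic form satisfies $v\, \|\mb{y}\|^2 \cdot 2 < \mb{y}^\top \nabla^2 F(\mb{y}) \mb{y} < 2u\, \|\mb{y}\|^2$ componentwise-dominated; more precisely $2v\,\Delta x^\top \Delta x < \Delta x^\top \nabla^2 F(\overline{\mb{x}}) \Delta x < 2u\, \Delta x^\top \Delta x$ for any evaluation point $\overline{\mb{x}}$, because each diagonal entry lies strictly between $2v$ and $2u$.

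Next I would invoke Lemma~\ref{lem_strict} to write, for the two points $x_1, x_2$ and $\Delta x := x_1 - x_2$, the exact expansion $F(x_1) = F(x_2) + \nabla F(x_2)^\top \Delta x + \frac{1}{2}\Delta x^\top \nabla^2 F(\overline{x})\, \Delta x$ for some intermediate point $\overline{x}$. Substituting the upper bound on the Hessian quadratic form gives $F(x_1) < F(x_2) + \nabla F(x_2)^\top \Delta x + u\, \Delta x^\top \Delta x$, and the lower bound gives the companion inequality with $v$. This already yields the stated inequalities if the gradient is evaluated at $x_2$; to get the form stated in the lemma, with $\nabla F(x_1)$ appearing instead, I would simply re-run the same argument expanding $F(x_2)$ around $x_1$, i.e. apply Lemma~\ref{lem_strict} with the roles of the two points swapped, so that $F(x_2) = F(x_1) - \nabla F(x_1)^\top \Delta x + \frac{1}{2}\Delta x^\top \nabla^2 F(\overline{x})\, \Delta x$, and then rearrange to isolate $F(x_1)$. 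Bounding the Hessian term as before produces exactly \eqref{eq_taylor1} and \eqref{eq_taylor2}.

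The only mild subtlety — and the step I would be most careful about — is the direction of the strict inequalities and the fact that $\Delta x^\top \Delta x = \|\Delta x\|^2 > 0$ whenever $x_1 \neq x_2$ (the case $x_1 = x_2$ being trivial, with both sides equal). Since the bounds $2v < \partial^2 f_i$ and $\partial^2 f_i < 2u$ are assumed strict, multiplying through by the strictly positive factor $\tfrac12 \|\Delta x\|^2$ preserves strictness, so no degenerate case spoils the inequalities. The remaining work is purely bookkeeping: carrying the sign change through the swapped expansion and collecting terms. I would also remark in passing that the lower bound $2v < \partial^2 f_i$ is precisely strong convexity with parameter $2v$, which is why the footnote references Assumption~\ref{ass_lips_strict}; this observation is not needed for the proof but situates the lemma. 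Overall I expect no real obstacle here — this is a standard consequence of bounding the Lagrange remainder — so the "hard part" is merely ensuring the statement's gradient-at-$x_1$ convention is matched by expanding in the correct direction.
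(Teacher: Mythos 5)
The paper offers no proof of this lemma --- it is imported verbatim from \cite{bertsekas_lecture} --- so the only question is whether your derivation actually establishes the stated inequalities. Your first computation (expand via Lemma~\ref{lem_strict} around $x_2$, bound the Lagrange remainder using $2v<\partial^2 f_i<2u$) is the correct, standard argument, and it yields
\begin{align*}
F(x_2)+\nabla F(x_2)\,\Delta x + v\,\Delta x^2 \;<\; F(x_1) \;<\; F(x_2)+\nabla F(x_2)\,\Delta x + u\,\Delta x^2 ,
\end{align*}
i.e.\ the gradient anchored at $x_2$. The gap is in the pivot you make to match the statement's $\nabla F(x_1)$: expanding $F(x_2)$ around $x_1$ and rearranging gives $F(x_1)=F(x_2)+\nabla F(x_1)\,\Delta x-\tfrac12\,\partial^2 F(\overline{x})\,\Delta x^2$, so the remainder enters with a \emph{minus} sign and the two-sided bound becomes $F(x_2)+\nabla F(x_1)\Delta x - u\Delta x^2 < F(x_1) < F(x_2)+\nabla F(x_1)\Delta x - v\Delta x^2$. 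This is not ``exactly \eqref{eq_taylor1} and \eqref{eq_taylor2}''; the coefficients come out as $-u$ and $-v$ with the roles of upper and lower bound interchanged, and no amount of bookkeeping repairs this.

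Indeed, the lower bound \eqref{eq_taylor2} is false as literally written: take $F(x)=x^2$ (so any $0<v<1<u$ is admissible), $x_1=0$, $x_2=1$, $\Delta x=-1$; then \eqref{eq_taylor2} reads $0>1+v$. The upper bound \eqref{eq_taylor1} does hold, but only because $+u\Delta x^2$ is far weaker than the tight constant $-v\Delta x^2$ your expansion produces, and your argument does not supply the extra relaxation step needed to pass from one to the other. The statement is evidently a typo for the standard result with $\nabla F(x_2)$ in place of $\nabla F(x_1)$ --- which is also the version the paper actually needs downstream, since Lemma~\ref{lem_cortes} completes the square in $\Delta\mb{x}$ to obtain the $\tfrac{1}{4v}$ and $\tfrac{1}{4u}$ bounds, and that step requires the gradient and the displacement to be anchored at the same point. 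Your proof should either prove that corrected statement (your first computation already does) or explicitly flag the sign/evaluation-point error; asserting that the swapped expansion delivers the lemma as printed is the step that fails.
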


\begin{lemma} \label{lem_optimal_solution}
Let Assumption~\ref{ass_strict} hold. The unique optimal solution $\mb{x}^*=[{x}^*_1;\dots;{x}^*_n]$ to problem \eqref{eq_dra} is in the form,
\begin{align}
	\nabla {F}(\mb{x}^*) = \varphi^*\mb{1}_n,
\end{align}
with $\mb{1}_n$ as the column vector of $1$s, $\partial f_i \in \mathbb{R} $ as the gradient of the local  function $f_i(\cdot)$, $\nabla F(\mb{x}^*) = [\partial f_1(\mb{x}^*_1); \dots; \partial f_n(\mb{x}^*_n)] \in \mathbb{R}^{n}$, and $\varphi^* \in \mathbb{R}$.
\end{lemma}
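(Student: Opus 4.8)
The plan is to characterize the optimal solution of problem \eqref{eq_dra} using the standard first-order optimality (KKT) conditions for an equality-constrained convex program, and then separately establish uniqueness from the strict convexity assumed in Assumption~\ref{ass_strict}. First I would note that under Assumption~\ref{ass_strict} the objective $F(\mb{x}) = \sum_{i=1}^n f_i(x_i)$ is strictly convex (being a sum of strictly convex functions) and smooth, and the feasible set $\{\mb{x} : \sum_{i=1}^n x_i = b\}$ is a nonempty affine subspace. Hence the problem is a smooth convex program with a single linear equality constraint, and the Lagrangian is $\mc{L}(\mb{x},\varphi) = F(\mb{x}) - \varphi\bigl(\sum_{i=1}^n x_i - b\bigr)$ for a scalar multiplier $\varphi \in \mathbb{R}$.

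Next I would write the stationarity condition $\nabla_{\mb{x}} \mc{L}(\mb{x}^*,\varphi^*) = \mb{0}_n$, which componentwise reads $\partial f_i(x_i^*) - \varphi^* = 0$ for every $i \in \{1,\dots,n\}$, i.e. $\nabla F(\mb{x}^*) = \varphi^* \mb{1}_n$; together with primal feasibility $\sum_{i=1}^n x_i^* = b$ these are exactly the KKT conditions. Since the constraint is affine, no constraint qualification beyond feasibility is needed, so these conditions are both necessary and sufficient for global optimality in a convex program. That establishes that any optimal $\mb{x}^*$ has the stated form, and conversely that any feasible point satisfying $\nabla F(\mb{x}^*) = \varphi^*\mb{1}_n$ for some $\varphi^*$ is optimal.

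For uniqueness I would argue by strict convexity: if $\mb{x}_1$ and $\mb{x}_2$ were two distinct optimizers with common optimal value $F^*$, then $\tfrac12(\mb{x}_1+\mb{x}_2)$ is feasible (the feasible set is convex) and by strict convexity $F\bigl(\tfrac12(\mb{x}_1+\mb{x}_2)\bigr) < \tfrac12 F(\mb{x}_1) + \tfrac12 F(\mb{x}_2) = F^*$, contradicting optimality. Equivalently, one could invoke Lemma~\ref{lem_strict}: with $\Delta x := \mb{x}_1 - \mb{x}_2 \neq \mb{0}$ one gets a strictly positive quadratic term $\tfrac12 \Delta x^\top \partial^2 f(\overline{x}) \Delta x > 0$ since $\partial^2 f_i > 0$, ruling out two distinct stationary points with the same value. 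Existence of an optimizer, if one wants to be thorough, follows from coercivity of $F$ on the feasible affine set implied by strong/strict convexity, or is simply assumed as part of the problem being well posed.

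The only real subtlety — and the step I would be most careful about — is justifying that stationarity of the Lagrangian is sufficient (not just necessary) for optimality; this is where convexity of $F$ and affineness of the constraint do the work, and it is worth stating explicitly rather than treating as routine. Everything else is a direct application of elementary Lagrange-multiplier calculus plus the strict-convexity argument for uniqueness, so I do not anticipate any genuine obstacle; the lemma is essentially a restatement of the KKT system in a form convenient for the subsequent dynamics analysis, where $\varphi^*$ plays the role of the common steady-state gradient value that the algorithms drive all nodes toward.
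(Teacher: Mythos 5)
Your proposal is correct and follows essentially the same route as the paper, whose proof is a one-line appeal to the KKT conditions and the Lagrange-multiplier method; you simply spell out the stationarity, sufficiency-via-convexity, and strict-convexity uniqueness steps that the paper leaves to the cited references. No discrepancy to report.
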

\begin{proof}
The proof follows from the KKT condition and the Lagrange multipliers method \cite{bertsekas_lecture,boyd2006optimal}.
\end{proof}
Similarly, the unique solution of problem \eqref{eq_dra0} is in the form $\nabla \widetilde{F}(\mb{z}^*) = \widetilde{\varphi}^* \mb{a}$. Note that, in the presence of the box constraints, the above lemma assumes that $\mb{z}^*$ meets those constraints, i.e., $\underline{z}_i \leq z^*_i \leq \overline{z}_i$ for all $i$.

Throughout the paper, we refer to the feasibility condition as described below.

\begin{definition} [Feasibility]
Define the feasible sets $\widetilde{\mc{S}}_b = \{\mb{z} \in \mathbb{R}^{n}|\mb{z}^\top\mb{a} = b\}$ in \eqref{eq_dra0} or $\mc{S}_b = \{\mb{x} \in \mathbb{R}^{n}|\mb{x}^\top\mb{1}_n = b\}$ in \eqref{eq_dra}. A feasible solution then is defined as $\mb{x} \in \mc{S}_b$ or $\mb{z} \in \widetilde{\mc{S}}_b$.
\end{definition}

\begin{lemma}\label{lem_unique_feasible}
Let Assumption~\ref{ass_strict} hold. Initializing from any feasible set $\widetilde{\mc{S}}_b$ there is only one unique point $\mb{z}^* \in \widetilde{\mc{S}}_b$  such that $\nabla \widetilde{F}(\mb{z}^*) = \widetilde{\varphi}^* \mb{a}$ for  $ \widetilde{\varphi}^* \in \mathbb{R}$. Similarly, there is unique  $\mb{x}^* \in \mc{S}_b$  such that $\nabla F(\mb{x}^*) = \varphi^* \mb{1}_n$.
\end{lemma}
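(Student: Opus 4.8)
The plan is to reduce Lemma~\ref{lem_unique_feasible} to Lemma~\ref{lem_optimal_solution} together with a uniqueness argument exploiting strict convexity. First I would observe that Lemma~\ref{lem_optimal_solution} already gives existence and uniqueness of a minimizer $\mb{x}^*$ of problem~\eqref{eq_dra} and characterizes it by $\nabla F(\mb{x}^*) = \varphi^* \mb{1}_n$; since any feasible set $\mc{S}_b$ is precisely the constraint set of~\eqref{eq_dra}, this $\mb{x}^*$ lies in $\mc{S}_b$. So the only real content is \emph{uniqueness of the point in $\mc{S}_b$ satisfying the gradient condition}, i.e.\ that no other feasible point $\mb{y}$ has $\nabla F(\mb{y}) = \psi \mb{1}_n$ for some $\psi \in \mathbb{R}$.

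For that, I would argue by contradiction: suppose $\mb{x}^*, \mb{y} \in \mc{S}_b$ with $\nabla F(\mb{x}^*) = \varphi^* \mb{1}_n$ and $\nabla F(\mb{y}) = \psi \mb{1}_n$. Then $\mb{y}$ also satisfies the KKT conditions for~\eqref{eq_dra} (feasibility plus stationarity of the Lagrangian with multiplier $\psi$), hence $\mb{y}$ is a minimizer; but a strictly convex objective over a convex (affine) feasible set has at most one minimizer, so $\mb{y} = \mb{x}^*$. Alternatively, and more self-containedly, I would use componentwise strict monotonicity of $\partial f_i$ (from $\partial^2 f_i > 0$, Assumption~\ref{ass_strict}): from $\partial f_i(x^*_i) = \varphi^*$ and $\partial f_i(y_i) = \psi$ for all $i$, if $\psi > \varphi^*$ then $y_i > x^*_i$ for every $i$, forcing $\sum_i y_i > \sum_i x^*_i = b$, contradicting $\mb{y} \in \mc{S}_b$; symmetrically $\psi < \varphi^*$ is impossible, so $\psi = \varphi^*$ and then $y_i = x^*_i$ for all $i$ by injectivity of $\partial f_i$.

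The statement for problem~\eqref{eq_dra0} follows by the same reasoning applied to $\widetilde F$ with the constraint $\mb{z}^\top \mb{a} = b$, or simply by invoking the change of variables $x_i = z_i a_i$ noted after~\eqref{eq_dra} (valid when $a_i \neq 0$), which maps $\widetilde{\mc{S}}_b$ bijectively onto $\mc{S}_b$ and carries the condition $\nabla \widetilde F(\mb{z}^*) = \widetilde\varphi^* \mb{a}$ onto $\nabla F(\mb{x}^*) = \widetilde\varphi^* \mb{1}_n$. I do not expect any serious obstacle here; the one point requiring a little care is making sure the ``initializing from any feasible set'' phrasing is interpreted correctly — the claim is not that the point depends on the initialization (it does not), but that for the \emph{given} $b$ determined by that feasible set the optimal point is unique and lies in it. I would state this explicitly to avoid ambiguity.
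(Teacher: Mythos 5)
Your proof is correct, but it takes a more self-contained route than the paper. The paper's own proof is essentially a one-line appeal to strict convexity of $F$ (Assumption~\ref{ass_strict}), deferring the details to an external level-set analysis in the cited references. You instead give a direct argument: existence and the gradient characterization come from Lemma~\ref{lem_optimal_solution}, and uniqueness follows either from the standard KKT-plus-strict-convexity reasoning or, more elementarily, from strict monotonicity of each scalar map $\partial f_i$ combined with the sum constraint --- if $\psi>\varphi^*$ then $y_i>x_i^*$ for every $i$ and the constraint $\sum_i y_i = b$ is violated, so $\psi=\varphi^*$ and injectivity of $\partial f_i$ forces $\mb{y}=\mb{x}^*$. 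That monotonicity argument is exactly the right elementary substitute for the level-set analysis the paper outsources, and it makes the lemma verifiable without consulting the references. Your handling of the $\mb{z}$-version via the change of variables $x_i = a_i z_i$ (with the caveat $a_i\neq 0$, which the paper implicitly assumes) is also sound, and your closing remark correctly disambiguates the ``initializing from any feasible set'' phrasing: the point is uniqueness within the affine set determined by $b$, not dependence on the initial condition. The only minor gap, shared with the paper, is that existence of the minimizer is taken for granted rather than derived from coercivity; this does not affect the uniqueness claim, which is the substantive content of the lemma.
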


\begin{proof}
The proof follows from the strict convexity of the function $F(\mb{x})$ (and $\widetilde{F}(\mb{x})$) based on Assumption~\ref{ass_strict}. For detailed proof based on level-set analysis see \cite{fast,doostmohammadian20211st}.
\end{proof}


\subsection{Recall on Graph Theory} \label{sec_graph}
\begin{ass} \label{ass_G}
The following assumptions hold\footnote{In this paper, we generally assume that the network is undirected for the delay-tolerant case. Extension to \textit{balanced directed graphs}
	is discussed later in Remark~\ref{rem_wb}.}. 
\begin{enumerate} [(a)]
	\item Every link in  $\mc{G}(t)$ is bidirectional with the same weight on both sides at all times. This implies that the weight matrix $W(t)$ is symmetric and balanced for $t\geq0$.
	\item $\mc{G}(t)$ is \textit{uniformly connected} over time-window $B>0$ (or B-connected), implying that there exists  $B>0$ such that the (edge) union graph $\mc{G}_B(t)=\{\mc{V},\mc{E}_B(t)\}$ includes a spanning  tree for every  $t \geq 0$ where,
	\[ \mc{E}_B(t) = \bigcup_{t}^{t+B}\mc{E}(t).
	\]
\end{enumerate}
\end{ass}

The bidirectional condition in Assumption~\ref{ass_G} holds, for example,
when agents/nodes have similar broadcasting levels and their communication range is the same. Therefore, if $i \in \mc{N}_j(t)$ then $j \in \mc{N}_i(t)$ while the assigned weights are the same $W_{ij}(t)=W_{ji}(t)$ at all time. Later in the paper, this assumption is extended to weight-balanced directed networks with $\sum_{i=1}^n W_{ij}(t)= \sum_{j=1}^n W_{ij}(t)$ for some particular cases.
This B-connectivity condition in Assumption~\ref{ass_G} is considerably weaker than all-time connectivity in many works (e.g., the ADMM solutions \cite{banjac2019decentralized,falsone2020tracking,chang2016proximal,wei_me_cdc,falsone2018distributed,aybat2019distributed,dtac}). In particular, this allows $\mc{G}$ to lose connectivity over some time intervals. In other words, the connectivity requirement needs to be satisfied over longer time intervals in the case of links arbitrarily coming and going over the dynamic network. Such an assumption is known to be the least-connectivity assumption in consensus and distributed optimization literature \cite{nedic2014distributed,ren2005consensus}. An example contradicting this condition is the case when $\mc{G}(t)$ contains two separate sub-graphs $\mc{G}_1(t)$ and $\mc{G}_2(t)$ with no path between them for $t>t_0$, implying that no consensus can be achieved between the two. Recall that the sparse connectivity in Assumption~\ref{ass_G}(b) is strong enough for our algorithm to ensure convergence over the network (as shown later). 

\begin{remark} \label{rem_wb_removal}
The network conditions in Assumption~\ref{ass_G} are less restrictive than the existing literature. In particular, the weight-balanced and symmetric condition (a) relaxes the weight-stochasticity in \cite{falsone2020tracking,rikos2021optimal}, while uniform-connectivity (b) relaxes all-time connectivity in \cite{banjac2019decentralized,cherukuri2015distributed,mikael2021cdc,boyd2006optimal}. 
These further motivate analysis of \textit{dynamic} networks under link removal and packet loss, with no need to re-adjust (update) the link weights for ensuring stochasticity \cite{6426252}.
\end{remark}

The next lemma gives an intuition to relate the dispersion of the entries in $\mb{x}$ with the eigen-spectrum of $L$. This mainly follows from the Courant-Fischer
theorem and, for example, gives an estimate on the \textit{disagreement} value in consensus algorithms \cite{SensNets:Olfati04}. 

\begin{lemma} \label{lem_xLy}
Consider a symmetric Laplacian matrix $L$ (of a graph $\mc{G}$) and vector $\mb{x} \in \mathbb{R}^n$. Define the so-called dispersion state vector as $\overline{\mb{x}} =: \mb{x} - \frac{\mb{1}_n^\top \mb{x}}{n} \mb{1}_n$. 
The following statements hold:
\begin{align} \label{eq_laplace1}
	\mb{x}^\top L \mb{x} &= \overline{\mb{x}}^\top L \overline{\mb{x}},
	\\      \label{eq_laplace}
	\lambda_2 \|\overline{\mb{x}} \|_2^2 \leq \mb{x}^\top &L\mb{x} \leq \lambda_n \|\overline{\mb{x}} \|_2^2,
\end{align}
with $\lambda_n$ and $\lambda_2$ as the largest and smallest non-zero eigenvalue\footnote{Recall that in graph theory, for (undirected) connected graphs with link-weight equal to $1$, $\lambda_2$ is the second smallest eigenvalue of associated Laplacian matrix $L$   and is a real-valued positive number. It is also known as the Fiedler value or Algebraic connectivity  \cite{SensNets:Olfati04,graph_handbook}.}  of $L$, respectively. The results also hold for a weight-balanced (WB) directed graph (digraph) $\mc{G}$ by substituting $L_s = \frac{L+L^\top}{2}$ in \eqref{eq_laplace1}-\eqref{eq_laplace} with $L$ as the Laplacian of the digraph.
\end{lemma}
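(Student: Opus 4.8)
The plan is to prove the three claims in sequence, starting from the definition of the dispersion vector $\overline{\mb{x}} = \mb{x} - \frac{\mb{1}_n^\top \mb{x}}{n}\mb{1}_n$ and exploiting the basic fact that $\mb{1}_n$ is in the kernel of a (symmetric) Laplacian: $L\mb{1}_n = \mb{0}_n$ and $\mb{1}_n^\top L = \mb{0}_n^\top$. For \eqref{eq_laplace1}, I would write $\mb{x} = \overline{\mb{x}} + c\mb{1}_n$ with $c = \frac{\mb{1}_n^\top\mb{x}}{n}$, substitute into $\mb{x}^\top L \mb{x}$, and expand the quadratic form into four terms; the three terms involving $\mb{1}_n$ vanish because $L\mb{1}_n = \mb{0}_n$ and $L^\top\mb{1}_n = \mb{0}_n$ (using symmetry), leaving exactly $\overline{\mb{x}}^\top L \overline{\mb{x}}$. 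This is purely algebraic and should be a couple of lines.

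For \eqref{eq_laplace}, the key observation is that $\overline{\mb{x}}$ is orthogonal to $\mb{1}_n$ by construction, i.e., $\mb{1}_n^\top \overline{\mb{x}} = \mb{1}_n^\top\mb{x} - \frac{\mb{1}_n^\top\mb{x}}{n}\,n = 0$, so $\overline{\mb{x}}$ lies in the orthogonal complement of the eigenvector $\frac{1}{\sqrt n}\mb{1}_n$ associated with the zero eigenvalue of $L$. Since $L$ is symmetric positive semidefinite with eigenvalues $0 = \lambda_1 < \lambda_2 \le \dots \le \lambda_n$ (the strict inequality $\lambda_2 > 0$ holding because $\mc{G}$ is connected, which is the ambient standing assumption), the Courant–Fischer / Rayleigh-quotient characterization restricted to the subspace $\mathrm{span}\{\mb{1}_n\}^\perp$ gives
\begin{align}
	\lambda_2 \|\overline{\mb{x}}\|_2^2 \le \overline{\mb{x}}^\top L \overline{\mb{x}} \le \lambda_n \|\overline{\mb{x}}\|_2^2 .
\end{align}
Combining this with \eqref{eq_laplace1} immediately yields \eqref{eq_laplace}. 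Concretely one can expand $\overline{\mb{x}}$ in an orthonormal eigenbasis $\{\mb{v}_1 = \tfrac{1}{\sqrt n}\mb{1}_n, \mb{v}_2,\dots,\mb{v}_n\}$, write $\overline{\mb{x}} = \sum_{j\ge 2} \alpha_j \mb{v}_j$ (no $\mb{v}_1$ component, by orthogonality), and note $\overline{\mb{x}}^\top L \overline{\mb{x}} = \sum_{j\ge 2}\lambda_j \alpha_j^2$, which is sandwiched between $\lambda_2\sum_{j\ge2}\alpha_j^2 = \lambda_2\|\overline{\mb{x}}\|_2^2$ and $\lambda_n\|\overline{\mb{x}}\|_2^2$.

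For the weight-balanced digraph extension, the point is that $\mb{x}^\top L \mb{x} = \mb{x}^\top \big(\tfrac{L + L^\top}{2}\big)\mb{x} = \mb{x}^\top L_s \mb{x}$ for any real vector, since a scalar equals its own transpose; weight-balancedness ($\sum_i W_{ij} = \sum_j W_{ij}$, i.e., row sums equal column sums) is exactly what guarantees $\mb{1}_n^\top L = \mb{0}_n^\top$ in addition to $L\mb{1}_n = \mb{0}_n$, hence $L_s\mb{1}_n = \mb{0}_n$ and $L_s$ is symmetric PSD with the same kernel structure; then the previous argument applies verbatim with $L$ replaced by $L_s$, and $\lambda_2,\lambda_n$ taken as the relevant eigenvalues of $L_s$. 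I do not anticipate a serious obstacle here — the only mild care needed is to be explicit that connectivity of $\mc{G}$ (resp. strong connectivity together with weight-balance in the digraph case) is what makes $\lambda_2$ strictly positive and the multiplicity of the zero eigenvalue equal to one, so that the restriction to $\mb{1}_n^\perp$ really does bound the quadratic form from below by $\lambda_2$; this is standard spectral graph theory and can be cited rather than reproved.
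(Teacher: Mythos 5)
Your proposal is correct and follows essentially the same route as the paper's proof, which simply notes that $\mb{1}_n$ lies in the null space of $L$, that $\mb{1}_n^\top\overline{\mb{x}}=0$, and that \eqref{eq_laplace} then follows from \eqref{eq_laplace1} and positive semidefiniteness, deferring the Courant--Fischer details to a citation. Your write-up just fills in those details explicitly (and is in fact slightly more careful than the paper in flagging that connectivity is what makes the zero eigenvalue simple, so that orthogonality to $\mb{1}_n$ suffices for the lower bound).
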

\begin{proof}
Following the definition of matrix $L$, vector $\mb{1}_n$ is in the null space of $L$. Using this, the proof of  \eqref{eq_laplace1} is straightforward since $ \mb{1}_n^\top \overline{\mb{x}} = 0$. 
The proof of \eqref{eq_laplace} follows from \eqref{eq_laplace1} and the positive definiteness of $L$. See more details in \cite{SensNets:Olfati04}. 
\end{proof}
\begin{corol} \label{cor_xLy}
The results of Lemma~\ref{lem_xLy}, can be extended for handling two variables $\overline{\mb{x}} =: \mb{x} - \frac{\mb{1}_n^\top \mb{x}}{n} \mb{1}_n$ and $\overline{\mb{y}} =: \mb{y} - \frac{\mb{1}_n^\top \mb{y}}{n} \mb{1}_n$ as follows:
\begin{align} \label{eq_laplace1_xy}
	\mb{x}^\top L \mb{y} &= \overline{\mb{x}}^\top L \overline{\mb{y}},
	\\      \label{eq_laplace_xy}
	\lambda_2 \overline{\mb{x}}^\top \overline{\mb{y}} \leq \mb{x}^\top &L\mb{y} \leq \lambda_n \overline{\mb{x}}^\top \overline{\mb{y}}.
\end{align}
\end{corol}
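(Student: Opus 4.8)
The plan is to imitate, term by term, the proof of Lemma~\ref{lem_xLy}, simply replacing the quadratic form by the bilinear one. First I would establish the identity \eqref{eq_laplace1_xy}. Write $\mb{x} = \overline{\mb{x}} + \tfrac{\mb{1}_n^\top\mb{x}}{n}\mb{1}_n$ and $\mb{y} = \overline{\mb{y}} + \tfrac{\mb{1}_n^\top\mb{y}}{n}\mb{1}_n$. Since $L$ is symmetric and balanced, $\mb{1}_n$ lies in both the right and the left null space of $L$, i.e. $L\mb{1}_n = \mb{0}_n$ and $\mb{1}_n^\top L = \mb{0}_n^\top$. Expanding $\mb{x}^\top L\mb{y}$ with these decompositions, every term carrying a factor $\mb{1}_n$ on the side hit by $L$ vanishes, leaving exactly $\overline{\mb{x}}^\top L\overline{\mb{y}}$; this is the bilinear analogue of the argument already used for \eqref{eq_laplace1}. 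For a weight-balanced digraph the same computation goes through with $L$ replaced throughout by $L_s = \tfrac{L+L^\top}{2}$, since weight-balance gives $L_s\mb{1}_n = \mb{0}_n$ (zero row and column sums), so $\overline{\mb{x}}^\top L_s\overline{\mb{y}} = \mb{x}^\top L_s\mb{y}$.

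For the two-sided estimate \eqref{eq_laplace_xy} I would pass to the orthonormal eigenbasis of the symmetric Laplacian: $L = \sum_{i=1}^{n}\lambda_i\mb{v}_i\mb{v}_i^\top$ with $\lambda_1 = 0$, $\mb{v}_1 = \mb{1}_n/\sqrt{n}$, and $0 < \lambda_2 \le \cdots \le \lambda_n$. Both $\overline{\mb{x}}$ and $\overline{\mb{y}}$ are orthogonal to $\mb{1}_n$, hence $\overline{\mb{x}} = \sum_{i=2}^n\alpha_i\mb{v}_i$ and $\overline{\mb{y}} = \sum_{i=2}^n\beta_i\mb{v}_i$. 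Substituting into the identity from the previous step gives $\mb{x}^\top L\mb{y} = \overline{\mb{x}}^\top L\overline{\mb{y}} = \sum_{i=2}^n\lambda_i\alpha_i\beta_i$, while $\overline{\mb{x}}^\top\overline{\mb{y}} = \sum_{i=2}^n\alpha_i\beta_i$; comparing each $\lambda_i$ against $\lambda_2$ and $\lambda_n$ would then yield \eqref{eq_laplace_xy}, and the digraph case follows by the same argument with $L_s$ (and its eigenvalues) in place of $L$. The Courant--Fischer characterization gives an equivalent coordinate-free route to this step.

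I expect the genuine difficulty to be precisely this last comparison. The chain $\lambda_2\sum_i\alpha_i\beta_i \le \sum_i\lambda_i\alpha_i\beta_i \le \lambda_n\sum_i\alpha_i\beta_i$ is a valid termwise bound only when the cross-coefficients $\alpha_i\beta_i$ all share one sign; for unrelated $\mb{x},\mb{y}$ they may have mixed signs and the estimate can fail, so the corollary must be read together with the setting in which it is invoked. In the Lyapunov analyses of Sections~\ref{sec_conv}--\ref{sec_DT}, $\overline{\mb{y}}$ will arise from $\overline{\mb{x}}$ through a map \emph{aligned} with $L$ in the eigenbasis (e.g. a positive multiple of $\overline{\mb{x}}$, a convex combination, or a componentwise sign-preserving nonlinearity acting on agreement mismatches), and it is under such a sign-consistency condition on $\{\alpha_i\beta_i\}_{i\ge2}$ that \eqref{eq_laplace_xy} holds. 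I would therefore prove and apply the corollary under that hypothesis, noting that \eqref{eq_laplace1_xy} — the identity that does most of the work later — requires no such restriction at all.
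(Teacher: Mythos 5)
Your proof of the identity \eqref{eq_laplace1_xy} is correct and is exactly the argument the paper uses for \eqref{eq_laplace1} (the paper states the corollary without any proof of its own, so the quadratic-form argument with $L\mb{1}_n=\mb{0}_n$ and $\mb{1}_n^\top L=\mb{0}_n^\top$ is the intended one); your extension to $L_s$ for weight-balanced digraphs is likewise fine. More importantly, your caveat about \eqref{eq_laplace_xy} is not a technicality but a genuine defect in the statement as written: for arbitrary $\mb{x},\mb{y}$ the two-sided bound is false. Concretely, with $\lambda_2<\lambda_n$ take $\overline{\mb{x}}=\mb{v}_2+\mb{v}_n$ and $\overline{\mb{y}}=\mb{v}_2-\mb{v}_n$ in your eigenbasis notation; then $\overline{\mb{x}}^\top\overline{\mb{y}}=0$ while $\mb{x}^\top L\mb{y}=\lambda_2-\lambda_n<0$, violating the claimed lower bound. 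So the corollary only holds under a sign-consistency condition on the cross-coefficients $\alpha_i\beta_i$, exactly as you say, and your proposal to prove and invoke it under that hypothesis is the right repair. It is worth noting that where the paper actually needs sector-type bounds on terms like $\nabla F^\top L\, g(\nabla F)$, the rigorous route is Lemma~\ref{lem_sum} together with Assumption~\ref{ass_gW}, which expresses the bilinear form as $\sum_{i,j}\tfrac{W_{ij}}{2}(\psi_i-\psi_j)\bigl(g(\psi_i)-g(\psi_j)\bigr)$ and bounds it termwise between $\varepsilon$ and $K_g$ times the quadratic form; that argument sidesteps the eigenbasis alignment issue entirely and is the one you should lean on when the corollary is applied in Sections~\ref{sec_conv}--\ref{sec_DT}.
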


Note that, for a (edge) union graph $\mc{G}_B$, $\lambda_n \leq \lambda_{nB}$ and $\lambda_2 \leq \lambda_{2B}$.
Link addition (may) increase the algebraic connectivity of the network \cite{wang2008algebraic,SensNets:Olfati04}. Therefore, given $\mc{G}= \mc{G}_1 \cup \mc{G}_2$, we have $\lambda_2(\mc{G}) \geq \lambda_2(\mc{G}_1),\lambda_2(\mc{G}) \geq \lambda_2(\mc{G}_2)$. This can be extended, following the Assumption~\ref{ass_G}, to show $\lambda_2(\mc{G}_B(t)) \geq \lambda_2(\mc{G}(t))$. One can relate this to the fact that the algebraic connectivity (for $0$-$1$ adjacency matrices) satisfies $\lambda_2(\mc{G}) \geq \frac{1}{nd_g}$ (with $d_g$ as the network diameter) \cite[p. 571]{graph_handbook}. 

\subsection{Time-Delay Model for the DT agents} \label{sec_delay_model}
In general, multi-agent systems rely on communication networks to exchange information and coordinate actions. These communication networks may introduce delays due to packet processing, transmission time, and network congestion. These delays are typically quantized in integer multiples of the communication time step. Additionally, each node might have its processing time before sending a message or updating its state. These transmission delays between agents can introduce time delays.

Here, we define the time-delay model with $k$ denoting the discrete time index. Following the notation in \cite{Themis_delay,Themis_delay_TCNS,Themis2014delay_cdc}, by $\tau_{ij}(k)$ we denote the delay on the transmission link from agent $i$ to agent $j$ at DT step $k=\lfloor \frac{t}{T} \rfloor +1$ with $T$ as the sampling step and $ \lfloor \cdot \rfloor$ as the floor function. If a message sent from agent $j$ at time $t_j$ reaches agent $i$ before time $t_i>t_j$, then $\tau_{ij} =\lceil \frac{t_i}{T}  \rceil - \lfloor \frac{t_j}{T} \rfloor - 1$ 
(with $\lceil \cdot \rceil$ as the ceil function). Therefore, for $t_i - t_j < T$ there is no delay over the link $(j,i)$. However, typically delays are defined in discrete-time and the sent message at time-step $k$ and received before time-step $k + \tau_{ij}+1$ implies delay equal to $\tau_{ij}$.
\begin{ass} \label{ass_delay}
The followings hold for (integer) time-delay  $\tau_{ij}(k)$ on link $(j,i) \in \mc{E}(k)$:
\begin{enumerate}[(i)]
	\item $\tau_{ij}(k) \leq \overline{\tau}$, where $1 \leq \overline{\tau} < \infty$ is an integer representing maximum possible delays on the transmission links ($\overline{\tau} = 0$ means no delay). Upper-bound on  $\overline{\tau}$ guarantees no lost information and implies that the data from agent $i$ at time $k$ eventually is available to agent $j$ at step $k+\overline{\tau}+1$ (in finite number of time-steps).
	\item $\tau_{ij}(k)$ may change at different time-steps $k$ and is heterogeneous for different links, i.e., it may differ for agents and on different links. The time-delays are upper-bounded by some $\overline{\tau}$.
	\item The transmitted packets over every link are time-stamped and every agent $i$ knows the time $t_j$ at which agent $j$ sent the information over the link $(j,i)$, i.e., the delay $\tau_{ij}(k)$ is known.
	\item For a shared mutual link between $i$ and $j$, we consider the same delay for both sides, i.e., $\tau_{ij}(k)=\tau_{ji}(k)$.
	\item At every time $k$, at least one packet is delivered over the network (possibly with some delay), i.e., at any time step $k$ we have $\tau_{ij}(k) \neq 0$  for at least one pair $(i,j)$ (and subsequently $\tau_{ji}(k) \neq 0$ due to (iv)). 
\end{enumerate}	 		
\end{ass}

\begin{remark} \label{rem_delay}
Assumption~\ref{ass_delay} is less restrictive than many existing literature in distributed sensor networks. To clarify, part (i) only implies no packet loss and data dropout over the network. Part (ii) generalizes the existing literature with fixed delays at all links \cite{SensNets:Olfati04} by considering general heterogeneous delays as in \cite{6120272, Themis_delay}. In other words, the delays may differ at different links.
Part (iii) is a typical assumption in consensus literature \cite{SensNets:Olfati04} and data transmission networks for example for clock synchronization \cite{maroti2004flooding}. Part (iv) implies that both agents $i,j$ process their shared information \textit{simultaneously}\footnote{The assumption that  $\tau_{ij},\tau_{ji}$ are known to both agents $i,j$ is well-justified in information-theoretic perspective. This follows from Assumption~\ref{ass_delay}\textit{(iii)} and the assumption that when the packet data leaves the buffer it reaches the receiver with a fixed delay \cite{995554}. One may also consider the upper-bound on these delays as a more conservative approach.}.  It can be relaxed for asymmetric delays $\tau_{ij} \neq \tau_{ji}$, by considering $\max \{\tau_{ij},\tau_{ji}\}$ at both sides. This is to fulfil the feasibility condition (as discussed later in Section~\ref{sec_delay_DT}). Part (v) is only to ensure that at every time-step $k$ (at least) two nodes update their states.  
\end{remark}

Every agent needs to record its previous information at the last $\overline{\tau}$ time-steps to match them with the received information from agent $j \in \mc{N}_i$ at the next-coming time-steps.
Further, define $\mc{I}_{k-r,ij}$ as the \textit{indicator function} capturing the delay $\tau_{ij}(k) \leq \overline{\tau}$ on the link $(j,i)$ as follows,
\begin{align} \label{eq_indicator}
\mc{I}_{k,ij}(r) = \left\{ \begin{array}{ll}
	1, & \text{if}~  \tau_{ij}(k) = r,\\
	0, & \text{otherwise}.
\end{array}\right.
\end{align}
Define the \textit{temp graph} $\mc{G}^{\tau}(t) = \{\mc{V},\mc{E}^{\tau}(t)\}$ 
as the temporary graph representing the neighbourhood of agents at time $t$ (and time-step $k$) based on the delays $\tau_{ij}$. At time 
$kT-T<t \leq kT$ 
if agent $i$ receives a (possibly delayed) packet from $j$, then $(j,i) \in \mc{E}^{\tau}(t)$ for $kT \leq t < kT+T$, otherwise $(j,i) \notin \mc{E}^{\tau}(t)$. 

\begin{remark} \label{rem_delay_switch}
For switching networks with $B>0$ in Assumption~\ref{ass_G}(b), 
$(\overline{\tau}+1)T$ needs to be less than switching period of $\mc{G}(t)$. This implies that in case of losing a link $(i,j)$ due to a change in $\mc{G}(t)$, (at least) one delayed packet from agent $i$ reaches neighbouring agent $j$ before the link $(i,j)$ disappears. Therefore, for any $(i,j) \in \mc{E}(t)$ we have $(i,j) \in \mc{E}^{\tau}(t)$, and thus, $\bigcup_{t=k_1 T }^{k_1 T+ B}\mc{G}^{\tau}(t) = \bigcup_{t=k_1 T }^{k_1 T+ B}\mc{G}(t)$. In case of larger $\overline{\tau}$, (if possible)  uniform connectivity of $\mc{G}_{B^{\tau}}^{\tau}(t)$ over a time-window $B^{\tau} > B$ might be considered in Assumption~\ref{ass_G}.   
\end{remark}

\section{The Proposed Continuous-Time Solution  } \label{sec_dynamic}
\subsection{Proposed Continuous-Time Dynamics}
In this section, we provide two CT $1$st-order dynamics to solve problem \eqref{eq_dra}; the solution for problem \eqref{eq_dra0} similarly follows. Following the auxiliary results in the previous section, it is clear that any $\mb{x}^*$ for which $\partial f_i^*=\partial f_j^*$ for all $i,j$ must be invariant under the proposed dynamics. 
To account for nonlinearities, in general, two models can be considered: (i) the \textit{link-based} nonlinearities associated with every link/edge, and (ii) the \textit{node-based} nonlinearities associated with every agent/node.
The proposed continuous-time dynamics are based on \textit{local} information available at each agent $i$ and received from its neighbours $\mc{N}_i$. 

\textbf{Node-based Nonlinear Solution:}
\begin{align}
	\dot{\mb{x}}_i &= -\sum_{j \in \mc{N}_i} W_{ij} g\Big(\partial f_i -  \partial f_j\Big),
	\label{eq_sol}
\end{align}

\textbf{Link-based Nonlinear Solution:}
\begin{align}
	\dot{\mb{x}}_i &= - \sum_{j \in \mc{N}_i} W_{ij} \Big(g(\partial f_i) -  g(\partial f_j)\Big),
	\label{eq_sol2}
\end{align}
where  $W_{ij}$ and $\mc{N}_i$ are time-dependent in general (Assumption~\ref{ass_G}).
In the rest of this paper, we discuss the properties of \eqref{eq_sol}-\eqref{eq_sol2} as the solutions of problem \eqref{eq_dra}. 
For the sake of notation simplicity, we drop the dependence of~$W_{ij}$, $\mc{N}_i$, and $\partial f_i$ on~$t$ unless where needed.
For $g(x)=x$, \eqref{eq_sol} and \eqref{eq_sol2} represent the classic linear solution given in \cite{boyd2006optimal,cherukuri2015distributed}. However, unlike \cite{boyd2006optimal} (and many other papers in the literature) $W$ is not necessarily bi-stochastic but is only symmetric with positive entries. 

As compared to many existing linear dynamics proposed in the literature \cite{boyd2006optimal,gharesifard2013distributed,shames2011accelerated,doan2017scl,doan2017ccta,nedic2018improved,yi2016initialization,yang2013consensus,wang2018distributed,cherukuri2015distributed,lakshmanan2008decentralized} or ADMM-based solutions proposed in the  literature \cite{falsone2020tracking,chang2016proximal,wei_me_cdc,dtac,falsone2018distributed,aybat2019distributed}, this work addresses nonlinearity $g(\cdot)$ at the nodes or the links. This nonlinearity might be due to inherent property of the system, e.g., quantized information exchange among the nodes (see Section~\ref{sec_sim}) or the RRLs in automatic-generation-control setup (see Section~\ref{sec_qra}). In this aspect, the existing simplified linear solutions do not work properly and may result in an optimality gap. On the other hand, our nonlinear model allows for improving the convergence rate and finite/fixed-time convergence by adding sign-based functions as nonlinearity $g(\cdot)$. This cannot be addressed by the existing linear solutions \cite{boyd2006optimal,gharesifard2013distributed,shames2011accelerated,doan2017scl,doan2017ccta,nedic2018improved,yi2016initialization,yang2013consensus,wang2018distributed,cherukuri2015distributed,lakshmanan2008decentralized} or dual-based solutions \cite{falsone2020tracking,chang2016proximal,wei_me_cdc,dtac,falsone2018distributed,aybat2019distributed}.

We make the following assumption on the nonlinear mapping $g(\cdot)$.

\begin{ass} \label{ass_gW}	
	Function $g: \mathbb{R} \mapsto  \mathbb{R}$ is a  nonlinear odd mapping with $\frac{dg}{dx}\neq 0$ at $x = 0$ and $x g(x)> 0$ for $x \neq 0$, i.e., $g(x)$ is strongly sign-preserving. Further, there exists $K_g, \varepsilon > 0$ such that $K_g |x| \geq |g(x)| \geq \varepsilon|x|$, referred to as the sector-bound conditions\footnote{As we discuss later, the condition $\varepsilon|x| \leq  |g(x)|$ is needed for exact convergence (sign-preserving versus ``strongly'' sign-preserving) and $ |g(x)| \leq K_g |x|$ is only needed for the DT case. }. 
\end{ass}
Many existing nonlinearities satisfy the above assumption. As an example, a monotonically increasing Lipschitz odd function $g(\cdot)$ satisfies the conditions in Assumption \ref{ass_gW}. Some other examples are discussed in the rest of this section.

\begin{remark} \label{rem_combine-sol}
	Even though Eq.~\eqref{eq_sol} and \eqref{eq_sol2} represent separate nonlinearities at the nodes and the links, the results of this work hold for their combination (with both node and link nonlinearities satisfying Assumption~\ref{ass_gW}). An example is given in Section~\ref{sec_sim}.
\end{remark}

\subsection{Examples of Practical Nonlinearities in  Applications} \label{sec_app}

In this section, we provide some practical nonlinear functions $g(\cdot)$.  First, define $y := \partial f_i -  \partial f_j$ and the odd function $\mbox{sgn}^\nu(y)$ as
\begin{align} \label{eq_sign_v}
	\mbox{sgn}^\nu(y) := y|y|^{\nu-1},
\end{align}
where $|\cdot|$ denotes the absolute value and $\nu \geq 0$. For $\nu = 0$, \eqref{eq_sign_v} gives the well-known signum function denoted by $\mbox{sgn}(y)$ for simplicity. The following applications distinguish this work from many existing literature.

\textbf{Application I: Quantization}
\begin{align} 
	g_{l}(y) &:= \mbox{sgn}(y) \exp(g_{u}(\log(|y|))), \label{eq_quan_log}
\end{align}
with $g_{u}(y) := \delta \left[ \frac{y}{\delta}\right]$ and $\delta>0$ as the quantization level.
In order to allow quantized information
processing at agents and transmission links \cite{wei2018nonlinear}, one can substitute logarithmic  quantizer $g_{l}(\cdot)$
into \eqref{eq_sol} and \eqref{eq_sol2}. Note that uniform quantizer $g_{u}(\cdot)$ is not ``strongly'' sign-preserving but it is sign-preserving\footnote{Such uniformly quantized dynamics may result in (steady-state) residual and bias from the equilibrium, where the bias (residual) scales with the quantization level $\delta$ and gets arbitrarily close to zero for sufficiently small $\delta$. See \cite{quan_cdc2006,rikos2020distributed,rikos_quant,ccta_quan} for some discussions on quantized discrete-time consensus.}. 

\textbf{Application II: Saturation}
\begin{align} \label{eq_sat}
	g_\kappa(y) := 
	\begin{cases}
		\kappa\mbox{sgn}(y) & |y| > \kappa,\\
		y & |y| \leq \kappa,
	\end{cases}
\end{align}
with $\kappa>0$ as the saturation (clipping) level.
To account for the limited range of sensors/actuators (saturation) and restriction on transfer of analog/digital signals (signal clipping) \cite{liu2020global,binazadeh2016design}, the nonlinear function $g_\kappa(\cdot)$ can be substituted in  \eqref{eq_sol} and \eqref{eq_sol2}. 
This might be due to physical restrictions to follow the limited rate of increase or decrease in the actuation input $\dot{\mb{x}}_i$. An example in power grid applications is given in Section~\ref{sec_qra}.


\textbf{Application III: Finite/Fixed-time Convergence}
\begin{align} \label{eq_fixed}
	g_f(y) := \mbox{sgn}^{\nu_1}(y) + \mbox{sgn}^{\nu_2}(y),
\end{align}
where $0<\nu_1<1$,  $1<\nu_2$ (fixed-time) or $0<\nu_2<1$ (finite-time). Using this type of nonlinear optimization model some existing work on consensus and optimization \cite{garg2019fixed2,taes2020finite,parsegov2013fixed,fast} show that convergence can be achieved in finite/fixed-time. 
One may even adopt time-varying $\nu_1,\nu_2$ as in consensus algorithms \cite{8272410} to reach convergence over a prescribed time irrespective of the initialization and system parameters.  

\textbf{Application IV: Robustification}
\begin{align} \label{eq_robust_uni}
	g_p(y) := 
	\begin{cases}
		\frac{1 -\epsilon}{\epsilon d}\mbox{sgn}(y) & |y| > d\\
		0 & |y| \leq d
	\end{cases} 
\end{align}
with $0<\epsilon<1$, $d>0$. In order to design protocols robust to \textit{high-intensity outliers}, for example
in case of communication channels corrupted with impulsive noise \cite{wei2017consensus,stankovic2019robust}, one can replace the sign-preserving function $g_p(\cdot)$ in \eqref{eq_sol2}. A similar model can be considered to suppress impulsive actuation nonlinearities in \eqref{eq_sol}. However, since Eq.~\eqref{eq_robust_uni} is only sign-preserving (not ``strongly''), it may result in steady-state bias from the optimizer, i.e., convergence ensures $\mb{x}$ to reach a neighbourhood of $\mb{x}^*$.

Note that the sign-based nonlinear functions in Applications III-IV are mostly used in CT.  The applications of our proposed solutions are not limited to these nonlinear models, but any $g(\cdot)$ satisfying Assumption~\ref{ass_gW} might be adopted. For example, the composition of mentioned nonlinear mappings \eqref{eq_quan_log}-\eqref{eq_robust_uni} are also a valid choice for $g(\cdot)$ in \eqref{eq_sol} and \eqref{eq_sol2}, or many other sector-bound nonlinearities which satisfy Assumption~\ref{ass_gW}. 

\section{Convergence Analysis in Continuous-Time} \label{sec_conv}
In this section, we analyze the convergence of the CT dynamics~\eqref{eq_sol} and \eqref{eq_sol2} to the optimal value of constrained optimization \eqref{eq_dra}. We first check the feasibility and uniqueness of the solutions under the given dynamics and then prove the convergence.

\begin{lemma} [Feasibility]\label{lem_feasible_intime}
	Suppose that Assumptions~\ref{ass_G} and \ref{ass_gW} hold. Initializing by any $\mb{x}_0 \in \mc{S}_b$, the state of agents remain feasible under the CT dynamics~\eqref{eq_sol} and \eqref{eq_sol2}, i.e., $\mb{x}(t) \in \mc{S}_b$ for all $t>0$.
\end{lemma}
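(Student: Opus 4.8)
The plan is to show that the running sum $s(t) := \mb{1}_n^\top\mb{x}(t) = \sum_{i=1}^n x_i(t)$ is conserved along both flows, so that $s(0)=b$ forces $s(t)=b$ for all $t\ge 0$, i.e. $\mb{x}(t)\in\mc{S}_b$. Since the right-hand sides of \eqref{eq_sol} and \eqref{eq_sol2} are (locally) Lipschitz in $\mb{x}$ — by smoothness of the $f_i$ (Assumption~\ref{ass_strict}) together with the sector/Lipschitz bounds on $g$ (Assumption~\ref{ass_gW}) — the trajectory $\mb{x}(\cdot)$ is well-defined and absolutely continuous, so $\dot s(t)=\sum_i\dot x_i(t)$ exists for (a.e.) $t$ and it suffices to prove $\dot s(t)=0$.

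For the node-based dynamics \eqref{eq_sol}, I would write
\[ \dot s \;=\; -\sum_{i=1}^n\sum_{j\in\mc{N}_i} W_{ij}\, g\big(\partial f_i - \partial f_j\big), \]
and regroup the double sum over unordered pairs $\{i,j\}$ with $(i,j)\in\mc{E}$; by Assumption~\ref{ass_G}(a) the edge $(i,j)$ is present iff $(j,i)$ is, with $W_{ij}=W_{ji}$. The contribution of such a pair is $W_{ij}g(\partial f_i-\partial f_j)+W_{ji}g(\partial f_j-\partial f_i)$, and since $g$ is odd (Assumption~\ref{ass_gW}) we have $g(\partial f_j-\partial f_i)=-g(\partial f_i-\partial f_j)$, so the pair contributes $0$ and $\dot s=0$. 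For the link-based dynamics \eqref{eq_sol2}, the same pairing gives $W_{ij}\big(g(\partial f_i)-g(\partial f_j)\big)+W_{ji}\big(g(\partial f_j)-g(\partial f_i)\big)=0$ directly from $W_{ij}=W_{ji}$; equivalently \eqref{eq_sol2} is $\dot{\mb{x}}=-L\,g(\nabla F(\mb{x}))$ with $L$ the symmetric Laplacian \eqref{eq_laplacian}, so $\dot s=-\mb{1}_n^\top L\,g(\nabla F(\mb{x}))=0$ because $\mb{1}_n^\top L=\mb{0}_n^\top$. In both cases $s(t)\equiv s(0)=b$, establishing $\mb{x}(t)\in\mc{S}_b$ for all $t\ge 0$.

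A couple of remarks on robustness of the argument: because $W(t)$ and $\mc{N}_i(t)$ are time-varying, the cancellation is checked pointwise in $t$, which is legitimate since Assumption~\ref{ass_G}(a) makes $W(t)$ symmetric at every $t$; and the argument carries over to the weight-balanced digraph case of Lemma~\ref{lem_xLy} for the link-based flow, since there too $\mb{1}_n^\top L(t)=\mb{0}_n^\top$ (column sums of a weight-balanced Laplacian vanish), although the node-based cancellation genuinely relies on the bidirectional equal-weight structure. I do not expect a real obstacle here — the only subtlety is differentiability when $g$ is nonsmooth (e.g. $\mbox{sgn}$ in Applications III–IV), in which case \eqref{eq_sol}–\eqref{eq_sol2} should be read as generalized (Filippov) solutions, and feasibility still holds because the set-valued sign map is itself odd, so the same pairwise cancellation applies to every element of the differential inclusion.
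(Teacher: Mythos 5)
Your proof is correct and follows essentially the same route as the paper: the conserved quantity $\mb{1}_n^\top\mb{x}(t)$ is shown to be time-invariant by pairing $(i,j)$ with $(j,i)$ and cancelling via $W_{ij}=W_{ji}$ together with the oddness of $g$ (for \eqref{eq_sol}) or the antisymmetry of $g(\partial f_i)-g(\partial f_j)$ (for \eqref{eq_sol2}), exactly as in the paper's argument for \eqref{eq_sol2} and in the reference it cites for \eqref{eq_sol}. Your added remarks on well-posedness, Filippov solutions for nonsmooth $g$, and the weight-balanced digraph case go beyond what the paper writes but do not change the core argument.
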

\begin{proof}
	The proof of feasibility for CT dynamics \eqref{eq_sol} is given in \cite{doostmohammadian20211st}. For \eqref{eq_sol2} similarly we have,
	\begin{align}
		\frac{d}{dt}(\mb{x}^\top\mb{1}_n) = -\sum_{i=1}^n\sum_{j \in \mc{N}_i} W_{ij} \Big(g(\partial f_i) -  g(\partial f_j)\Big). \label{eq_proof_feas}
	\end{align}
	From Assumptions~\ref{ass_G} and~\ref{ass_gW} we have,
	\begin{align} \nonumber
		W_{ij} \Big(g(\partial f_i) -  g(\partial f_j)\Big) = -W_{ji} \Big(g(\partial f_i) -  g(\partial f_j)\Big).
	\end{align}
	which implies that  $\frac{d}{dt}(\mb{x}^\top\mb{1}_n)=0$  in \eqref{eq_proof_feas} and $\mb{x}^\top\mb{1}_n$ remains time-invariant. Therefore, any feasible initialization  $\mb{x}_0^\top\mb{1}_n=b$ gives feasibility over time $\mb{x}(t)^\top\mb{1}_n=b$ and $\mb{x}(t) \in \mc{S}_{b}$ for all $t>0$. 
\end{proof}

In contrast to primal-dual solutions such as ADMM-based methods \cite{falsone2020tracking,chang2016proximal,wei_me_cdc,dtac,falsone2018distributed,aybat2019distributed}, Lemma~\ref{lem_feasible_intime} proves all-time feasibility of our proposed solution. This means that at any termination time of the algorithm the solution preserves feasibility while in the existing ADMM solutions \cite{banjac2019decentralized,falsone2020tracking,chang2016proximal,wei_me_cdc,dtac,falsone2018distributed,aybat2019distributed} there might be feasibility gap that converges to zero asymptotically. This implies that the solution by \cite{falsone2020tracking,chang2016proximal,wei_me_cdc,dtac,falsone2018distributed,aybat2019distributed} must be fast enough to gain feasibility before the termination of the algorithm.

Note that for any $\mb{x}$ satisfying $\nabla F(\mb{x}) = \varphi \mb{1}_n$, we have $\dot{x}_i = 0$ at every agent $i$. Therefore, such $\mb{x}$ is an equilibrium (invariant-state) of the solution dynamics~\eqref{eq_sol} and~\eqref{eq_sol2}. In the next theorem, we show that such  $\mb{x}^*$ is unique for both proposed dynamics. Note that we assume $\mb{x}^*$ satisfies the local box constraints (if there are any).

\begin{theorem} [Uniqueness in CT] \label{thm_tree}
	Suppose that Assumptions~\ref{ass_strict}, \ref{ass_G} and \ref{ass_gW} hold. Let $\mb{x}^*$ denote the equilibrium under CT dynamics \eqref{eq_sol} and \eqref{eq_sol2}. Then, $\nabla F(\mb{x}^*) = \varphi^* \mb{1}_n$ with $\varphi^* \in \mathbb{R}$.
\end{theorem}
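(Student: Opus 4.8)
The plan is to read the equilibria off directly from $\dot{\mb{x}}^\ast=\mb{0}$. Fix the point $\mb{x}^\ast$ and abbreviate $y_i:=\partial f_i(x_i^\ast)$. Three facts will be used repeatedly: $g$ is odd, so $g(0)=0$; $g$ is strongly sign-preserving, so $g(u)=0\iff u=0$ and $\operatorname{sgn} g(u)=\operatorname{sgn} u$; and the identity $\dot x_i=0$ holds at \emph{every} instant $t$, hence against the \emph{instantaneous} graph $\mc{G}(t)$ with strictly positive edge weights, while $\mc{G}_B$ contains a spanning tree by Assumption~\ref{ass_G}(b). Assumption~\ref{ass_strict} enters only at the very end, to identify $\mb{x}^\ast$ with the unique minimizer via Lemma~\ref{lem_unique_feasible}.

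For the node-based dynamics \eqref{eq_sol} I would run a maximum-node argument. Let $M:=\max_i y_i$ and $\mc{S}:=\{i:y_i=M\}$, and pick $i^\star\in\mc{S}$. For every $t$ and every $j\in\mc{N}_{i^\star}(t)$ we have $y_{i^\star}-y_j=M-y_j\ge 0$, so $g(M-y_j)\ge 0$; since $0=\dot x_{i^\star}=-\sum_{j\in\mc{N}_{i^\star}(t)}W_{i^\star j}(t)\,g(M-y_j)$ is a sum of nonnegative terms with strictly positive coefficients, every term vanishes, whence $g(M-y_j)=0$ and, by strong sign-preservation, $y_j=M$, i.e. $j\in\mc{S}$. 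Letting $t$ sweep a window $[t_0,t_0+B]$ shows every $\mc{G}_B$-neighbour of $i^\star$ is in $\mc{S}$; propagating this along a spanning tree of $\mc{G}_B$ gives $\mc{S}=\mc{V}$, i.e. $\nabla F(\mb{x}^\ast)=M\,\mb{1}_n$, so $\varphi^\ast=M$.

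For the link-based dynamics \eqref{eq_sol2}, write $\mb{v}:=[\,g(y_1);\dots;g(y_n)\,]$; the equilibrium identity is $L(t)\mb{v}=\mb{0}$ for all $t$. Integrating the (identically zero) right-hand side of \eqref{eq_sol2} over $[t_0,t_0+B]$ yields $\big(\int_{t_0}^{t_0+B}L(s)\,ds\big)\mb{v}=\mb{0}$, and $\int_{t_0}^{t_0+B}L(s)\,ds$ is the Laplacian of a graph that contains the spanning tree of $\mc{G}_B$ with positive weights, hence connected; therefore (this is the content of Lemma~\ref{lem_xLy}: a connected Laplacian has kernel $\operatorname{span}\{\mb{1}_n\}$) $g(y_i)$ is independent of $i$, say $g(y_i)\equiv c$. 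If $c=0$, strong sign-preservation gives $y_i=0$ for all $i$ and $\varphi^\ast=0$; if $c\ne 0$, one invokes injectivity (equivalently strict monotonicity) of the exact-convergence nonlinearity $g$ — automatic for $g_f$ with $\nu_1,\nu_2>0$ — to pass from $g(y_i)=g(y_j)$ to $y_i=y_j$, so $\nabla F(\mb{x}^\ast)=\varphi^\ast\mb{1}_n$ with $\varphi^\ast=y_1$. The combined node/link dynamics of Remark~\ref{rem_combine-sol} is handled by adding the two equilibrium terms: at the maximum node both contributions are nonnegative (the link part because $g$ is monotone there), all of them vanish, and the node part alone already forces $M-y_j=0$.

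The step I expect to be the main obstacle is the propagation across the \emph{time-varying, possibly momentarily disconnected} network: the equilibrium condition is instantaneous, so equalities among the $y_i$ can be pushed only along edges active at \emph{some} time within a window, and it is exactly Assumption~\ref{ass_G}(b) — a spanning tree inside $\mc{G}_B$ — that makes the induction terminate; getting the bookkeeping right over the switching window (and, for the continuous-time integral of $L(s)$, noting that edges active on a positive-measure set already suffice) is the delicate part. A secondary subtlety, specific to \eqref{eq_sol2}, is the passage from ``$g$ of the gradient vector is constant'' to ``the gradient vector is constant'', which needs injectivity/strict monotonicity of $g$; this is precisely where the exact-convergence regime (lower sector bound active) parts ways from the merely sign-preserving, bias-prone nonlinearities such as the uniform quantizer, saturation, or $g_p$, for which only convergence to a neighbourhood of $\mb{x}^\ast$ should be claimed.
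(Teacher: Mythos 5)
Your argument is correct, and for the node-based dynamics \eqref{eq_sol} it is essentially the paper's argument in different packaging: the paper argues by contradiction, locating on a path from the $\argmax$ to the $\argmin$ of the gradients a node $\overline{\alpha}$ that dominates all its neighbours with strict inequality for at least one, and concluding $\dot{x}^*_{\overline{\alpha}}<0$ over a $B$-window; your version runs the same sign argument at the global maximizer directly and propagates the equality set along the spanning tree of $\mc{G}_B$, which is arguably cleaner since the set of maximizers is manifestly closed under $\mc{G}_B$-adjacency. Where you genuinely diverge is the link-based dynamics \eqref{eq_sol2}: the paper reuses the same extremal-node argument for both dynamics (claiming $\dot{x}^*_{\overline{\alpha}}<0$ also when the nonlinearity sits on the links), whereas you recast the equilibrium condition as $L(t)\,\mb{v}=\mb{0}$ with $\mb{v}=[g(y_1);\dots;g(y_n)]$, integrate over the window, and invoke the kernel of the union Laplacian to get $g(y_i)\equiv c$, then injectivity of $g$ to recover $y_i\equiv\varphi^*$. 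Your approach makes explicit a hypothesis the paper leaves tacit: for the link-based case, strong sign-preservation of $g$ alone does not give $\Lambda^*_{\overline{\alpha}}>\Lambda^*_j\Rightarrow g(\Lambda^*_{\overline{\alpha}})>g(\Lambda^*_j)$ (nor $g(y_i)=g(y_j)\Rightarrow y_i=y_j$); one needs monotonicity/injectivity of $g$, which Assumption~\ref{ass_gW} does not literally guarantee (an odd, sector-bounded, sign-preserving map can be non-monotone). The paper's own convergence proof (Theorem~\ref{thm_converg}) also silently uses $(\partial f_i-\partial f_j)(g(\partial f_i)-g(\partial f_j))>0$, i.e.\ strict monotonicity, so flagging this is a genuine contribution rather than a defect of your proof. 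Your remaining caveats (edges must be active on a set of positive measure for the integrated Laplacian argument; the combined node/link dynamics) are handled correctly.
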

\begin{proof}
	By contradiction assume  $\nabla F(\mb{x}^*) = ({\Lambda}^*_1;\dots;{\Lambda}^*_n)$ where $ \Lambda^* \neq  \varphi^*$ and for (at least) two agents $i,j$,
	\begin{align}
		\Lambda^*_i \neq \Lambda^*_j \Longleftrightarrow  \partial f_i^*\neq \partial f_j^*,
	\end{align}
	Define two agents $\alpha, \beta$ as,
	\begin{align}
		\alpha = \argmax_{q\in \{1,\dots,n\}}  {\Lambda}^*_{q},~
		\beta = \argmin_{q \in \{1,\dots,n\}}  {\Lambda}^*_q. \label{eq_beta0}
	\end{align}	
	From Assumption~\ref{ass_G}, the union graph $\mc{G}_B(t)$ is connected for every $t\geq 0$, and therefore,  there is a path in $\mc{G}_B(t)$ from agent (node) $\alpha$ to every other agent (node) including $\beta$. In this path, one can find (at least) two nodes $\overline{\alpha}$ and $\overline{\beta}$  with the set of neighbours $\mc{N}_{\overline{\alpha}}$ and $\mc{N}_{\overline{\beta}}$, respectively, such that
	\begin{align} 
		\Lambda^*_{\overline{\alpha}} \geq \Lambda^*_{\mc{N}_{\overline{\alpha}}},~ \Lambda^*_{\overline{\beta}} \leq \Lambda^*_{\mc{N}_{\overline{\beta}}},
		\label{eq_beta}
	\end{align}
	where the strict inequalities in the above hold for (at least) one node in $\mc{N}_{\overline{\alpha}}$ and one node in $\mc{N}_{\overline{\beta}}$. Therefore,
	from Assumption~\ref{ass_gW},  $\dot{x}^*_{\overline{\alpha}} < 0$ and   $\dot{x}^*_{\overline{\beta}} > 0$ over any time-window of length B (for both proposed dynamics). This contradicts our equilibrium assumption $\dot{\mb{x}}^* = \mb{0}$.
\end{proof}

In the virtue of Lemmas~\ref{lem_unique_feasible}, \ref{lem_feasible_intime}, and Theorem~\ref{thm_tree}, for a feasible initial state $\mb{x}_0 \in \mc{S}_b$ there exist only one equlibrium  $\mb{x}^*$ satisfying $\nabla F(\mb{x}^*) = \varphi^* \mb{1}_n$ under dynamics \eqref{eq_sol} and \eqref{eq_sol2}. 
In order to prove convergence to $\mb{x}^*$, the following lemma is needed.
\begin{lemma} \label{lem_sum}
	\cite{garg2019fixed2}
	For $g(\cdot) $ and $W$ satisfying Assumptions~\ref{ass_G}, \ref{ass_gW} and  $\psi_i,\psi_j \in \mathbb{R}$,
	
	\begin{align} \nonumber
		\sum_{i =1}^n \psi_i \sum_{j =1}^n W_{ij}g(\psi_j-\psi_i) = \sum_{i,j =1}^n \frac{W_{ij}}{2} (\psi_j-\psi_i)g(\psi_j-\psi_i),
	\end{align}
	Similarly,
	\begin{align} \nonumber
		\sum_{i =1}^n \psi_i \sum_{j =1}^n &W_{ij}(g(\psi_j)-g(\psi_i)) \\
		&= \sum_{i,j =1}^n \frac{W_{ij}}{2} (\psi_j-\psi_i)(g(\psi_j)-g(\psi_i). \nonumber
	\end{align}
\end{lemma}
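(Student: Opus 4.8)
The plan is to derive both identities from a single \emph{symmetrization} argument that uses only two structural facts: the symmetry $W_{ij}=W_{ji}$ from Assumption~\ref{ass_G}(a), and the oddness of $g$ from Assumption~\ref{ass_gW} --- equivalently, the antisymmetry of $g(\psi_j-\psi_i)$ under the index exchange $i\leftrightarrow j$. First I would rewrite the left-hand side of the first identity as a genuine double sum over all $i,j\in\{1,\dots,n\}$ (legitimate since $W_{ij}=0$ whenever $j\notin\mc{N}_i$), call it $S=\sum_{i,j}\psi_i W_{ij}g(\psi_j-\psi_i)$, and keep this as one expression for $S$.

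The core step is to relabel the dummy indices $i\leftrightarrow j$ in $S$, giving $S=\sum_{i,j}\psi_j W_{ji}g(\psi_i-\psi_j)$, and then to substitute $W_{ji}=W_{ij}$ and $g(\psi_i-\psi_j)=-g(\psi_j-\psi_i)$ to obtain a second expression for the \emph{same} number $S$, now weighted by $\psi_j$ rather than $\psi_i$. Averaging the two expressions makes the kernel $W_{ij}g(\psi_j-\psi_i)$ common and collapses the two prefactors into $\tfrac12(\psi_j-\psi_i)$ (up to an overall sign that must be tracked carefully), which is exactly the right-hand side $\sum_{i,j}\tfrac{W_{ij}}{2}(\psi_j-\psi_i)g(\psi_j-\psi_i)$. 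For the second identity I would run the identical argument with the antisymmetric quantity $g(\psi_j)-g(\psi_i)$ in place of $g(\psi_j-\psi_i)$; note that this part needs only the antisymmetry of $g(\psi_j)-g(\psi_i)$ under $i\leftrightarrow j$ and the symmetry of $W$, so oddness of $g$ is not even required there.

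Finally I would record the consequence that is actually used downstream: since $x g(x)>0$ by Assumption~\ref{ass_gW} (and similarly $(\psi_j-\psi_i)(g(\psi_j)-g(\psi_i))\ge 0$ when $g$ is nondecreasing), each right-hand side is a sum of nonnegative terms, which is what drives the Lyapunov decrease in Section~\ref{sec_conv}. There is no genuine analytic obstacle here --- the statement is a bookkeeping identity --- and the only point worth care is the sign: swapping $i$ and $j$ flips the argument of $g$, and it is precisely the interplay of that flip with the oddness of $g$ that produces both the factor $\tfrac12$ and the antisymmetric factor $(\psi_j-\psi_i)$; keeping these signs consistent (and consistent with the sign convention of the dynamics \eqref{eq_sol}--\eqref{eq_sol2}) is the one thing I would double-check.
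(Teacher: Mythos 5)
Your symmetrization argument is the right (and essentially the only) proof of this identity; the paper itself offers no proof and simply cites \cite{garg2019fixed2}, so there is no competing route to compare against. The approach is sound: pass to a full double sum, swap the dummy indices, use $W_{ji}=W_{ij}$ and the oddness of $g$, and average. Your side remarks are also correct --- the second identity needs only the antisymmetry of $g(\psi_j)-g(\psi_i)$ under $i\leftrightarrow j$ (not oddness of $g$), and the right-hand sides are sums of nonnegative terms by Assumption~\ref{ass_gW}.

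However, the one point you explicitly deferred --- the overall sign --- is exactly where your conclusion and the printed statement part ways, and it should be settled rather than parenthesized. With $S=\sum_{i,j}\psi_i W_{ij}\,g(\psi_j-\psi_i)$, the index swap gives $S=-\sum_{i,j}\psi_j W_{ij}\,g(\psi_j-\psi_i)$, and averaging yields
\begin{align*}
S=\sum_{i,j}\frac{W_{ij}}{2}\,(\psi_i-\psi_j)\,g(\psi_j-\psi_i)=-\sum_{i,j}\frac{W_{ij}}{2}\,(\psi_j-\psi_i)\,g(\psi_j-\psi_i),
\end{align*}
i.e.\ the \emph{negative} of the displayed right-hand side, so your claim that the average ``is exactly the right-hand side'' does not survive the sign check. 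A two-node test confirms this: with $g(x)=x$, $W_{12}=W_{21}=1$, $\psi_1=1$, $\psi_2=0$, the left-hand side equals $-1$ while the displayed right-hand side equals $+1$. The identity as printed is therefore a sign typo (it is the Laplacian identity $\boldsymbol{\psi}^\top L\boldsymbol{\psi}=\tfrac12\sum_{i,j}W_{ij}(\psi_i-\psi_j)^2$ in disguise, with $\sum_i\psi_i\sum_j W_{ij}(\psi_j-\psi_i)=-\boldsymbol{\psi}^\top L\boldsymbol{\psi}$), and the minus-signed version is the one the paper actually needs downstream: in Theorem~\ref{thm_converg}, $\dot{\overline{F}}=\sum_i\psi_i\sum_j W_{ij}\bigl(g(\psi_j)-g(\psi_i)\bigr)$ with $\psi_i=\partial f_i$, and it is precisely the minus sign together with $(\psi_j-\psi_i)\bigl(g(\psi_j)-g(\psi_i)\bigr)\ge 0$ that makes the Lyapunov function nonincreasing. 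Your proof is complete once you state the identity with the correct sign.
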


\begin{theorem} [Convergence] \label{thm_converg}
	Suppose that Assumptions~\ref{ass_strict}, \ref{ass_G}, \ref{ass_gW} hold and $\mb{x}_0 \in \mc{S}_b$. The proposed CT dynamics \eqref{eq_sol} and \eqref{eq_sol2} converge to the optimal value of \eqref{eq_dra} denoted by $\mb{x}^*$ for which $\nabla F(\mb{x}^*) \in \mbox{span}\{\mb{1}_n\}$, or simply,
	\[\exists \varphi^* \in \mathbb{R}, ~~  \partial f_i^* =  \partial f_j^* = \varphi^*.\]
\end{theorem}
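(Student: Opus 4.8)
The plan is to take the objective gap $V(\mb{x}) := F(\mb{x})-F(\mb{x}^*)$ as a Lyapunov function and to combine a LaSalle/Barbalat-type decrease argument with the uniform connectivity of Assumption~\ref{ass_G}(b). By Lemma~\ref{lem_feasible_intime} the trajectory stays on the affine set $\mc{S}_b$, on which $F$ is strictly convex with the \emph{unique} minimizer $\mb{x}^*$ (Lemmas~\ref{lem_optimal_solution}--\ref{lem_unique_feasible}), so $V\ge0$ on $\mc{S}_b$ and $V(\mb{x})=0$ iff $\mb{x}=\mb{x}^*$. Restricting $F$ to any line $\mb{x}^*+s\mb{d}$ with $\mb{1}_n^\top\mb{d}=0$ gives a strictly convex scalar function whose derivative vanishes at $s=0$ (because $\nabla F(\mb{x}^*)=\varphi^*\mb{1}_n$), hence it is coercive in $s$; therefore $F$ is coercive on $\mc{S}_b$, the sublevel set $\{\mb{x}\in\mc{S}_b:V(\mb{x})\le V(\mb{x}_0)\}$ is compact, and $\mb{x}(t)$ is bounded. (If $g$ is discontinuous, e.g. $\mbox{sgn}$, solutions and the derivative below are read in the Carath\'eodory/Filippov sense and the generalized Lyapunov derivative is used; this does not affect the argument.)

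Next I would differentiate $V$ along \eqref{eq_sol}. With $\psi_i:=\partial f_i(x_i)$ we get $\dot V=\nabla F(\mb{x})^\top\dot{\mb{x}}=-\sum_{i=1}^n\psi_i\sum_{j\in\mc{N}_i}W_{ij}\,g(\psi_i-\psi_j)$, and since $W(t)$ is symmetric and $g$ is odd, symmetrizing the double sum (the manipulation behind Lemma~\ref{lem_sum}) gives $\dot V=-\tfrac12\sum_{(i,j)\in\mc{E}(t)}W_{ij}(\psi_i-\psi_j)\,g(\psi_i-\psi_j)\le0$, using the strongly sign-preserving property $xg(x)>0$ of Assumption~\ref{ass_gW}; equality forces $\psi_i=\psi_j$ on every link currently in $\mc{E}(t)$. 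The identical computation for \eqref{eq_sol2} yields $\dot V=-\tfrac12\sum_{(i,j)\in\mc{E}(t)}W_{ij}(\psi_i-\psi_j)\big(g(\psi_i)-g(\psi_j)\big)\le0$, now using monotonicity of $g$ so that each summand is nonnegative (which holds for the quantizer/saturation/robustification examples), with the same equality condition.

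To finish, $V$ is nonincreasing and bounded below, so $V(t)\downarrow V_\infty$ and $\int_0^\infty(-\dot V)\,dt<\infty$; with the lower sector bound $|g(x)|\ge\ve|x|$ this gives $\int_0^\infty\sum_{(i,j)\in\mc{E}(t)}W_{ij}\big(\psi_i(t)-\psi_j(t)\big)^2\,dt<\infty$. The key remaining step is to upgrade this ``instantaneous'' disagreement decay to decay of the full dispersion $\overline{\nabla F}(\mb{x}(t)):=\nabla F(\mb{x}(t))-\tfrac{\mb{1}_n^\top\nabla F(\mb{x}(t))}{n}\mb{1}_n$. For this I would work window-by-window: over $[t,t+B]$ the edge-union graph $\mc{G}_B$ is connected (Assumption~\ref{ass_G}(b)), so by Lemma~\ref{lem_xLy} applied to $\mc{G}_B$, $\lambda_2(\mc{G}_B)\,\|\overline{\nabla F}\|_2^2\le\sum_{(i,j)\in\mc{E}_B}W_{ij}(\psi_i-\psi_j)^2$ with $\lambda_2(\mc{G}_B)>0$; combining this with the Lipschitz-in-time bound on $\nabla F(\mb{x}(t))$ — which holds because $\dot{\mb{x}}$ is bounded, by boundedness of $\mb{x}(t)$, local Lipschitzness of $\nabla F$, and the upper sector bound $|g(x)|\le K_g|x|$ — one shows $V(t)-V(t+B)$ is lower bounded by a class-$\mc{K}$ function of $\|\overline{\nabla F}(\mb{x}(t))\|$, forcing $\overline{\nabla F}(\mb{x}(t))\to0$. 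Then every accumulation point $\bar{\mb{x}}$ of the bounded trajectory lies in the closed set $\mc{S}_b$ and satisfies $\nabla F(\bar{\mb{x}})\in\mbox{span}\{\mb{1}_n\}$, so by the uniqueness in Theorem~\ref{thm_tree}/Lemma~\ref{lem_unique_feasible} it equals $\mb{x}^*$; hence $\mb{x}(t)\to\mb{x}^*$, i.e. $\partial f_i^*=\partial f_j^*=\varphi^*$.

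The hard part is exactly that last upgrade: because only \emph{uniform} connectivity is assumed, $\dot V=0$ at a given instant requires consensus of $\nabla F$ only across the currently active links — which may form a disconnected graph — so LaSalle's invariance principle does not directly identify the limit set as $\{\mb{x}^*\}$. One must instead obtain a genuine per-window Lyapunov decrement tying $V(t)-V(t+B)$ to the disagreement, using the algebraic connectivity of $\mc{G}_B$ together with the uniform-continuity/Lipschitz estimates above; this persistent-excitation-flavoured estimate is the technically delicate component. A secondary point worth isolating explicitly is that the link-based dynamics \eqref{eq_sol2} requires $g$ to be monotone along its sector for $\dot V\le0$, which is slightly stronger than being merely strongly sign-preserving.
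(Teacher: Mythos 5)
Your proposal is correct and follows essentially the same route as the paper: the Lyapunov function $\overline{F}(\mb{x})=F(\mb{x})-F(\mb{x}^*)$, the symmetrization identity of Lemma~\ref{lem_sum}, and the sign-preserving property of $g$ to obtain $\dot{\overline{F}}\le 0$, followed by the uniqueness of the equilibrium from Theorem~\ref{thm_tree} — the paper simply delegates these details (including the $B$-connectivity bookkeeping that you correctly flag as the delicate step) to the cited reference \cite{doostmohammadian20211st}. Your closing observation is worth retaining: for the link-based dynamics \eqref{eq_sol2} the paper asserts $\bigl(\partial f_i-\partial f_j\bigr)\bigl(g(\partial f_i)-g(\partial f_j)\bigr)>0$, which does not follow from Assumption~\ref{ass_gW} alone (an odd, sector-bounded, strongly sign-preserving $g$ need not be monotone), so the explicit monotonicity requirement you isolate is a genuine hypothesis missing from the paper's one-line argument.
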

\begin{proof}
	The proof for dynamics \eqref{eq_sol} is given in the proof of \cite[Theorem~2]{doostmohammadian20211st}.  
	Similar proofs hold for dynamics \eqref{eq_sol2} with Lyapunov function $\overline{F}(\mb{x}) := F(\mb{x})-F(\mb{x}^*)$ and recalling from Lemma~\ref{lem_sum} that $\Big(\partial f_i -  \partial f_j\Big) \Big(g(\partial f_i)  -  g(\partial f_j)\Big)  > 0$.
\end{proof}

It is worth mentioning that the existing literature  \cite{boyd2006optimal,gharesifard2013distributed,shames2011accelerated,doan2017scl,doan2017ccta,yi2016initialization,yang2013consensus,wang2018distributed,cherukuri2015distributed,lakshmanan2008decentralized} mostly work over all-time connected networks, while, from Theorem~\ref{thm_tree} and~\ref{thm_converg}, our proposed solution works over uniformly-connected networks that may lose connectivity over some time-intervals. Note that from Theorem~\ref{thm_tree} to prove unique equilibrium of the solution we only need the union network $\mc{G}_B(t)$ to be connected over some interval $B$. This makes our solution applicable over mobile multi-agent systems where the network is dynamic and the connectivity might be lost temporarily. This is in contrast to many existing solutions \cite{boyd2006optimal,gharesifard2013distributed,shames2011accelerated,doan2017scl,doan2017ccta,yi2016initialization,yang2013consensus,wang2018distributed,cherukuri2015distributed,lakshmanan2008decentralized} where the network is static and/or all-time connected.

\section{The Proposed Solution in Discrete-Time} \label{sec_DT}
In this section, first, we provide the discrete-time version of  \eqref{eq_sol}-\eqref{eq_sol2} respectively as, 
\begin{align} 
	x_i(k+1) &= x_i(k) -T\sum_{j \in \mc{N}_i} W_{ij} g\Big(\partial f_i(k) -  \partial f_j(k)\Big), \label{eq_sol_efm}\\
	x_i(k+1) &= x_i(k) \nonumber \\
	&-T\sum_{j \in \mc{N}_i} W_{ij} \Big(g(\partial f_i(k))  -  g(\partial f_j(k)\Big)\Big), 
	\label{eq_sol_efm2}
\end{align}
with $k \geq 1$ and $T$ as the time-step. 


In the rest of the paper, we use a simplifying notation. Define the \textit{anti-symmetric} matrices $\mc{D}(k)=[\mc{D}_{ij}(k)]$ and $\mc{D}^g(k)=[\mc{D}^g_{ij}(k)]$, as the weighted difference of the gradients over all links, i.e.,
\begin{align}
	\mc{D}_{ij}(k) &= \partial f_i(k) -  \partial f_j(k),\\
	\mc{D}^g_{ij}(k) &= g(\partial f_i(k)) -  g(\partial f_j(k)).
\end{align}

The following theorem gives the proof of convergence assuming sufficiently small $T$ (as defined later in Section~\ref{sec_rate}).  

\begin{theorem} [DT Feasibility/Uniqueness/Convergence] \label{thm_tree_DT} 
	Under Assumptions~\ref{ass_strict}, \ref{ass_G}, \ref{ass_gW} and initializing from  a feasible point $\mb{x}_0 \in \mc{S}_b$, protocols \eqref{eq_sol_efm} and \eqref{eq_sol_efm2} converge to
	the feasible unique equilibrium point $\mb{x}^*$  in the form $\nabla F(\mb{x}^*) = \varphi^* \mb{1}_n$ (with $\varphi^* \in \mathbb{R}$) for sufficiently small (sampling) step $T<T_\lambda$ ($T_\lambda$ is defined later in \eqref{eq_Trange}).
\end{theorem}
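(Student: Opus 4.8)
The plan is to verify feasibility, uniqueness of the equilibrium, and convergence in turn, the first two transferring almost directly from the continuous-time development and the third being the genuinely new work. I first note that feasibility and the per-step Lyapunov decrease will require only Assumption~\ref{ass_G}(a); the $B$-connectivity of Assumption~\ref{ass_G}(b) enters only in the final convergence step. \emph{Feasibility:} summing \eqref{eq_sol_efm} over $i$, the symmetry $W_{ij}=W_{ji}$ and the oddness of $g$ cancel $W_{ij}g(\mc{D}_{ij})$ against $W_{ji}g(\mc{D}_{ji})$ pairwise, so $\mb{1}_n^\top\mb{x}(k+1)=\mb{1}_n^\top\mb{x}(k)=b$ and $\mb{x}(k)\in\mc{S}_b$ for all $k$, exactly as in Lemma~\ref{lem_feasible_intime}; the same computation with $\mc{D}^g_{ij}$ covers \eqref{eq_sol_efm2}. \emph{Uniqueness:} any fixed point obeys $\sum_{j\in\mc{N}_i}W_{ij}g(\mc{D}_{ij})=0$ (resp.\ with $\mc{D}^g_{ij}$) at every node, and replaying the path-and-contradiction argument of Theorem~\ref{thm_tree} on the connected union graph $\mc{G}_B$ forces $\partial f_i^* =\partial f_j^*$ for all $i,j$; Lemma~\ref{lem_unique_feasible} then identifies $\mb{x}^*$ as the unique feasible point with $\nabla F(\mb{x}^*)=\varphi^* \mb{1}_n$.

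For \emph{convergence}, I would take $V(k):=F(\mb{x}(k))-F(\mb{x}^*)\ge 0$ (nonnegative because $\mb{x}(k)$ is feasible and $\mb{x}^*$ minimizes $F$ over $\mc{S}_b$). With $\Delta\mb{x}:=\mb{x}(k+1)-\mb{x}(k)$, Lemma~\ref{lem_strict} gives $V(k+1)-V(k)=\nabla F(\mb{x}(k))^\top\Delta\mb{x}+\tfrac12\Delta\mb{x}^\top\nabla^2 F(\wt{\mb{x}})\Delta\mb{x}$ for some $\wt{\mb{x}}$ on the segment. The first-order term is handled by the identity of Lemma~\ref{lem_sum} (pairing links, using symmetry of $W$ and oddness of $g$): $\nabla F(\mb{x}(k))^\top\Delta\mb{x}=-T\,\Phi(k)$ with $\Phi(k):=\tfrac12\sum_{i,j}W_{ij}\mc{D}_{ij}(k)g(\mc{D}_{ij}(k))\ge 0$ by the strongly sign-preserving property, while the lower sector bound $|g(x)|\ge\ve|x|$ gives $\Phi(k)\ge\tfrac{\ve}{2}\sum_{i,j}W_{ij}\mc{D}_{ij}(k)^2$. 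For the second-order term, the bound $\partial^2 f_i<2u$ used for the rate analysis (Assumption~\ref{ass_lips_strict} in Section~\ref{sec_rate}) gives $\tfrac12\Delta\mb{x}^\top\nabla^2 F(\wt{\mb{x}})\Delta\mb{x}\le u\|\Delta\mb{x}\|_2^2$, and since $\Delta x_i=-T\sum_j W_{ij}g(\mc{D}_{ij})$, Cauchy--Schwarz on the weights together with the upper sector bound $|g(x)|\le K_g|x|$ yields $\|\Delta\mb{x}\|_2^2\le T^2 K_g^2 d_{\max}\sum_{i,j}W_{ij}\mc{D}_{ij}^2\le\tfrac{2T^2K_g^2 d_{\max}}{\ve}\,\Phi(k)$, with $d_{\max}$ the largest weighted degree (equivalently $\lambda_n$, via Lemma~\ref{lem_xLy}). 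Combining, $V(k+1)-V(k)\le -T\Phi(k)\bigl(1-\tfrac{2uK_g^2 d_{\max}}{\ve}T\bigr)$, so for $T<T_\lambda$ with $T_\lambda$ of order $\ve/(uK_g^2 d_{\max})$, matching \eqref{eq_Trange}, the sequence $V(k)$ is nonincreasing and bounded below, hence converges, and $\sum_{k}T\Phi(k)<\infty$, forcing $\Phi(k)\to 0$.

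The remaining step---which I expect to be the main obstacle---is to deduce $\mb{x}(k)\to\mb{x}^*$ from $\Phi(k)\to 0$ under \emph{uniform} rather than all-time connectivity. Since $\Phi(k)\to 0$ also gives $\|\Delta\mb{x}(k)\|_2\to 0$, over any window $[k,k+B]$ the state drifts by $o(1)$, so $\mc{D}_{ij}(s)=\mc{D}_{ij}(k)+o(1)$ for all $s$ in the window; summing $\Phi$ over the window and invoking Assumption~\ref{ass_G}(b), every edge of the connected union graph $\mc{G}_B(k)$ is active at some $s\in[k,k+B]$ with $W_{ij}(s)\mc{D}_{ij}(s)^2\to 0$, so $\mc{D}_{ij}(k)\to 0$ across all edges of a connected graph, which forces $\nabla F(\mb{x}(k))-\varphi(k)\mb{1}_n\to\mb{0}$ for scalars $\varphi(k)$. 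Because $\mb{x}(k)$ stays in the compact feasible sublevel set $\{\mb{x}\in\mc{S}_b:F(\mb{x})\le F(\mb{x}_0)\}$ (strong convexity), any accumulation point is feasible with equalized gradients and hence equals the unique $\mb{x}^*$ of Lemma~\ref{lem_unique_feasible}, so $\mb{x}(k)\to\mb{x}^*$. The link-based protocol \eqref{eq_sol_efm2} is treated the same way after replacing $\mc{D}_{ij}g(\mc{D}_{ij})$ by $\mc{D}_{ij}\mc{D}^g_{ij}>0$ and using the corresponding sector estimates of Assumption~\ref{ass_gW}. The crux is this window-aggregation argument: it is where the gap between all-time and uniform connectivity is absorbed, and where one must check that the vanishing of the increments is fast on the scale of $B$.
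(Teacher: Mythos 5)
Your proposal is correct and follows essentially the same route as the paper: feasibility by pairwise cancellation under symmetry and oddness, uniqueness by replaying the path-contradiction argument of Theorem~\ref{thm_tree} on $\mc{G}_B$, and convergence via the descent inequality of Lemma~\ref{lem_cortes} with the sector bounds of Assumption~\ref{ass_gW}, yielding a step-size threshold of the same form as \eqref{eq_Trange} (your $d_{\max}$ versus the paper's $\lambda_n^2$ only changes constants). Your closing window-aggregation argument deducing $\mb{x}(k)\to\mb{x}^*$ from $\Phi(k)\to 0$ under $B$-connectivity is a welcome elaboration of a step the paper only sketches in the remark following \eqref{eq_FF}, and it correctly requires the Lipschitz-gradient bound of Assumption~\ref{ass_lips_strict}, which the paper likewise invokes implicitly through the definition of $T_\lambda$.
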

\begin{proof}
	Following \eqref{eq_sol_efm} and \eqref{eq_sol_efm2}, we get
	\begin{align} 
		\sum_{i=1}^n x_i(k+1) &= \sum_{i=1}^n x_i(k) -\sum_{i=1}^n T \sum_{j \in \mc{N}_i}   W_{ij} g(\mc{D}_{ij}(k)). \label{eq_lem_efm1}\\
		\sum_{i=1}^n x_i(k+1) &= \sum_{i=1}^n x_i(k) -\sum_{i=1}^n T \sum_{j \in \mc{N}_i}   W_{ij} \mc{D}^g_{ij}(k). \label{eq_lem_efm2}
	\end{align}	
	For any link $(i,j)$ and $(j,i)$ in $\mc{G}(k)$, under the given assumptions we have $W_{ij}=W_{ji}$, $\mc{D}^g_{ij}(k)=-\mc{D}^g_{ji}(k)$, and $g(\mc{D}_{ij}(k))=-g(\mc{D}_{ji}(k))$, which implies that,
	\begin{align} \label{eq_sum0_1}
		\sum_{i=1}^n \sum_{j \in \mc{N}_i}   W_{ij} g(\mc{D}_{ij}(k)) =0,\\ \label{eq_sum0_2}
		\sum_{i=1}^n \sum_{j \in \mc{N}_i}   W_{ij} \mc{D}^g_{ij}(k) =0,
	\end{align}	
	and therefore,
	$\sum_{i=1}^n x_i(k+1) = \sum_{i=1}^n x_i(k)$
	for all $k > 0$. Therefore, initializing from $\mb{x}_0 \in \mc{S}_b$ we have $\mb{x}(k) \in \mc{S}_b$, and the feasibility for both DT protocols \eqref{eq_sol_efm_delay} and \eqref{eq_sol_efm_delay2} follows. It should be mentioned that the feasibility condition for the link-based nonlinear solution holds for general weight-balanced directed networks. The proof of uniqueness follows similar procedure as in Theorem~\ref{thm_tree}, where instead of having $\dot{\mb{x}}^*_{\overline{\alpha}} < 0$ and   $\dot{\mb{x}}^*_{\overline{\beta}} > 0$ in the CT case, we have ${\mb{x}^*_{\overline{\alpha}}(k+1)- \mb{x}^*_{\overline{\alpha}}(k)< 0}$ and   ${\mb{x}^*_{\overline{\beta}}(k+1)-\mb{x}^*_{\overline{\beta}}(k) > 0}$ over any time-window of length $B$. 
	The proof of convergence and the bound on $T$ is discussed later in Section~\ref{sec_rate}.
\end{proof}

\begin{remark} \label{rem_wb}
	For the link-based protocol~\eqref{eq_sol_efm2}, one can extend the solution to weight-balanced (WB) directed networks. This is because in the proof of Theorem~\ref{thm_tree_DT}, one can restate Eq. \eqref{eq_sum0_2} for $\sum_{i=1}^n W_{ij}= \sum_{j=1}^n W_{ji}$ over WB digraphs that proves feasibility and convergence as Lemma~\ref{lem_sum} also holds for WB digraphs. Note that proof of uniqueness is irrespective of the weights and only depends on the network connectivity. Similar reasoning proves feasibility in Lemma~\ref{lem_feasible_intime} and convergence for CT dynamics~\eqref{eq_sol2} over WB digraphs. 
	Distributed (weight) balancing algorithms can be adopted to design such directed networks \cite{6877847}.
\end{remark} 

\subsection{Rate of Convergence} \label{sec_rate}
Next, to determine the rate of convergence of the proposed (delay-free) protocols, we further make the following assumption.
\begin{ass} \label{ass_lips_strict} 
	For the local cost functions ${f}_i(x_i):\mathbb{R} \mapsto \mathbb{R}$,
	\begin{itemize}
		\item there exists $u < \infty$ such that $\partial^2 f_i(x_i) < 2u$. This implies that $\partial f_i(x_i)$ are Lipschitz continuous.
		\item (Strong-convexity) there exists $v > 0$ such that $\partial^2 f_i(x_i) > 2 v $.
	\end{itemize}
\end{ass}
To incorporate a non-smooth penalty term or barrier function $f_i^\sigma$ for the box constraints \cite{doostmohammadian20211st,mikael2021cdc}, e.g.,  $[u]^+=\max \{u,0\}^\sigma$ for $\sigma = 1$, one can replace it with smooth equivalent~${\mc{L}(u,\mu)=\frac{1}{\mu}\log (1+\exp(\mu u))}$. This is a typical reformulation in machine learning literature \cite{svm,slp_book}. It is known that~$ \mc{L} (z,\mu)-\max\{z,0\} \leq \frac{1}{\mu}$ and the two functions can become arbitrarily close by sufficiently large $\mu$~\cite{slp_book}. $[u]^+$ with $\sigma \geq 2$  is another alternative smooth option \cite{nesterov1998introductory}.
Assumption~\ref{ass_lips_strict} helps to explicitly derive a \textit{sufficient} bound on the sampling step to ensure convergence. Based on this assumption, we reformulate Lemma~\ref{lem_strict2} as follows.


\begin{lemma} \label{lem_cortes} 
	Assume the function ${f}_i:\mathbb{R} \mapsto \mathbb{R}$ to be strongly convex and $2 v <  \partial^2 f_i(x_i) < 2 u$ (Assumption~\ref{ass_lips_strict}). Then,  for two feasible points $\mb{x}(k+1), \mb{x}(k) \in \mc{S}_b$, and ${\Delta \mb{x} =: \mb{x}(k+1)-\mb{x}(k)}$,
	\begin{align} \label{eq_cortes1_1}
		F(\mb{x}(k+1)) &> F(\mb{x}(k)) +  \overline{\nabla  F}_1^\top \Delta \mb{x}
		+ v  \Delta \mb{x}^\top \Delta \mb{x}, \\
		F(\mb{x}(k+1)) &> F(\mb{x}(k)) - \frac{1}{4v}  \overline{\nabla  F}_1^\top \overline{\nabla  F}_1,
		\label{eq_cortes1}
	\end{align}
	and similarly,
	\begin{align} \label{eq_cortes2_1}
		F(\mb{x}(k+1)) &< F(\mb{x}(k)) + \overline{\nabla  F}_1^\top \Delta \mb{x}
		+ u  \Delta \mb{x} ^\top\Delta \mb{x}, \\
		F(\mb{x}(k+1)) &< F(\mb{x}(k)) - \frac{1}{4u}  \overline{\nabla  F}_1^\top \overline{\nabla  F}_1,
		\label{eq_cortes2}
	\end{align}
	with gradient dispersion defined as $\overline{\nabla  F}_1 := \nabla F (\mb{x}(k+1))  - \frac{1}{n}(\mb{1}_n^\top \nabla  F (\mb{x}(k+1)))  \mb{1}_n$.
\end{lemma}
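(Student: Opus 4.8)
\textbf{Setup and first inequalities.} The plan is to apply the second-order Taylor expansion of Lemma~\ref{lem_strict} (or equivalently the bounds of Lemma~\ref{lem_strict2}) with the two points taken to be $\mb{x}(k+1)$ and $\mb{x}(k)$, and the increment $\Delta\mb{x} := \mb{x}(k+1) - \mb{x}(k)$. Since $F$ is separable, $F(\mb{x}) = \sum_i f_i(x_i)$, and each $f_i$ satisfies $2v < \partial^2 f_i < 2u$ by Assumption~\ref{ass_lips_strict}, the Hessian $\nabla^2 F$ is diagonal with entries strictly between $2v$ and $2u$. Writing the Taylor expansion around $\mb{x}(k+1)$ (i.e.\ expanding $F(\mb{x}(k))$ about $\mb{x}(k+1)$, so that the gradient term carries $\nabla F(\mb{x}(k+1))$), there exists $\overline{\mb{x}}$ on the segment with
\[
F(\mb{x}(k)) = F(\mb{x}(k+1)) - \nabla F(\mb{x}(k+1))^\top \Delta\mb{x} + \tfrac{1}{2}\Delta\mb{x}^\top \nabla^2 F(\overline{\mb{x}}) \Delta\mb{x}.
\]
Using $v\,\|\Delta\mb{x}\|^2 < \tfrac12\Delta\mb{x}^\top\nabla^2 F(\overline{\mb{x}})\Delta\mb{x} < u\,\|\Delta\mb{x}\|^2$ and rearranging yields \eqref{eq_cortes1_1} and \eqref{eq_cortes2_1}, except that the gradient appearing there is the \emph{dispersion} $\overline{\nabla F}_1$ rather than $\nabla F(\mb{x}(k+1))$ itself. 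This replacement is legitimate because both $\mb{x}(k+1)$ and $\mb{x}(k)$ lie in $\mc{S}_b$, so $\mb{1}_n^\top \Delta\mb{x} = b - b = 0$; hence $\nabla F(\mb{x}(k+1))^\top\Delta\mb{x} = \big(\nabla F(\mb{x}(k+1)) - \tfrac1n(\mb{1}_n^\top\nabla F(\mb{x}(k+1)))\mb{1}_n\big)^\top\Delta\mb{x} = \overline{\nabla F}_1^\top\Delta\mb{x}$. I would state this orthogonality observation explicitly as it is the one genuinely load-bearing step.

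\textbf{The second pair of inequalities via minimization over $\Delta\mb{x}$.} To pass from \eqref{eq_cortes1_1} to \eqref{eq_cortes1} and from \eqref{eq_cortes2_1} to \eqref{eq_cortes2}, the idea is to view the right-hand sides as quadratics in the \emph{free} vector $\Delta\mb{x}$ (constrained only by $\mb{1}_n^\top\Delta\mb{x}=0$) and bound them by their extreme values. For \eqref{eq_cortes1}: the quantity $\overline{\nabla F}_1^\top\Delta\mb{x} + v\,\Delta\mb{x}^\top\Delta\mb{x}$ is minimized over all $\Delta\mb{x}$ with $\mb{1}_n^\top\Delta\mb{x}=0$ at $\Delta\mb{x} = -\tfrac{1}{2v}\overline{\nabla F}_1$ (which indeed satisfies the constraint, since $\mb{1}_n^\top\overline{\nabla F}_1 = 0$ by construction of the dispersion vector), giving minimum value $-\tfrac{1}{4v}\overline{\nabla F}_1^\top\overline{\nabla F}_1$; therefore $F(\mb{x}(k+1)) > F(\mb{x}(k)) + \big(\text{that minimum}\big)$, which is \eqref{eq_cortes1}. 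For \eqref{eq_cortes2}: symmetrically, $\overline{\nabla F}_1^\top\Delta\mb{x} + u\,\Delta\mb{x}^\top\Delta\mb{x} \ge -\tfrac{1}{4u}\overline{\nabla F}_1^\top\overline{\nabla F}_1$ always, so the strict upper bound \eqref{eq_cortes2_1} gives the strict upper bound \eqref{eq_cortes2}. I would note that completing the square is clean because both the linear coefficient $\overline{\nabla F}_1$ and the admissible directions $\Delta\mb{x}$ are orthogonal to $\mb{1}_n$, so the unconstrained and constrained minima coincide and no Lagrange-multiplier bookkeeping is needed.

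\textbf{Anticipated obstacle.} The computation itself is routine; the only subtlety — and the thing a careful reader will want pinned down — is the consistent bookkeeping of \emph{which} point the Taylor expansion is centered at and \emph{why} $\nabla F(\mb{x}(k+1))$ may be freely swapped for its mean-subtracted version $\overline{\nabla F}_1$. Both hinge entirely on feasibility ($\mb{1}_n^\top\Delta\mb{x}=0$), so I would make sure that hypothesis is invoked prominently. A minor secondary point is keeping the strict inequalities strict: strictness in \eqref{eq_cortes1_1}/\eqref{eq_cortes2_1} comes from $\partial^2 f_i$ being \emph{strictly} bounded away from $2v$ and $2u$ together with $\Delta\mb{x}$ being generically nonzero, and it then propagates to \eqref{eq_cortes1}/\eqref{eq_cortes2} through the completion-of-square bound (in the degenerate case $\Delta\mb{x}=\mb{0}$ one checks the inequalities hold trivially, or one simply assumes $\mb{x}(k+1)\neq\mb{x}(k)$, which is the only case of interest for the convergence argument in Section~\ref{sec_rate}).
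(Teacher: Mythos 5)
Your plan has two genuine breaks. First, the Taylor step does not produce \eqref{eq_cortes1_1}--\eqref{eq_cortes2_1}. Expanding $F(\mb{x}(k))$ about $\mb{x}(k+1)$ and rearranging gives $F(\mb{x}(k+1)) = F(\mb{x}(k)) + \nabla F(\mb{x}(k+1))^\top\Delta\mb{x} - \tfrac12\Delta\mb{x}^\top\nabla^2F(\overline{\mb{x}})\Delta\mb{x}$, so the curvature term enters with a \emph{minus} sign: you obtain $F(\mb{x}(k+1)) < F(\mb{x}(k)) + \nabla F(\mb{x}(k+1))^\top\Delta\mb{x} - v\,\Delta\mb{x}^\top\Delta\mb{x}$ and $F(\mb{x}(k+1)) > F(\mb{x}(k)) + \nabla F(\mb{x}(k+1))^\top\Delta\mb{x} - u\,\Delta\mb{x}^\top\Delta\mb{x}$, i.e.\ $u$ and $v$ land in the opposite inequalities and with the wrong sign. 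The stated $+v$ lower bound and $+u$ upper bound are the standard strong-convexity/smoothness inequalities centered at $\mb{x}(k)$ with the gradient evaluated at $\mb{x}(k)$ --- which is how the lemma is actually invoked in Section~\ref{sec_rate}, where the display carries $\nabla F(k)^\top\Delta\mb{x}$. The version with the gradient at $\mb{x}(k+1)$ and coefficient $+v$ is false (take $n=2$, $f_i(x)=x^2$, $\mb{x}(k)=\mb{0}_2$, $\mb{x}(k+1)=(1;-1)$: the claimed lower bound is $\approx 6$ while $F(\mb{x}(k+1))=2$). By choosing the expansion point precisely ``so that the gradient term carries $\nabla F(\mb{x}(k+1))$'' you have committed to what is evidently a typo in the definition of $\overline{\nabla F}_1$, and the chain of inequalities does not close. (Your orthogonality observation $\mb{1}_n^\top\Delta\mb{x}=0$, which lets you swap the gradient for its dispersion, is correct and is indeed the load-bearing use of feasibility.)

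Second, the passage from \eqref{eq_cortes2_1} to \eqref{eq_cortes2} is a non sequitur. From $F(\mb{x}(k+1)) < F(\mb{x}(k)) + Q(\Delta\mb{x})$ with $Q(\Delta\mb{x}) := \overline{\nabla F}_1^\top\Delta\mb{x} + u\,\Delta\mb{x}^\top\Delta\mb{x} \geq -\tfrac{1}{4u}\overline{\nabla F}_1^\top\overline{\nabla F}_1$, you cannot replace $Q(\Delta\mb{x})$ by its \emph{minimum} inside an \emph{upper} bound; that substitution is only valid in the direction \eqref{eq_cortes1_1}$\Rightarrow$\eqref{eq_cortes1}, which you do handle correctly. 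Indeed \eqref{eq_cortes2} cannot hold for two arbitrary feasible points: $\mb{x}(k+1)=\mb{x}(k)$ gives $0 < -\tfrac{1}{4u}\overline{\nabla F}_1^\top\overline{\nabla F}_1 \leq 0$. An inequality of the form \eqref{eq_cortes2} is the descent/PL-type bound $F^* \leq F(\mb{x}) - \tfrac{1}{4u}\overline{\nabla F}^\top\overline{\nabla F}$, obtained by minimizing the smoothness upper bound over the \emph{free} second point and comparing against the constrained minimizer --- exactly what the paper's corollary does by ``substituting $\mb{x}_2=\mb{x}^*$''. The paper itself disposes of the lemma by citing Lemma~\ref{lem_strict2} and \cite{cherukuri2015distributed}, so some of these sign and quantifier defects are inherited from the statement; but as written your argument establishes neither \eqref{eq_cortes1_1}/\eqref{eq_cortes2_1} in the stated form nor \eqref{eq_cortes2}.
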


\begin{proof}
	The proof follows from Lemma~\ref{lem_strict2}. See details in \cite{cherukuri2015distributed}.
\end{proof}
\begin{corol}
	As a direct result of Lemma~\ref{lem_cortes}, substituting $\mb{x}_2 = \mb{x}^*$ in \eqref{eq_cortes1} and \eqref{eq_cortes2}, for any feasible $\mb{x} \in \mc{S}_b$ and $\overline{F}(k) := F(k)-F^*$,
	\begin{align} \label{eq_4v}
		\frac{1}{4u} \overline{\nabla  F}^\top \overline{\nabla  F}   \leq \overline{F}(\mb{x}) \leq \frac{1}{4v} \overline{\nabla  F}^\top \overline{\nabla  F}.
	\end{align}
\end{corol}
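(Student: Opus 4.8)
The plan is to apply Lemma~\ref{lem_cortes} with the second point chosen to be the optimizer $\mb{x}^*$ rather than $\mb{x}(k)$. Concretely, I would set $\mb{x}(k+1) = \mb{x}$ (an arbitrary feasible point in $\mc{S}_b$) and $\mb{x}(k) = \mb{x}^*$, so that $\Delta\mb{x} = \mb{x} - \mb{x}^*$ and the ``gradient dispersion'' term $\overline{\nabla F}_1$ specializes to $\overline{\nabla F} := \nabla F(\mb{x}) - \frac{1}{n}(\mb{1}_n^\top \nabla F(\mb{x}))\mb{1}_n$. The key structural fact that makes this work is Lemma~\ref{lem_optimal_solution}: at the optimizer $\nabla F(\mb{x}^*) = \varphi^*\mb{1}_n$, which means $\overline{\nabla F}$ evaluated at $\mb{x}^*$ is exactly $\mb{0}_n$. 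That is why the chained (``completed-square'') bounds \eqref{eq_cortes1} and \eqref{eq_cortes2} collapse cleanly into the symmetric two-sided estimate on $\overline{F}(\mb{x}) = F(\mb{x}) - F^*$.

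The steps, in order: First, invoke \eqref{eq_cortes1} with the above substitution to get $F(\mb{x}) > F(\mb{x}^*) - \frac{1}{4v}\overline{\nabla F}^\top\overline{\nabla F}$, i.e. $\overline{F}(\mb{x}) > -\frac{1}{4v}\overline{\nabla F}^\top\overline{\nabla F}$ — but this is the wrong direction, so instead I would use \eqref{eq_cortes1_1} reversed in roles, or more directly note the standard strong-convexity consequence: since $F$ restricted to the affine feasible set $\mc{S}_b$ is strongly convex with parameter $2v$ along feasible directions, and $\mb{x}^*$ is its minimizer, the Polyak–Łojasiewicz-type bound $F(\mb{x}) - F^* \le \frac{1}{4v}\|\mathrm{grad}_{\mc{S}_b}F(\mb{x})\|^2$ holds, where the gradient of $F$ along $\mc{S}_b$ is precisely the projection of $\nabla F(\mb{x})$ onto $\mb{1}_n^\perp$, namely $\overline{\nabla F}$. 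Symmetrically, the $2u$ upper bound on $\partial^2 f_i$ gives the lower bound $F(\mb{x}) - F^* \ge \frac{1}{4u}\|\overline{\nabla F}\|^2$. Assembling these two yields exactly \eqref{eq_4v}.

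The main obstacle — really the only subtle point — is the bookkeeping that the relevant gradient in Lemma~\ref{lem_cortes} is the \emph{dispersion} $\overline{\nabla F}$ rather than the full gradient $\nabla F$, and justifying that this dispersion is the correct ``feasible-direction gradient'' so that the quadratic expansions of Lemma~\ref{lem_cortes} (which were stated for feasible increments $\Delta\mb{x} \in \mb{1}_n^\perp$) are being applied consistently. Since $\mb{x}, \mb{x}^* \in \mc{S}_b$ forces $\mb{1}_n^\top\Delta\mb{x} = 0$, the term $\overline{\nabla F}_1^\top\Delta\mb{x} = \nabla F(\mb{x})^\top\Delta\mb{x}$ automatically (the subtracted $\mb{1}_n$-component is annihilated), so \eqref{eq_cortes1_1}–\eqref{eq_cortes2_1} apply verbatim; then minimizing/maximizing the right-hand quadratics over $\Delta\mb{x} \in \mb{1}_n^\perp$ produces the $\frac{1}{4v}$ and $\frac{1}{4u}$ coefficients against $\overline{\nabla F}^\top\overline{\nabla F}$. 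Everything else is the routine substitution $F^* = F(\mb{x}^*)$ and rearrangement, which I would not belabor.
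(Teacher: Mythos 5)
Your argument is correct, and in fact it supplies the proof that the paper omits: the paper's own ``proof'' of this corollary is a single sentence asserting that it follows from the strong convexity of $\overline{F}$, with the details deferred to \cite{cherukuri2015distributed}. You also correctly flag the one genuine trap here, which the paper glosses over: the literal substitution $\mb{x}(k)=\mb{x}^*$ into \eqref{eq_cortes1}--\eqref{eq_cortes2} advertised in the corollary statement does \emph{not} deliver \eqref{eq_4v} --- it yields $\overline{F}(\mb{x})>-\frac{1}{4v}\overline{\nabla F}^\top\overline{\nabla F}$ (vacuous) and $\overline{F}(\mb{x})<-\frac{1}{4u}\overline{\nabla F}^\top\overline{\nabla F}$ (false in general) --- so one must instead return to the quadratic envelopes \eqref{eq_cortes1_1}, \eqref{eq_cortes2_1} and argue along the feasible affine set. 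Your repair is the standard one and is sound: restricted to $\mc{S}_b$, increments satisfy $\mb{1}_n^\top\Delta\mb{x}=0$, so $\nabla F(\mb{x})^\top\Delta\mb{x}=\overline{\nabla F}^\top\Delta\mb{x}$; minimizing the strong-convexity lower envelope over $\Delta\mb{x}\in\mb{1}_n^\perp$ and using $F^*=\min_{\mc{S}_b}F$ gives the Polyak--\L{}ojasiewicz bound $\overline{F}(\mb{x})\le\frac{1}{4v}\|\overline{\nabla F}\|^2$, while evaluating the smoothness upper envelope at the feasible point $\mb{y}=\mb{x}-\frac{1}{2u}\overline{\nabla F}$ and using $F^*\le F(\mb{y})$ gives $\overline{F}(\mb{x})\ge\frac{1}{4u}\|\overline{\nabla F}\|^2$. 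The only cosmetic quibble is your phrase ``minimizing/maximizing'': both directions are obtained by minimizing the respective quadratic envelope combined with $F^*$ being the feasible minimum, not by a maximization; and the appeal to $\nabla F(\mb{x}^*)=\varphi^*\mb{1}_n$ from Lemma~\ref{lem_optimal_solution}, while consistent with the bounds vanishing at $\mb{x}^*$, is not actually needed in this derivation.
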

The above corollary follows from the strong convexity of $\overline{F}$ and holds for general allocation dynamics; for example, a similar statement is given for the linear solution in \cite{cherukuri2015distributed}. 
Using Assumption \ref{ass_lips_strict} and Lemma~\ref{lem_cortes}, one can find the 
limit on the convergence rate under the proposed CT and DT protocols (assuming no latency, i.e., $\overline{\tau} =0$). Recall that, for the CT case,  $\dot{\overline{F}} =  \nabla F^\top \dot{\mb{x}}$.
In the following, we state the results for \eqref{eq_sol} and the solution for \eqref{eq_sol2} similarly follows. 
Using the definition of Laplacian in \eqref{eq_laplacian}, one can rewrite \eqref{eq_sol} (with some abuse of notation) as $\dot{\mb{x}} = - L g(\nabla F)$. Substituting this for $\dot{\mb{x}}$ and from Assumption~\ref{ass_gW} and Corollary~\ref{cor_xLy}, 
\begin{align}
	- K_g  \nabla F^\top L \nabla  F \leq \dot{\overline{F}}(\mb{x}) \leq  - \varepsilon  \nabla  F^\top L \nabla  F,
\end{align}
where we used the following inequalities from Assumptions~\ref{ass_gW},
\begin{align} \label{eq_g_epsilon1}
	\varepsilon |(\partial f_i(k))| &< |g(\partial f_i(k))| < K_g |(\partial f_i(k))|,  \\
	\varepsilon  |\mc{D}_{ij}(k)| &<|g(\mc{D}_{ij}(k))| < K_g |\mc{D}_{ij}(k)|. \label{eq_g_epsilon2}
\end{align}
For now, assume $B=0$ and later we extend it to $B>0$. Using \eqref{eq_laplace} with $\lambda_2$ described in Lemma~\ref{lem_xLy} and recalling the notation $\overline{\nabla  F}$ in Lemma~\ref{lem_cortes}, we have
\begin{align}
	-  K_g  \lambda_n  \overline{\nabla  F}^\top  \overline{\nabla  F} \leq \dot{\overline{F}} \leq  -  \varepsilon  \lambda_2  \overline{\nabla  F}^\top  \overline{\nabla  F},
\end{align}
and from \eqref{eq_4v}
\begin{align} \label{eq:Fdot}
	-4u K_g  \lambda_n \overline{F}(\mb{x}) \leq \dot{\overline{F}} \leq    - 4v \varepsilon  \lambda_2 \overline{F}(\mb{x}),
\end{align}
which is consistent with the linear case ($\varepsilon = K_g = 1$) in \cite{cherukuri2015distributed}. Recall that for $\overline{F}=0$ we also have  $\dot{\overline{F}}=0$. 
Next, for the DT case, following 
Eq.~\eqref{eq_cortes2_1},
\begin{align} \nonumber
	\overline{F}(k+1) &\leq \overline{F}(k) +
	\nabla F(k)^\top \Delta \mb{x}  + u  \Delta \mb{x}^\top  \Delta \mb{x}, 
	\nonumber
\end{align}
with $\Delta \mb{x} :=  \mb{x}(k+1) - \mb{x}(k)$. To satisfy $\overline{F}(k+1) \leq \overline{F}(k)$, we need,
\begin{align} \nonumber
	\nabla F(k)^\top \Delta \mb{x}  + u \Delta \mb{x}^\top \Delta \mb{x}  \leq 0.
\end{align} 
From Assumption~\ref{ass_gW} and following a similar line of reasoning to get Eq.~\eqref{eq:Fdot}, the above is satisfied if,
\begin{align} \nonumber
	&-\varepsilon T \lambda_2 \overline{\nabla  F}^\top \overline{\nabla  F} + u K_g^2 T^2  \overline{\nabla  F}^\top L^\top L \overline{\nabla  F}  \leq 0, 
\end{align}
where we substituted $ \Delta \mb{x} = L \nabla F = L \overline{\nabla  F}$ and used Lemma~\ref{lem_xLy} and Corollary~\ref{cor_xLy}. 
From the same Lemma~\ref{lem_xLy} we have $\overline{\nabla  F}^\top L^\top L \overline{\nabla  F} \leq \lambda_n^2 \overline{\nabla  F}^\top \overline{\nabla  F}$. Therefore, the sufficient condition for convergence is,
\begin{align} \label{eq_lambda_rate0}
	&(u K_g^2 T \lambda_n^2  - \varepsilon \lambda_2)  \overline{\nabla  F}^\top \overline{\nabla  F} \leq 0,
\end{align}
which gives the sufficient bound on $T$ as, 
\begin{align} \label{eq_Trange}
	T \leq \frac{ \varepsilon \lambda_2}{u K_g^2 \lambda_n^2} =: T_\lambda.
\end{align} 
Then, the upper bound on the convergence rate of the residual follows from Lemma~\ref{lem_cortes} and \eqref{eq_4v} as,
\begin{align} \label{eq_FF}
	\frac{\overline{F}(k+1)}{\overline{F}(k)} &\leq  1  +4 v(u K_g^2 T^2 \lambda_n^2  -T \lambda_2 \varepsilon).
\end{align}

The above gives an estimate on the (linear) convergence rate of $\overline{F}$ for sufficiently small $T$ satisfying \eqref{eq_Trange} and also proves the convergence in Theorem~\ref{thm_tree_DT}. For the quadratic cost functions (as in the economic dispatch problem), we have $u=v$, and the above equations can be more simplified. Further, in case of B-connectivity instead of all-time connectivity ($B=1$), one can derive Eq. \eqref{eq_FF} for $\frac{\overline{F}(k+B)}{\overline{F}(k)}$ over union graph $\mc{G}_B$ and modify \eqref{eq_Trange} as $T B \leq T_\lambda$ with $\lambda_2$, $\lambda_n$ as the eigenvalues of $\mc{G}_B$.

\begin{remark} \label{rem_directed}
	Recall that, from Lemma~\ref{lem_xLy} for  WB directed graphs, values $\lambda_2,\lambda_n$ in \eqref{eq_lambda_rate0} (and in the subsequent equations) denote the eigenvalues of the symmetric matrix $L_s$ instead of $L$.  
\end{remark}

\section{Networks with Time-Delays \label{sec_delay_DT}}  
Next, we extend the solutions to the time-delayed case. We consider two approaches to overcome network latency according to the following remark. 

\begin{remark} \label{rem_delay2}
	Following Assumption~\ref{ass_delay}, the knowledge of every $\tau_{ij}$ is key to satisfy the feasibility condition in the delayed case, see \eqref{eq_sum0_1} and \eqref{eq_sum0_2}. Note that $\mc{D}(k-r)$ and $\mc{D}^g(k-r)$ (with $r$ as the delay) needs to be anti-symmetric. This follows from Assumption~\ref{ass_delay} and Remark~\ref{rem_delay}. In other words, every agent $i$ knows the delay $r$ to match the received $g(\partial f_j(k-r))$ with its own previous information $g(\partial f_i(k-r))$ to find $\mc{D}_{ij}^g(k-r)$, and the same holds for agent $j$ to find $\mc{D}_{ji}^g(k-r)$.
\end{remark}

\subsection{\textbf{Case I}: information-update over a longer time-scale} \label{sec_caseI}
A straightforward solution in the delayed case is to have the agents wait for $\overline{\tau}$ steps (with $\overline{\tau}$ as the maximum possible delay) until they collect (at least) one delayed packet from the neighbouring agents (before processing for the next iteration) and then update their state\footnote{Similar solution to handle delays in consensus protocols is discussed in \cite[Remark~3]{Themis_delay}.}. This alternative approach requires knowledge of $\overline{\tau}$ (or an upper bound). Obviously, due to the agents' slower update rate, this approach's convergence rate is low. In this case, we define a new time-scale $\overline{k}=\lfloor \frac{k-1}{\overline{\tau}+1} \rfloor+1$ 
(see Fig.~\ref{fig_timescale}) 
and update the state of each agent $i$ as follows,
\begin{figure}[]
	\centering
	\includegraphics[width=3.3in]{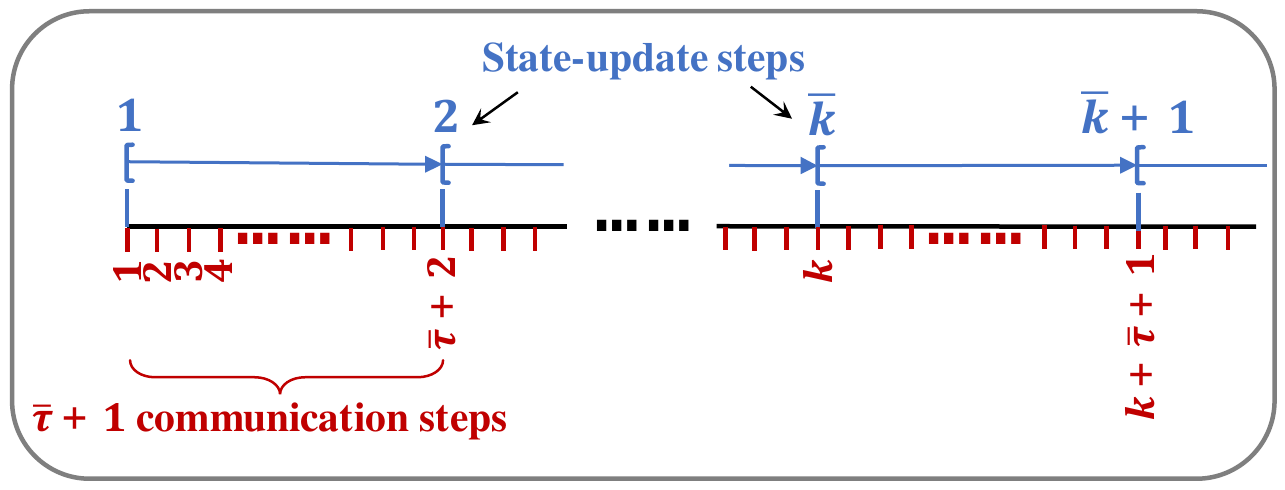}
	\caption{This figure illustrates Case I to handle delays over the network. The original communication time-scale $k$ (red) versus the state-update time-scale $\overline{k}$ (blue) are shown. The information is shared over the longer time scale $\overline{k}$ to update the state $x_i(\overline{k})$ via the dynamics~\eqref{eq_sol_efm_delay_case} and \eqref{eq_sol_efm_delay2_case}.}
	\label{fig_timescale}
\end{figure}
\begin{align} 
	x_i(\overline{k}+1) &= x_i(\overline{k})-T \sum_{j \in \mc{N}_i} W_{ij} g(\mc{D}_{ij}(\overline{k})),
	\label{eq_sol_efm_delay_case} \\
	x_i(\overline{k}+1) &= x_i(\overline{k}) -T \sum_{j \in \mc{N}_i} W_{ij} \mc{D}^g_{ij}(\overline{k}).
	\label{eq_sol_efm_delay2_case}
\end{align}
Over this $\overline{\tau}+1$ time-steps (i.e., two consecutive update steps $\overline{k}_1$ and $\overline{k}_2$) on every link $(i,j)$, every agent sends $1$ message at step $\overline{k}$ and receives the messages from $j\in\mc{N}_i$ by step $\overline{k}+1$\footnote{In an equivalent setup, agents may send their messages per scale $k$ where, at the update scale $\overline{k}$, receive at least $1$ and at most $\overline{\tau}$ messages. Then, the step size $T$ needs to be down-scaled accordingly to satisfy the convergence criteria.}, see Algorithm~\ref{alg_Dt1}.
This is used in the proof of convergence. 

\begin{algorithm}[t]
	\caption{Delay-Tolerant solution: longer time-scale} 
	\begin{algorithmic}[1]
		\State \textbf{Input:}  $W_{ij}$s, $\mc{N}_i$, $f_i(\cdot)$, $T$, $\overline{\tau}$
		\State \textbf{Initialization:} Set $\overline{k}= k = 1$, randomly allocate feasible $\mb{x}(1)$ (e.g., via \cite[Algorithm~2]{cherukuri2015distributed})
		\State Node $i$ shares $\partial f_i(1)$ with neighbors $j \in \mc{N}_i$
		\State \textbf{While} {termination criteria NOT true} \textbf{do} 
		\State Set $k \leftarrow k + 1$		
		\State \textbf{If} $ \frac{k-1}{\overline{\tau}+1}   = \overline{k} $
		\State Node $i$ calculates $\mc{D}^g_{ij}(\overline{k})$ (or $\mc{D}_{ij}(\overline{k})$) based on received data $g(\partial f_j(\overline{k}))$ (or $\partial f_j(\overline{k})$) from $j \in \mc{N}_i$
		\State Updates $x_i(\overline{k}+1)$ via Eq.~\eqref{eq_sol_efm_delay2_case} (or \eqref{eq_sol_efm_delay_case})
		\State Shares $\partial f_i(\overline{k}+1)$ with neighbors $j \in \mc{N}_i$
		\State \textbf{End If}
		\State Set $\overline{k} \leftarrow \lfloor \frac{k-1}{\overline{\tau}+1} \rfloor + 1$
		\State \textbf{End While}
		\State \textbf{Output:}  $x_i^*$, $F(\mb{x}^*) = \sum_{i=1}^{n} f_i(x_i^*)$	
	\end{algorithmic}
	\label{alg_Dt1} 
\end{algorithm}

\begin{theorem} [Feasibility/Uniqueness/Convergence] \label{thm_tree_delayI} 
	Under Assumptions~\ref{ass_strict}-\ref{ass_gW}
	and initializing from  a feasible point $\mb{x}_0 \in \mc{S}_b$, protocols \eqref{eq_sol_efm_delay_case} and \eqref{eq_sol_efm_delay2_case} converge to
	the feasible unique equilibrium point $\mb{x}^*$  in the form $\nabla F(\mb{x}^*) = \varphi^* \otimes \mb{1}_n$  for sufficiently small $T$ satisfying $ T<T_\lambda$ (with $T_\lambda$ given as \eqref{eq_Trange}).  
\end{theorem}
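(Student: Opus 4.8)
The plan is to reduce the delayed dynamics on the slow time-scale $\overline{k}=\lfloor\frac{k-1}{\overline{\tau}+1}\rfloor+1$ to the delay-free discrete-time dynamics already handled in Theorem~\ref{thm_tree_DT} and Section~\ref{sec_rate}. The key structural observation is that, by the construction of $\overline{k}$ and Assumption~\ref{ass_delay}(i), between two consecutive update instants $\overline{k}$ and $\overline{k}+1$ every agent $i$ has received at least one time-stamped packet (Assumption~\ref{ass_delay}(iii)) carrying $\partial f_j(\overline{k})$ (or $g(\partial f_j(\overline{k}))$) from each neighbour $j\in\mathcal{N}_i$, so the quantities $\mathcal{D}_{ij}(\overline{k})$, $\mathcal{D}^g_{ij}(\overline{k})$ entering \eqref{eq_sol_efm_delay_case}--\eqref{eq_sol_efm_delay2_case} are built from gradients evaluated at the \emph{same} slow index. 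Hence \eqref{eq_sol_efm_delay_case}--\eqref{eq_sol_efm_delay2_case} are formally identical to \eqref{eq_sol_efm}--\eqref{eq_sol_efm2} with $k$ replaced by $\overline{k}$. Feasibility then follows verbatim from Theorem~\ref{thm_tree_DT}: by Assumption~\ref{ass_G}(a) the weights are symmetric and balanced and by Assumption~\ref{ass_delay}(iv) $\tau_{ij}=\tau_{ji}$, so both endpoints match their exchanged data to the same $\overline{k}$, giving $\mathcal{D}_{ij}(\overline{k})=-\mathcal{D}_{ji}(\overline{k})$ and $\mathcal{D}^g_{ij}(\overline{k})=-\mathcal{D}^g_{ji}(\overline{k})$; the link sums in \eqref{eq_sum0_1}--\eqref{eq_sum0_2} telescope to zero on the slow scale and $\sum_i x_i(\overline{k}+1)=\sum_i x_i(\overline{k})$, so $\mathbf{x}_0\in\mathcal{S}_b$ yields $\mathbf{x}(\overline{k})\in\mathcal{S}_b$ for all $\overline{k}$ (and, for the link-based protocol, over weight-balanced digraphs as in Remark~\ref{rem_wb}).

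For uniqueness I would repeat the contradiction argument of Theorem~\ref{thm_tree}/Theorem~\ref{thm_tree_DT} on the slow clock: if an equilibrium $\mathbf{x}^*$ had $\partial f^*_i\neq\partial f^*_j$ for two agents, then uniform connectivity of the edge-union graph over a suitable window --- now a window of $\lceil B/(\overline{\tau}+1)\rceil$ slow steps, equivalently the $B^\tau$-window of Remark~\ref{rem_delay_switch} --- forces a node $\overline{\alpha}$ whose gradient is maximal among its neighbours with a strict inequality on at least one of them, whence $x^*_{\overline{\alpha}}(\overline{k}+1)-x^*_{\overline{\alpha}}(\overline{k})<0$ by the strong sign-preservation of Assumption~\ref{ass_gW}, contradicting invariance of $\mathbf{x}^*$. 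Combined with Lemma~\ref{lem_unique_feasible}, there is exactly one such point $\mathbf{x}^*\in\mathcal{S}_b$ with $\nabla F(\mathbf{x}^*)=\varphi^*\mathbf{1}_n$.

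For convergence and the rate bound I would transcribe Section~\ref{sec_rate} with $k\mapsto\overline{k}$. Writing the increment as $\Delta\mathbf{x}=-T\,L\,g(\nabla F)$ with $L$ the Laplacian of the relevant edge-union graph on the slow scale, applying Lemma~\ref{lem_cortes}, Lemma~\ref{lem_xLy}, Corollary~\ref{cor_xLy} and the sector bounds $\varepsilon|y|\le|g(y)|\le K_g|y|$, one gets $\nabla F(\overline{k})^\top\Delta\mathbf{x}+u\,\Delta\mathbf{x}^\top\Delta\mathbf{x}\le(uK_g^2 T\lambda_n^2-\varepsilon\lambda_2)\,\overline{\nabla F}^\top\overline{\nabla F}\le 0$ whenever $T\le T_\lambda$ as in \eqref{eq_Trange}, hence $\overline{F}(\overline{k}+1)\le\overline{F}(\overline{k})$ with the linear contraction factor of \eqref{eq_FF}. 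Since $\overline{F}\ge 0$ is monotone nonincreasing and bounded below it converges, which forces $\Delta\mathbf{x}\to\mathbf{0}$, and the only invariant feasible point is the unique equilibrium of the previous paragraph; this also supplies the convergence claim left open in Theorem~\ref{thm_tree_DT}.

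The main obstacle is the time-scale bookkeeping rather than any new analytic estimate: one must verify carefully that the ``wait $\overline{\tau}$ steps, then update'' rule of Algorithm~\ref{alg_Dt1} indeed produces index-matched gradient differences, so that the slow-scale increment genuinely has the clean anti-symmetric Laplacian form $-T L g(\nabla F)$ (rather than a perturbed form with residual delay terms), and that the $B$-connectivity of Assumption~\ref{ass_G}(b), stated on the fast scale $k$, is translated correctly into a connectivity window on the slow scale $\overline{k}$ and remains compatible with Remark~\ref{rem_delay_switch} (which needs $(\overline{\tau}+1)T$ below the switching period). Once these points are pinned down, feasibility, uniqueness and the Lyapunov/rate argument are exactly the already-established ones applied on the slow clock.
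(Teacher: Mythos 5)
Your proposal is correct and follows essentially the same route as the paper: the paper's proof likewise reduces the slow-scale protocols to the delay-free Theorem~\ref{thm_tree_DT} for feasibility/uniqueness and invokes the Section~\ref{sec_rate} rate derivation on the $\overline{k}$ clock, noting that exactly one message per neighbour is consumed per slow step so the same bound $T<T_\lambda$ applies. Your write-up simply makes explicit the index-matching and connectivity-window bookkeeping that the paper leaves implicit.
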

\begin{proof}
	The proof of feasibility and uniqueness follows a similar procedure as in the proof of Theorem~\ref{thm_tree_DT} considering the longer time-scale $\overline{k}$.
	Since at every update step $\overline{k}$ only $1$ message is received from every agent in $\mc{N}_i$, following similar expressions to derive Eq. \eqref{eq_Trange}, the same criteria $T_\lambda$ ensures convergence. 
\end{proof}


\subsection{\textbf{Case II}: using all delayed data at the same time-scale} \label{sec_caseIII}
In the most general time-delayed version of the DT protocols \eqref{eq_sol_efm} and \eqref{eq_sol_efm2}, agent $i$ updates its state based on all available (received) information as,
\begin{align} \nonumber
	x_i(k+&1) = x_i(k) \\
	&-T \sum_{j \in \mc{N}_i}  \sum_{r=0}^{\overline{\tau}} W_{ij} g(\mc{D}_{ij}(k-r)) \mc{I}_{k-r,ij}(r),
	\label{eq_sol_efm_delay} \\
	x_i(k+&1) = x_i(k). \nonumber \\
	&-T \sum_{j \in \mc{N}_i}  \sum_{r=0}^{\overline{\tau}} W_{ij} \mc{D}^g_{ij}(k-r) \mc{I}_{k-r,ij}(r),
	\label{eq_sol_efm_delay2}
\end{align}
Note that in \eqref{eq_sol_efm_delay}, $\mc{D}_{ij}(k-r)$  
can be easily calculated as agent $i$ records all its gradients $\partial f_i(k)$ at the last $\overline{\tau}$ steps 
and knows the time delay $r$ of the received \textit{time-stamped} message $\partial f_j(k-r)$. In fact, the second $\sum$ sums all the gradient differences based on all received data from $\mc{N}_i$ at step $k$, i.e.,  for all pairs of $\mc{D}_{ij}(s)$ and $s$ satisfying\footnote{Recall that the processed information at the time-step $k$ are received in the time-period $(kT-T,T]$. Therefore, \eqref{eq_s} can be rewritten as $\{k -\overline{\tau} \leq s \leq k, kT-T< sT+\tau_{ij}(s)T \leq kT, j \in \mc{N}_i\}$
	in the asynchronous case, for example, the delays could be positive real values (instead of integers).},
\begin{align} \label{eq_s}
	\{k -\overline{\tau} \leq s \leq k, s+\tau_{ij}(s)=k, j \in \mc{N}_i\}.
\end{align}
or equivalently,
\begin{align} \label{eq_s}
	\{\mc{I}_{k-s,ij}=s, j \in \mc{N}_i\}.
\end{align}
Similar arguments hold for  $\mc{D}^g_{ij}(k-r)$. 
Further, for Lipschitz $g(\cdot)$  with constant $K_g<\infty$ (and smooth $f_i(\cdot)$), from Definition~\ref{def_lips}, we have $\mc{D}^g_{ij}(k-r) \leq K_g \mc{D}_{ij}(k-r)$. 


\begin{remark}
	Following Remark~\ref{rem_delay_switch}, for switching network $\mc{G}(k)$, the switching period needs to be longer than $(\overline{\tau}+1)T$ (or $\overline{\tau}+1$ time-steps). This assumption implies that agent $i$ receives at least one (possibly) delayed packet from every neighbour in $\mc{N}_i$ before losing its link due to network variation (ensuring Assumption~\ref{ass_delay}(v)). Therefore, following the B-connectivity in Assumption~\ref{ass_G}, we need $(\overline{\tau}+1)T<B$. 
\end{remark}

Note that for the update at every step $k$, Assumption~\ref{ass_delay}\textit{(v)} says that at least one package over the network is delivered. Otherwise, $x_i(k)=x_i(k-1)$ for all $i$ and no update occurs. The solution is summarized in Algorithm~\ref{alg_Dt2}.
The feasibility/uniqueness under \eqref{eq_sol_efm_delay} and \eqref{eq_sol_efm_delay2} follows similar to the delay-free case.

\begin{algorithm}[t]
	\caption{Delay-Tolerant solution: the same time-scale} 
	\begin{algorithmic}[1]
		\State \textbf{Input:}  $W_{ij}$s, $\mc{N}_i$, $f_i(\cdot)$, $T$, $\overline{\tau}$
		\State \textbf{Initialization:} Set $k = 1$, randomly allocate feasible $\mb{x}(1)$ (e.g., via \cite[Algorithm~2]{cherukuri2015distributed})
		\State \textbf{While} {termination criteria NOT true} \textbf{do} 
		\State Node $i$ receives time-stamped $g(\partial f_j(k-\overline{\tau}_{ij}))$ (resp. $\partial f_j(k-\overline{\tau}_{ij}))$) from $j \in \mc{N}_i$
		\State Records $\mc{D}^g_{ij}(k-\overline{\tau}_{ij})$ (resp. $\mc{D}_{ij}(k-\overline{\tau}_{ij})$) based on known delay $\overline{\tau}_{ij}$
		\State Updates $x_i(k+1)$ via Eq.~\eqref{eq_sol_efm_delay2} (resp. \eqref{eq_sol_efm_delay})
		\State Shares $\partial f_j(k+1)$ with neighbors $j \in \mc{N}_i$
		\State Sets $k \leftarrow k+1$ 
		\State \textbf{End While}
		\State \textbf{Output:}  $x_i^*$, $F(\mb{x}^*) = \sum_{i=1}^{n} f_i(x_i^*)$
	\end{algorithmic}
	\label{alg_Dt2} 
\end{algorithm}

\begin{lemma} [Feasibility/Uniqueness] \label{lem_feasible_intime_delay}
	Let Assumptions~\ref{ass_strict}-\ref{ass_gW} 
	hold. Initializing from feasible states $\mb{x}_0 \in \mc{S}_b$, under DT protocols  \eqref{eq_sol_efm_delay} and \eqref{eq_sol_efm_delay2}, $\mb{x}(k) \in \mc{S}_b$. Further, the unique equilibrium point $\mb{x}^*$ satisfies $\nabla F(\mb{x}^*) = \varphi^* \mb{1}_n$. 
\end{lemma}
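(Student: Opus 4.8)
The plan is to mirror the argument of Theorem~\ref{thm_tree_DT}, with the delayed summation \eqref{eq_sol_efm_delay}--\eqref{eq_sol_efm_delay2} in place of the synchronous one, and to exploit the anti-symmetry that survives because each link-term is matched up at the \emph{same} delay $r$ on both endpoints. First I would establish feasibility. Summing \eqref{eq_sol_efm_delay} over $i$, we get
\begin{align} \nonumber
	\sum_{i=1}^n x_i(k+1) = \sum_{i=1}^n x_i(k) - T \sum_{i=1}^n \sum_{j \in \mc{N}_i} \sum_{r=0}^{\overline{\tau}} W_{ij}\, g(\mc{D}_{ij}(k-r))\, \mc{I}_{k-r,ij}(r),
\end{align}
and similarly for \eqref{eq_sol_efm_delay2} with $\mc{D}^g_{ij}$. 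The key observation (Remark~\ref{rem_delay2}, together with Assumption~\ref{ass_delay}(iv) $\tau_{ij}=\tau_{ji}$ and Assumption~\ref{ass_G}(a) $W_{ij}=W_{ji}$) is that for each fixed $r$ the indicator is symmetric, $\mc{I}_{k-r,ij}(r)=\mc{I}_{k-r,ji}(r)$, while $g(\mc{D}_{ij}(k-r))=-g(\mc{D}_{ji}(k-r))$ and $\mc{D}^g_{ij}(k-r)=-\mc{D}^g_{ji}(k-r)$ by oddness of $g$ and the definition of $\mc{D},\mc{D}^g$. Hence each unordered pair $\{i,j\}$ contributes two cancelling terms, so the double sum vanishes and $\mb{x}^\top\mb{1}_n$ is invariant; feasibility $\mb{x}(k)\in\mc{S}_b$ for all $k$ follows from $\mb{x}_0\in\mc{S}_b$.

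For uniqueness of the equilibrium I would argue exactly as in Theorem~\ref{thm_tree}. Suppose $\mb{x}^*$ is invariant under \eqref{eq_sol_efm_delay} (resp.\ \eqref{eq_sol_efm_delay2}) but $\nabla F(\mb{x}^*)\neq\varphi^*\mb{1}_n$. Pick $\alpha,\beta$ as the argmax/argmin of the gradient entries as in \eqref{eq_beta0}; by uniform connectivity over the window $B$ (Assumption~\ref{ass_G}(b)) there is a path in $\mc{G}_B$ from $\alpha$ to $\beta$, hence a node $\overline{\alpha}$ (resp.\ $\overline{\beta}$) whose gradient dominates (resp.\ is dominated by) all its neighbours in $\mc{G}_B$, with strict inequality for at least one neighbour. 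By Assumption~\ref{ass_delay}(i) every such link delivers at least one (delayed) packet within $\overline{\tau}+1\le B$ steps, and by $xg(x)>0$ in Assumption~\ref{ass_gW} the corresponding increments satisfy $x^*_{\overline{\alpha}}(k+1)-x^*_{\overline{\alpha}}(k)<0$ and $x^*_{\overline{\beta}}(k+1)-x^*_{\overline{\beta}}(k)>0$ at some step within the window --- contradicting invariance. Here I would emphasise that the delay only \emph{reschedules} which step a given link's positive/negative contribution appears in, so the sign argument over a whole $B$-window is unchanged. Finally, Lemma~\ref{lem_unique_feasible} pins down that the unique such $\mb{x}^*$ lying in $\mc{S}_b$ is exactly the optimizer, giving the stated form $\nabla F(\mb{x}^*)=\varphi^*\mb{1}_n$.

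The main obstacle I anticipate is bookkeeping the delayed indicator sums cleanly enough that the pairwise cancellation is manifest: one must be careful that the term ``received by agent $i$ from $j$ at step $k$'' (the pair $\mc{D}_{ij}(k-r)\mc{I}_{k-r,ij}(r)$) is genuinely matched by the same-delay term at agent $j$, which is precisely what Assumption~\ref{ass_delay}(iv) and Remark~\ref{rem_delay2} are there to guarantee --- without symmetric delays one would instead have to pad to $\max\{\tau_{ij},\tau_{ji}\}$ on both sides (as noted in Remark~\ref{rem_delay}) to recover anti-symmetry. A secondary subtlety is the uniqueness argument under switching networks: one needs $(\overline{\tau}+1)T<B$ so that no link disappears before its delayed packet arrives, ensuring $\mc{G}^{\tau}_B$ still contains the spanning tree of $\mc{G}_B$ (Remark~\ref{rem_delay_switch}); I would state this as the standing hypothesis carried over from Assumption~\ref{ass_G}. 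Convergence itself (the rate and the step-size bound) I would defer, as the statement does, to the delayed-Lyapunov analysis; the present lemma asserts only feasibility and uniqueness, both of which follow from the structural (anti-symmetry plus connectivity) arguments above.
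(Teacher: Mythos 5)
Your proposal is correct and follows essentially the same route as the paper: feasibility via the pairwise cancellation enabled by $W_{ij}=W_{ji}$, the anti-symmetry of $g(\mc{D}_{ij})$ and $\mc{D}^g_{ij}$, and the symmetric indicator $\mc{I}_{k-r,ij}(r)=\mc{I}_{k-r,ji}(r)$; and uniqueness by the argmax/argmin contradiction of Theorem~\ref{thm_tree} transported to the delayed, uniformly connected setting. Your expanded discussion of why the delay merely reschedules each link's signed contribution within a $B$-window, and of the standing requirement $(\overline{\tau}+1)T<B$, makes explicit what the paper only references via Theorems~\ref{thm_tree_DT}, \ref{thm_tree_delayI} and Remark~\ref{rem_delay_switch}.
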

\begin{proof}
	Following \eqref{eq_sol_efm_delay} and \eqref{eq_sol_efm_delay2}, we get
	\begin{align} \nonumber
		\sum_{i=1}^n &x_i(k+1) = \sum_{i=1}^n x_i(k) \\
		&-\sum_{i=1}^n T \sum_{j \in \mc{N}_i}  \sum_{r=0}^{\overline{\tau}} W_{ij} g(\mc{D}_{ij}(k-r)) \mc{I}_{k-r,ij}(r). \\ \nonumber
		\sum_{i=1}^n &x_i(k+1) = \sum_{i=1}^n x_i(k) \\
		&-\sum_{i=1}^n T \sum_{j \in \mc{N}_i}  \sum_{r=0}^{\overline{\tau}} W_{ij} \mc{D}^g_{ij}(k-r) \mc{I}_{k-r,ij}(r). 
	\end{align}	
	For any link $(i,j)$ and $(j,i)$ in $\mc{G}(k)$, from Assumptions~\ref{ass_G}, \ref{ass_delay}, and \ref{ass_gW} we have $W_{ij}=W_{ji}$, $\mc{D}^g_{ij}(k-r)=-\mc{D}^g_{ji}(k-r)$,  $g(\mc{D}_{ij}(k-r))=-g(\mc{D}_{ji}(k-r))$, and  $\mc{I}_{k-r,ij}(r)=\mc{I}_{k-r,ji}(r)$ for $0 \leq r \leq \overline{\tau}$. This implies that,
	\begin{align} \nonumber
		\sum_{i=1}^n T \sum_{j \in \mc{N}_i}  \sum_{r=0}^{\overline{\tau}} W_{ij} g(\mc{D}_{ij}(k-r)) \mc{I}_{k-r,ij}(r)=0,\\ \nonumber
		\sum_{i=1}^n T \sum_{j \in \mc{N}_i}  \sum_{r=0}^{\overline{\tau}} W_{ij} \mc{D}^g_{ij}(k-r) \mc{I}_{k-r,ij}(r)=0,
	\end{align}	
	and therefore,
	$\sum_{i=1}^n x_i(k+1) = \sum_{i=1}^n x_i(k)$
	for all $k \geq 1$. Therefore, initializing from $\mb{x}_0 \in \mc{S}_b$ we have $\mb{x}(k) \in \mc{S}_b$ under both \eqref{eq_sol_efm_delay} and \eqref{eq_sol_efm_delay2}. This proves the feasibility. The proof of uniqueness follows similar reasoning as in the proof of Theorem~\ref{thm_tree_DT} and~\ref{thm_tree_delayI} over the uniformly connected network $\mc{G}_{B}(t)$ (or $\mc{G}_{B^{\tau}}^{\tau}(t)$ in Remark~\ref{rem_delay_switch}). 
\end{proof}

\begin{remark} \label{rem_delay_1slot}
	Case I is more practical than Case II in applications with low capacity/buffer at the computing nodes.
	This is because at time $k=\overline{k}$ 
	node $i$ sends (and receives) one message to (and from) the nodes $\mc{N}_i$. 
	Note that, for large delays (and the same $T$), this solution converges (although with a low rate) while Case II may not necessarily converge. 
	Further, for unknown (or large) $\overline{\tau}$, Case II requires a high-capacity memory/buffer at the nodes to record the previous information (on the gradients). 
	In case of limited ($m$-slot) memory/buffer, the nodes may record and use a portion (last $m$) of the previous states instead, and discard the rest before $k-m$ (since they are time-stamped). This simply implies losing some links over $\mc{G}_{\overline{B}}^{\tau}$
	\footnote{Convergence over lossy networks, following from Assumption~\ref{ass_G} and remark~\ref{rem_wb_removal}, is another promising direction of our future research.} and follows from Remark~\ref{rem_wb_removal} and~\ref{rem_delay_switch} and Assumption~\ref{ass_G}.
\end{remark}

\begin{theorem} [Sufficiency]
	\label{thm_converg_delay}
	Under Assumptions~\ref{ass_strict}-\ref{ass_gW}, and initializing from feasible state $\mb{x}_0 \in \mc{S}_b$,  the proposed protocols \eqref{eq_sol_efm_delay}-\eqref{eq_sol_efm_delay2} converge to the optimal solution of \eqref{eq_dra} for 
	$T(\overline{\tau}+1)<T_\lambda$.
\end{theorem}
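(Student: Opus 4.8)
The plan is to re-run the Lyapunov/descent argument of Section~\ref{sec_rate} over a sliding window of length $\overline{\tau}+1$, so that the step $T$ appearing in the delay-free bound \eqref{eq_Trange} is replaced throughout by $T(\overline{\tau}+1)$. Feasibility and uniqueness of the equilibrium are already in hand from Lemma~\ref{lem_feasible_intime_delay}, so it only remains to prove asymptotic convergence to $\mb{x}^*$. I would take $\overline{F}(k) := F(\mb{x}(k)) - F^*$ as the Lyapunov function; it is nonnegative because every iterate is feasible (Lemma~\ref{lem_feasible_intime_delay}) and $\mb{x}^*$ minimizes $F$ over $\mc{S}_b$, and by the descent estimate \eqref{eq_cortes2_1} of Lemma~\ref{lem_cortes},
\[
\overline{F}(k+1) \le \overline{F}(k) + \nabla F(k)^\top \Delta\mb{x}(k) + u\,\|\Delta\mb{x}(k)\|_2^2 ,
\]
with $\Delta\mb{x}(k) = -T\sum_{r=0}^{\overline{\tau}} \mc{L}^{(r)}_g(k)$, where $\mc{L}^{(r)}_g(k)$ collects, per node $i$, the terms $\sum_{j\in\mc{N}_i} W_{ij}\,g(\mc{D}_{ij}(k-r))\,\mc{I}_{k-r,ij}(r)$, i.e., the contribution of the links whose packet reaches $i$ at step $k$ with delay exactly $r$. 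By Assumption~\ref{ass_delay}(iii)--(iv) each partial interaction graph $\mc{G}^{(r)}(k)$ is undirected with symmetric weights, so $\mc{L}^{(r)}_g(k)$ carries exactly the antisymmetric structure used in Lemma~\ref{lem_sum} (this is also what kept feasibility in Lemma~\ref{lem_feasible_intime_delay}).

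The key step is to peel off the stale-gradient mismatch. Writing $\partial f_i(k) = \partial f_i(k-r) + \big(\partial f_i(k)-\partial f_i(k-r)\big)$ inside $\nabla F(k)^\top \mc{L}^{(r)}_g(k)$, the on-time part satisfies, by Lemma~\ref{lem_sum} and the lower sector bound $xg(x)\ge\varepsilon x^2$ of Assumption~\ref{ass_gW} together with Corollary~\ref{cor_xLy},
\[
\nabla F(k-r)^\top \mc{L}^{(r)}_g(k) \;\ge\; \varepsilon\, \overline{\nabla F}(k-r)^\top \mc{L}^{(r)}(k)\, \overline{\nabla F}(k-r) \;\ge\; 0 ,
\]
with $\mc{L}^{(r)}(k)$ the linear Laplacian of $\mc{G}^{(r)}(k)$ and $\overline{\nabla F}$ the gradient-dispersion vector of Lemma~\ref{lem_cortes}. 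This is the negative-definite term driving the decrease; summing the $\mc{L}^{(r)}(k)$ over an $(\overline{\tau}+1)$-window recovers, via Remark~\ref{rem_delay_switch}, a spanning-tree union graph whose Fiedler value lower-bounds the aggregate by $\varepsilon\lambda_2\|\overline{\nabla F}\|_2^2$, exactly as \eqref{eq_laplace} is used in the delay-free derivation. The mismatch part is controlled by $|\partial f_i(k)-\partial f_i(k-r)| \le 2u\,|x_i(k)-x_i(k-r)| \le 2u\sum_{s=k-r}^{k-1}|x_i(s{+}1)-x_i(s)|$ from Assumption~\ref{ass_lips_strict}, combined with the upper sector bound $|g(\cdot)|\le K_g|\cdot|$ and Lemma~\ref{lem_xLy}; since at most $\overline{\tau}+1$ delay bins contribute and each $\|\Delta\mb{x}(s)\|_2$ over the window is itself $O(T\,\|\overline{\nabla F}\|_2)$, every mismatch term carries a factor $T$ and a factor $(\overline{\tau}+1)$. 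One further $(\overline{\tau}+1)$ comes from $\|\Delta\mb{x}(k)\|_2^2 \le (\overline{\tau}+1)T^2\sum_{r}\|\mc{L}^{(r)}_g(k)\|_2^2 \le (\overline{\tau}+1)T^2K_g^2\lambda_n^2\sum_{r}\overline{\nabla F}(k-r)^\top\overline{\nabla F}(k-r)$ by Cauchy--Schwarz and Lemma~\ref{lem_xLy}. Collecting terms and replacing $\overline{\nabla F}(k-r)$ by $\overline{\nabla F}(k)$ up to $O(T(\overline{\tau}+1))$ Lipschitz corrections, the window increment reduces --- exactly as in \eqref{eq_lambda_rate0}--\eqref{eq_FF} --- to $\overline{F}(k+\overline{\tau}+1)\le\overline{F}(k) - c\,\|\overline{\nabla F}(k)\|_2^2$ with $c>0$ precisely when $T(\overline{\tau}+1) < T_\lambda = \varepsilon\lambda_2/(uK_g^2\lambda_n^2)$; for $B$-connected (rather than all-time-connected) networks one additionally aggregates over the window $B$, replacing $T(\overline{\tau}+1)$ by $T(\overline{\tau}+1)B$ and invoking the delayed union graph of Remark~\ref{rem_delay_switch}.

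Convergence then follows from the usual sandwich: $\{\overline{F}(k)\}$ is nonnegative and nonincreasing, hence convergent, so the telescoped increments are summable and force $\|\overline{\nabla F}(k)\|_2\to 0$; strong convexity and \eqref{eq_4v} then give $\overline{F}(k)=F(\mb{x}(k))-F^*\to 0$, and by Lemma~\ref{lem_unique_feasible} the feasible iterates converge to the unique optimizer $\mb{x}^*$ with $\nabla F(\mb{x}^*)=\varphi^*\mb{1}_n$. The main obstacle is the middle paragraph: handling the heterogeneous, time-varying staleness of the gradients fed into $g(\cdot)$ without breaking the antisymmetric structure that guarantees feasibility, and carefully tracking the two separate sources of the $(\overline{\tau}+1)$ factor --- the Cauchy--Schwarz split of $\|\Delta\mb{x}(k)\|_2^2$ over at most $\overline{\tau}+1$ delay bins, and the Lipschitz estimate of the gradient drift over the last $\overline{\tau}$ steps --- so that they combine into exactly the stated condition $T(\overline{\tau}+1)<T_\lambda$ rather than a more conservative one.
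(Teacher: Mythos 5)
Your route is genuinely different from the paper's, and considerably more ambitious. The paper's own proof is a short counting argument on the indicator functions: for time-invariant delays $\sum_{r=0}^{\overline{\tau}}\mc{I}_{k-r,ij}(r)=1$, so exactly one packet per link is processed per step and the delay-free bound $T<T_\lambda$ carries over unchanged; for time-varying delays up to $\overline{\tau}+1$ packets can pile up on a single link at one step, so $\Delta\mb{x}$ is inflated by at most a factor $\overline{\tau}+1$ and $T$ must be down-scaled by the same factor, giving $T(\overline{\tau}+1)<T_\lambda$. The paper never decomposes the update into delay bins, never confronts the staleness of the gradients fed into $g(\cdot)$, and never re-aggregates the per-bin Laplacians over a window --- it implicitly treats $\mc{D}_{ij}(k-r)$ as if it were $\mc{D}_{ij}(k)$. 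Your proposal is what a fully rigorous version of this theorem would actually require, and your identification of the two distinct sources of the $(\overline{\tau}+1)$ factor (packet pile-up in $\|\Delta\mb{x}\|^2$ versus gradient drift over the last $\overline{\tau}$ steps) is a real insight that the paper does not articulate.

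That said, your middle paragraph does not close, and you say so yourself. Two concrete issues. First, the per-bin descent terms $\overline{\nabla F}(k-r)^\top\mc{L}^{(r)}(k)\,\overline{\nabla F}(k-r)$ are quadratic forms evaluated at \emph{different} gradient vectors for different $r$, so you cannot simply sum the $\mc{L}^{(r)}(k)$ into a union Laplacian and invoke \eqref{eq_laplace} with its Fiedler value; each individual $\mc{G}^{(r)}(k)$ may be disconnected (indeed $\lambda_2(\mc{G}^{(r)}(k))=0$ generically), so the negative term is only recovered after replacing every $\overline{\nabla F}(k-r)$ by $\overline{\nabla F}(k)$, and the resulting Lipschitz correction is itself of the same order as the descent term you are trying to preserve. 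Second, once those corrections and the stale-gradient mismatch $|\partial f_i(k)-\partial f_i(k-r)|\le 2u\sum_s|x_i(s+1)-x_i(s)|$ are carried through, the sufficient condition you obtain will generically read $T(\overline{\tau}+1)\,C < T_\lambda$ for some constant $C>1$ depending on $u$, $K_g$, $\lambda_n$ and $\overline{\tau}$ --- strictly more conservative than the clean bound in the statement. So as written your argument proves a theorem of the same form but not the stated constant; to recover exactly $T(\overline{\tau}+1)<T_\lambda$ you would have to either absorb the corrections by a genuinely sharper estimate (which you have not exhibited) or fall back on the paper's cruder scaling heuristic, which obtains the clean constant only because it ignores the staleness issue altogether.
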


\begin{proof}
	First, consider homogeneous delays $\tau_{ij}=\overline{\tau}$, where agents' states at any time-step $k$ get updated next at $k+\overline{\tau}+1$ and every $\overline{\tau}+1$ steps afterwards (see Fig.~\ref{fig_timescale}). For this case, following Theorem~\eqref{thm_tree_DT}, $T<T_\lambda$ ensures the convergence.  
	For general heterogeneous time-varying delays, $\Delta \mb{x}$ needs to be scaled by $\overline{\tau}+1$, and accordingly, $T_\lambda$ is down-scaled by $\overline{\tau}+1$ to guarantee convergence. 
	Consider two cases: (i) for time-invariant delays $\sum_{r=0}^{\overline{\tau}} \mc{I}_{k-r,ij}(r)=1$ in \eqref{eq_sol_efm_delay}, implying that \textit{only one} delayed packet is received from every $j \in \mc{N}_i$. This implies the same bound as $T<T_\lambda$.
	For time-varying delays, from~\eqref{eq_indicator},
	$0 <  \sum_{r=0}^{\overline{\tau}} \mc{I}_{k-r,ij}(r) \leq (\overline{\tau}+1)$.
	This implies that $\Delta \mb{x}$ is scaled by $\overline{\tau}+1$ and accordingly, from \eqref{eq_lem_efm1}, $T$ needs to be down-scaled by $\overline{\tau}+1$, i.e., $T(\overline{\tau}+1) < T_{\lambda}$.
	The proof for dynamics \eqref{eq_sol_efm_delay2} similarly follows.
\end{proof}
For time-varying but equi-probable delays $r = 0,1,\dots,\overline{\tau}$ in~\eqref{eq_indicator}, one can claim that (in average) over $\overline{\tau}+1$ trials node $i$ receives  one message from $j \in \mc{N}_i$ (at every $k$) and the bound is $T < T_{\lambda}$.

Communication time-delay over the network is not addressed by the existing solutions \cite{boyd2006optimal,gharesifard2013distributed,shames2011accelerated,doan2017scl,doan2017ccta,nedic2018improved,yi2016initialization,yang2013consensus,wang2018distributed,cherukuri2015distributed,lakshmanan2008decentralized}. However, latency is a common issue in many multi-agent systems including the distributed resource allocation setup. Therefore, most existing works in the literature may not work properly in the presence of time-delays in the data-transmission network. The mentioned literature may lose resource-demand feasibility and/or result in some optimality gap in the presence of time-delays. In this aspect, our proposed delay-tolerant model advances  the state-of-the-art and provides solutions that can be implemented in practice.



\section{Simulation in Distributed Scheduling Setup} \label{sec_qra}

In this section, we consider resource allocation over the power generation networks and the smart grid, known as the \textit{economic dispatch problem} (EDP) \cite{amjady2009economic,chen2016distributed,cherukuri2015distributed}. The objective is to allocate optimal power outputs to the electricity generators to supply the \textit{load demand} $D$ (in MW) and to minimize the operating costs. In   \cite{wood1996power,yang2013consensus,chen2016distributed,yi2016initialization,Kar6345156} this cost function is given as,
\begin{align} \label{eq_f_quad}
	F(\mb{x}) =  \sum_{i=1}^n \gamma_i {x}_i^2+ \beta_i {x}_i + \alpha_i,~\sum_{i=1}^n {x}_i = D, m_i \leq {x}_i \leq M_i,
\end{align}
where ${x}_i \in \mathbb{R}$ is the output power at generator $i$. To include the box constraints, penalty terms are considered \cite{doostmohammadian20211st}. 
The parameters in the cost function \eqref{eq_f_quad} are defined based on the type of the power generators (coal-fired, oil-fired, nuclear, etc.), for example, see Table~\ref{tab_par}. The other parameter values are: $\alpha_i = 0$ and $m_i = 20$. The min and max RRL values given in \cite{git_matpower} are equal to $1$ and $3$ $MW/min$ (for oil/coal-type generators).  

\begin{table} [hbpt!]
	\centering
	\caption{Parameters of the generator cost (in $\$$) 
		and its maximum power (in $MW$) for different types  \cite{wood1996power,yang2013consensus,Kar6345156}}
	
	\small
	\begin{tabular}{|c|c|c|c|}
		\hline
		\hline
		Type& $M_i$ & $\beta_i$ & $\gamma_i$  \\
		\hline
		A & 80 & 2.0 & 0.04 \\
		\hline
		B& 90 & 3.0& 0.03\\
		\hline
		C & 70 & 4.0& 0.035 \\
		\hline
		D& 70 & 4.0& 0.03 \\
		\hline
		E & 80 & 2.5& 0.04 \\
		\hline 		\hline		
	\end{tabular} \normalsize
	\label{tab_par}
\end{table}

In a more complicated setup, such a quadratic cost model is defined to adjust the power-demand mismatch over the grid (e.g., due to generator outage), known as the \textit{Automatic Generation Control (AGC)} problem \cite{doostmohammadian20211st}. Given a known power mismatch, the idea is to allocate enough power to the generators to compensate for it while
minimizing the power deviation cost. In this setup, the generators are subject to an extra physical constraint, known as ramp-rate-limit (RRL). This implies that the rate of increase or decrease of the produced power is constrained within certain limits and the generators cannot freely speed up/down their power generation. 
Such nonlinear constraints are a determining factor on the stability of the grid \cite{hiskens_agc}. The solutions of the existing linear methods \cite{boyd2006optimal,gharesifard2013distributed,cherukuri2015distributed} may assign any rate of change $\dot{\mb{x}}_i$ to the power generators. Thus, they cannot address RRLs,  which, in reality, the solutions cannot be followed by the generators. This may fail the feasibility and lead to sub-optimality. In contrast, considering the nonlinearity $g(\cdot)$ as saturation function \eqref{eq_sat}, the proposed nonlinear solutions in this work can address RRL constraints where the limits can be tuned by the parameter $\kappa$. A comparing example is given next.

\subsection{Allocation with No Delay}
Assume $n=50$ generators with the supply-demand constraint $D=3200~MW$. Initially, it is considered that $x_i=\frac{D}{n}=64~MW$ at every generator $i$. The parameters in~\eqref{eq_f_quad} are randomly set from Table~\ref{tab_par} using MATLAB \texttt{randi} function.
For power allocation, as an academic example, we consider protocol~\eqref{eq_sol_efm} with saturated nonlinearity
$g_\kappa(\cdot)$ with level $\kappa=\frac{1}{60}$ on the node dynamics and protocol \eqref{eq_sol_efm2} with $g_f(\cdot)$ as the sign-based function \eqref{eq_fixed} (with $\nu_1=0.4$, $\nu_2=1.6$) on the links. The network is considered as a random Erdos-Renyi graph (with link probability $p=0.2$) with random symmetric link weights $0.005 \leq W_{ij} \leq 0.025$. We provide a comparative simulation analysis to support our results. Both solutions are compared with some state-of-the-art solutions in the literature in Fig.~\ref{fig_compare} for $T=1$; namely: linear \cite{boyd2006optimal}, accelerated linear  \cite{shames2011accelerated}, finite-time  \cite{chen2016distributed}, and single-bit \cite{taes2020finite} solutions. For the single-bit protocol \cite{taes2020finite} we decreased the link weights by $80\%$ to reduce the chattering effect in Fig.~\ref{fig_rrl}. Note that the RRL constraint is only met by our saturated solution as shown in Fig.~\ref{fig_rrl}. In other words, in a real scenario the generators cannot follow the iterative solution by \cite{boyd2006optimal, shames2011accelerated, chen2016distributed,taes2020finite} since their power generation rates at some intervals violate the RRLs. This may either cause feasibility gap or optimality gap in real world. But our saturated-based solution admits the RRLs. The min RRL value equal to $1$ $MW/min$ (or $\frac{1}{60}$ $MW/sec$) is considered that meets the requirement by all the generators\footnote{Even though we used the same min RRL in this simulation, this parameter can be further tuned for different generators in dynamics \eqref{eq_sol}-\eqref{eq_sol2} via the link weights and the node degrees.}. 

\begin{figure}[]
	\centering
	\includegraphics[width=3.2in]{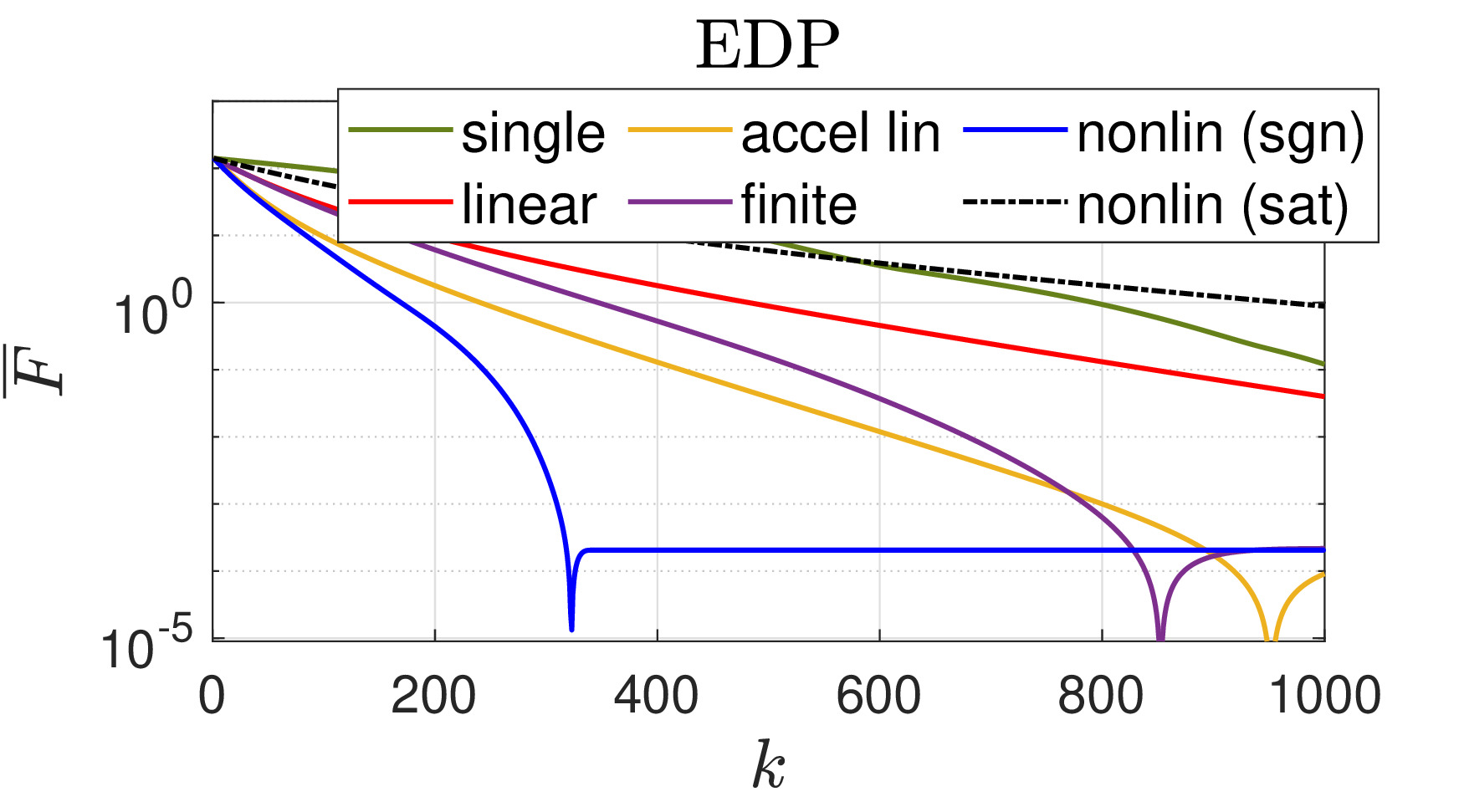}
	\caption{The performance of our node-based protocol~\eqref{eq_sol_efm} subject to saturation and link-based protocol~\eqref{eq_sol_efm2} with sign-based nonlinearity compared with linear \cite{boyd2006optimal}, accelerated linear  \cite{shames2011accelerated}, finite-time  \cite{chen2016distributed}, and single-bit \cite{taes2020finite} protocols. }
	\label{fig_compare}
\end{figure}

\begin{figure}[]
	\centering
	\includegraphics[width=3.2in]{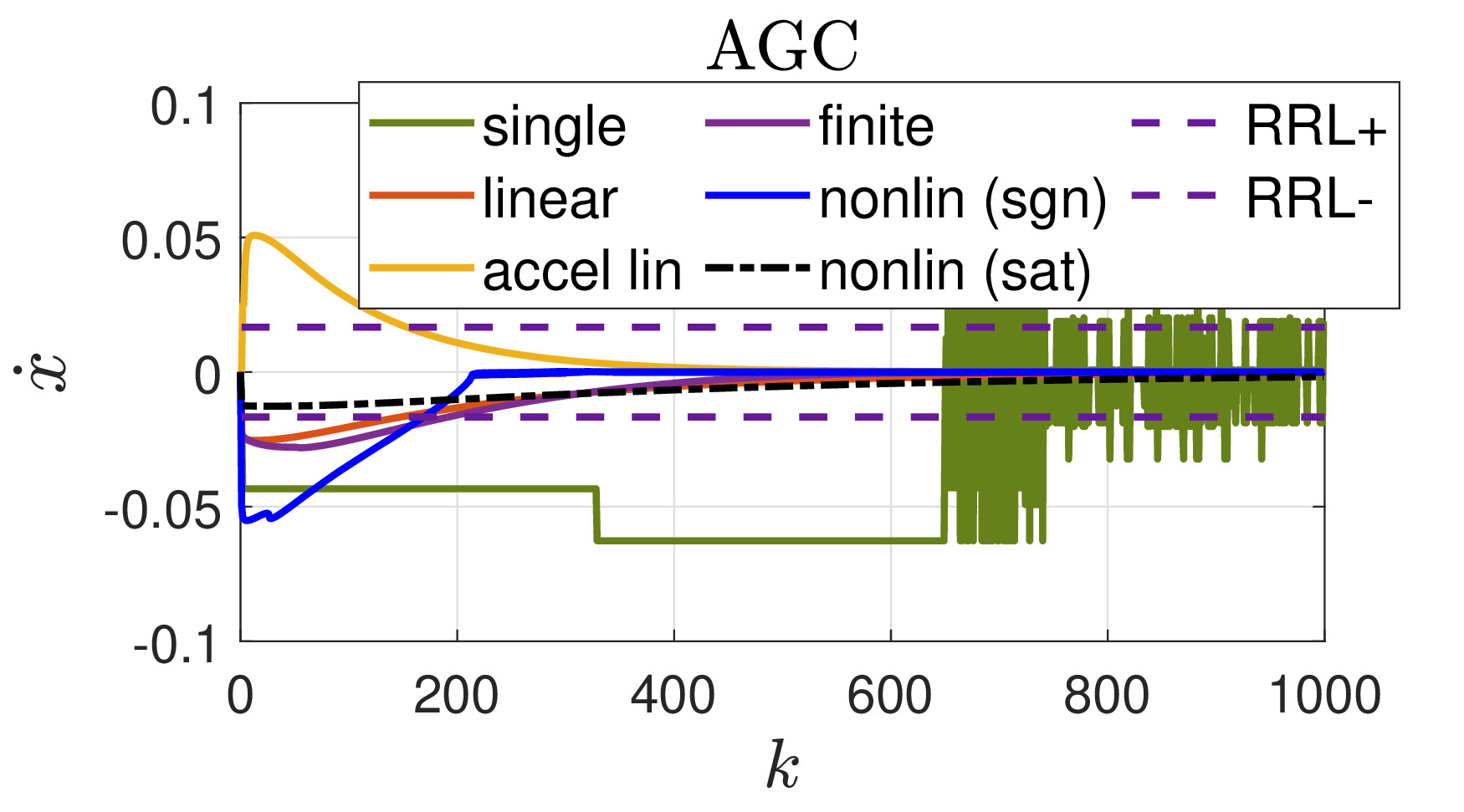}
	\caption{The rate of generated power $\dot{x}_i$ (in $MW/sec$) at a sample generator $i=$ are compared under different solutions. Only our proposed saturated nonlinear model admits the RRL constraints on the generator dynamics (shown by dashed lines). }
	\label{fig_rrl}
\end{figure}

Box constraints ${m_i \leq x_i \leq M_i}$  on the generators are considered by adding smooth penalty terms with $\sigma=2$. We modify the objective as $ f_i^{\sigma} = f_i(x_i) + c([x_i - M_i]^+)^2 + c([m_i- x_i]^+)^2$ with $c = 1$. The evolution of power states under the sign-based dynamics is shown in Fig.~\ref{fig_x_box}, where both feasibility and power limit constraints are met. Note that our given assumptions ensure that at every time step the deviated power at nodes on two sides of every link is balanced such that the feasibility constraint is satisfied (i.e., the generated power equals the demand) at all times. 
\begin{figure}[]
	\centering
	\includegraphics[width=3.2in]{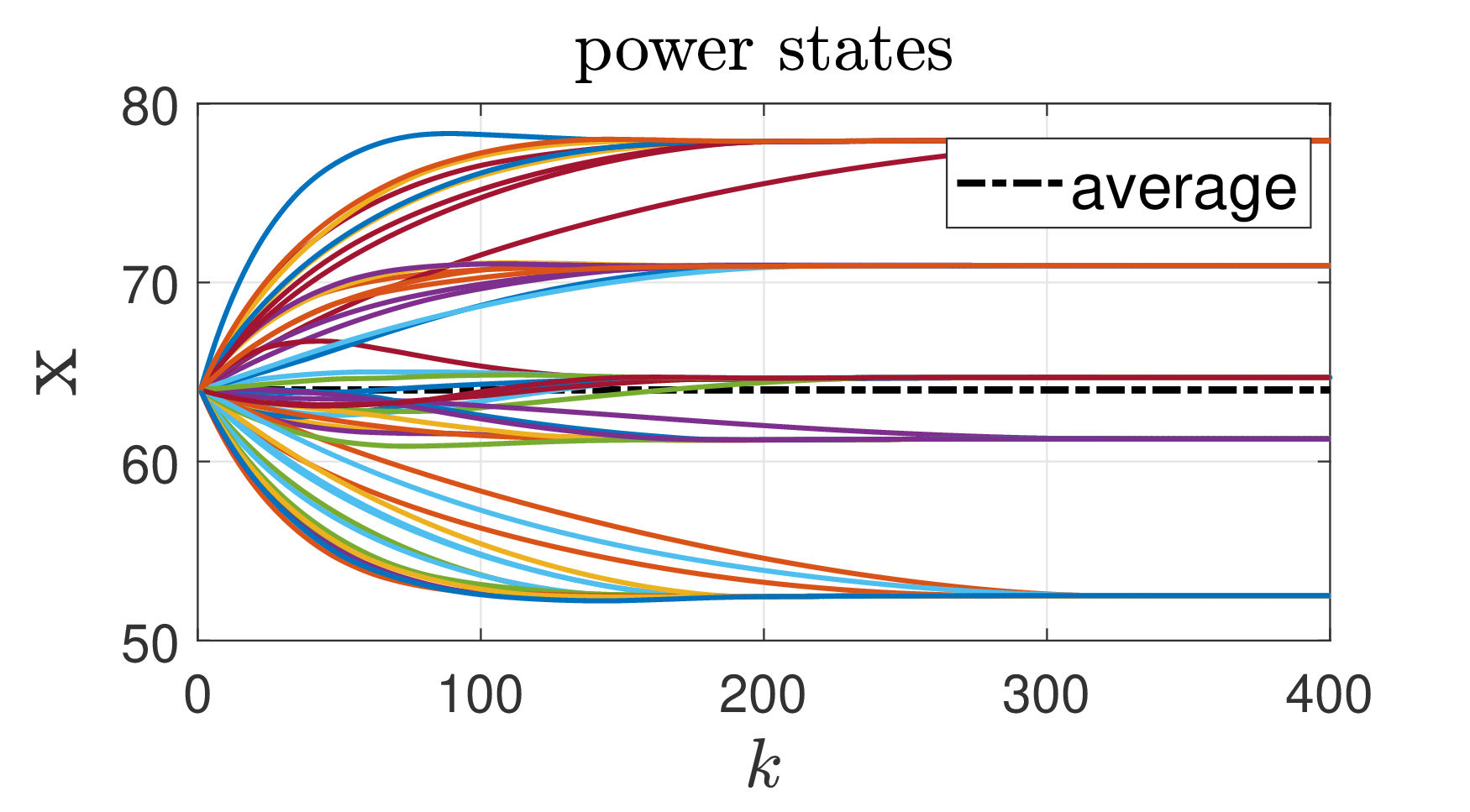}
	\includegraphics[width=3.2in]{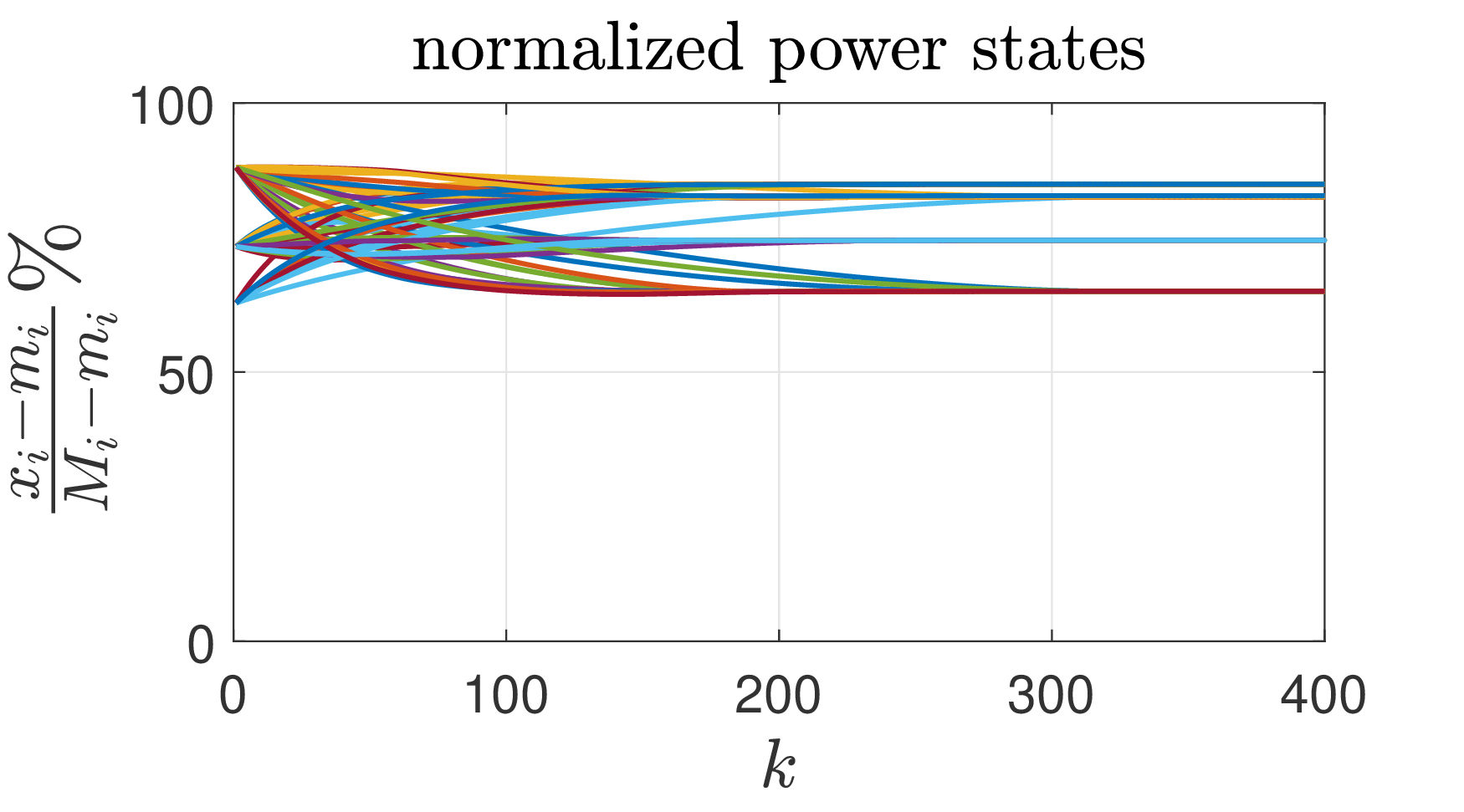}
	\caption{The allocated power states $x_i$ (in $MW$) under sign-based dynamics and box constraints on the output powers as ${m_i \leq x_i \leq M_i}$. The feasibility is checked by the average of output powers which is constant (black dashed line). The normalized allocated power states $x_i$ are also shown to verify the box constraints.}
	\label{fig_x_box}
\end{figure}

To compare the computational complexity of the algorithms, the  elapsed times (for one iteration) are given in Table~\ref{tab_time}. The simulation is done with MATLAB R2021b Intel Core i5 @ 2.4GHz processor RAM 8GB. 
\begin{table} [hbpt!]
	\centering
	\caption{The elapsed time (in ms) to run one iteration (in average) for each scenario in Fig.~\ref{fig_compare}. This table gives a measure to compare the computational complexity of the algorithms.}
	
	\small
	\begin{tabular}{|c|c|c|c|c|c|}
		\hline	\hline
		lin & accel  & single & finite   & sat & sgn \\
		\hline
		0.37 & 0.31 & 0.38 & 0.75 & 0.38 & 1.18 \\
		\hline 	\hline
	\end{tabular} \normalsize
	\label{tab_time}
\end{table}

In Table~\ref{tab_time2}, we compared the number of iterations to reach (a predefined) cost residual $\overline{F} = 1$. For example, within this range around the optimal value the solution is considered good enough. Note that our proposed sign-based solution, as compared to other solutions, although having more computational complexity converges very fast and within fewer number of iterations.

\begin{table} [hbpt!]
	\centering
	\caption{The number of iterations $k$ to reach  $\overline{F} = 1$.}
	\small
	\begin{tabular}{|c|c|c|c|c|c|}
		\hline	\hline
		lin & accel  & single & finite & sat & sgn \\
		\hline
		480 & 239 & 791 & 345 & 962 & 168  \\
		\hline 	\hline
	\end{tabular} \normalsize
	\label{tab_time2}
\end{table}

Next, we study the network density on the number of iterations to reach a certain residual (as the termination criteria of the algorithm). In Fig.~\ref{fig_resid}, this termination point is $\overline{F} = 0.01$ and the ER link probability is changed from $15\%$ to $45\%$. The link weights $W_{ij}$ are chosen randomly in the range $[0.02~0.12]$. The simulation is averaged over $5$ Monte-Carlo (MC) trials with $T=0.05$.

\begin{figure}[]
	\centering
	\includegraphics[width=1.6in]{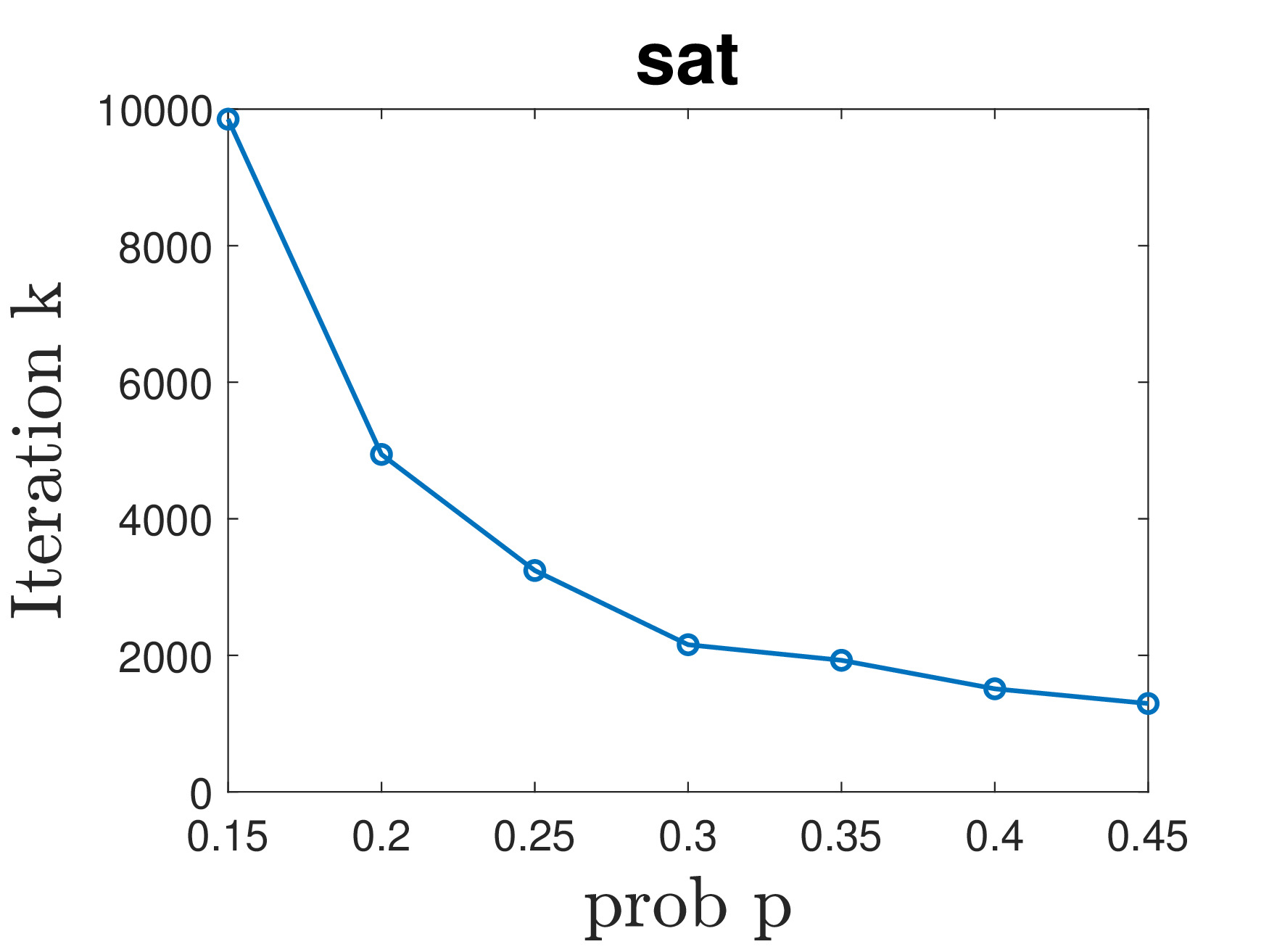}
	\includegraphics[width=1.6in]{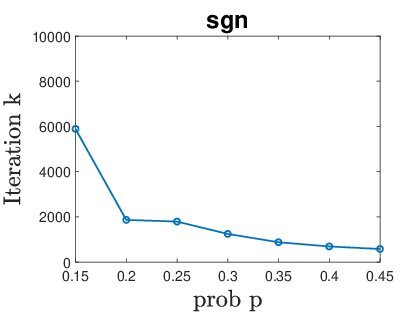}
	\caption{The number of iterations to reach residual $\overline{F} = 10^{-2}$ versus ER network connectivity tuned by the link probability $p$ for sat (level $\kappa=1$) and sgn  ($\nu_1=0.4$, $\nu_2=1.6$) nonlinearity. }
	\label{fig_resid}
\end{figure}

The convergence time/iterations to reach $\overline{F} = 10^{-2}$ versus network size $n$ is shown in Fig.~\ref{fig_resid_size}. The parameters are as follows: $p=30\%$ as the ER link probability, $W_{ij} \in [0.05~0.2]$, and $5$ MC trials.  Two different scenarios are considered for $T$: diminishing by size as $T=0.05(1-\frac{n-20}{100})$ and fixed step size $T=0.025$.

\begin{figure}[]
	\centering
	\includegraphics[width=1.53in]{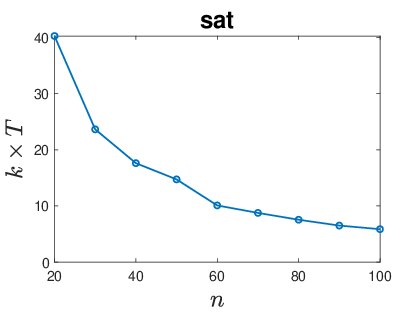}
	\includegraphics[width=1.53in]{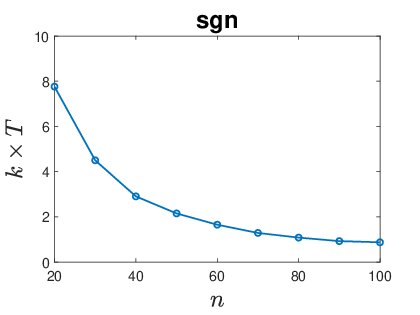}
	\includegraphics[width=1.6in]{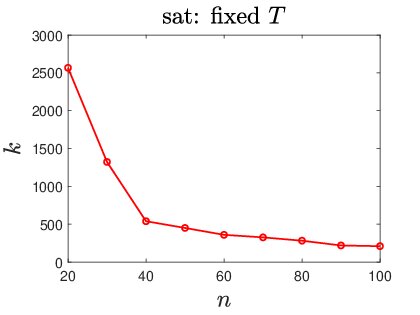}
	\includegraphics[width=1.6in]{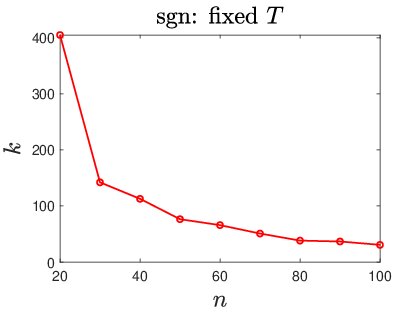}
	\caption{The convergence time/iterations to reach residual $\overline{F} = 0.01$  versus ER network size $n$ for sat (level $\kappa=2$) and sgn  ($\nu_1=0.5$, $\nu_2=1.5$) nonlinearity. (Top) diminishing step-sizes as size increases, (Bottom) fixed step-sizes. }
	\label{fig_resid_size}
\end{figure}

In Fig.~\ref{fig_resid_p_T}, for the same termination criteria $\overline{F} = 10^{-2}$, the number of iterations for different step sizes $T$ and link densities $p$ are given. As we see in the figure, the convergence iteration multiplied by the step size is almost constant irrespective of the network connectivity (and eigen-spectrum) and the step size $T$. Note that the $T$ values are chosen small enough for convergence, and for large values of $T$ violating Eq.~\eqref{eq_Trange} the solution may not necessarily converge. 

\begin{figure}[]
	\centering
	\includegraphics[width=1.6in]{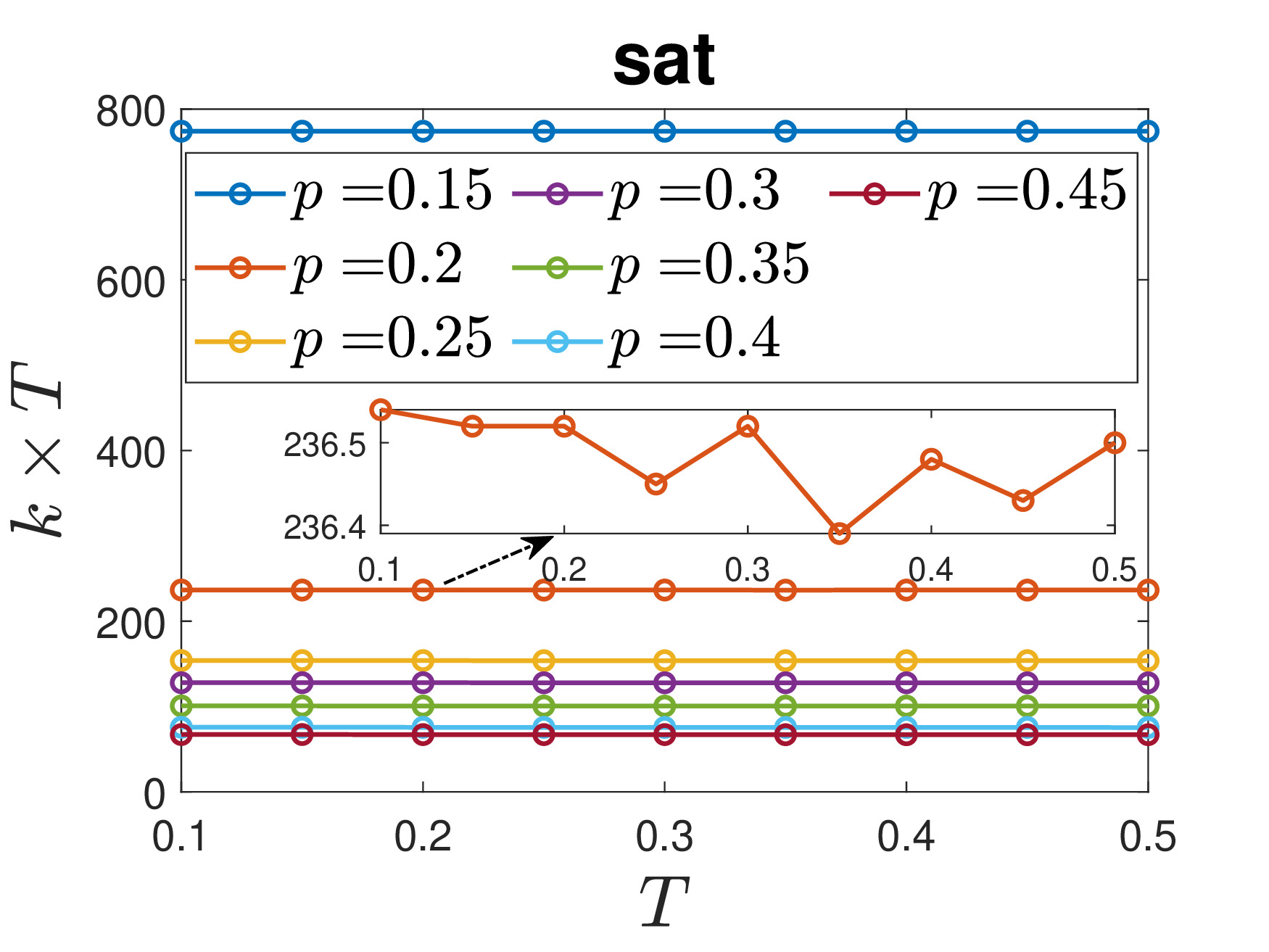}
	\includegraphics[width=1.6in]{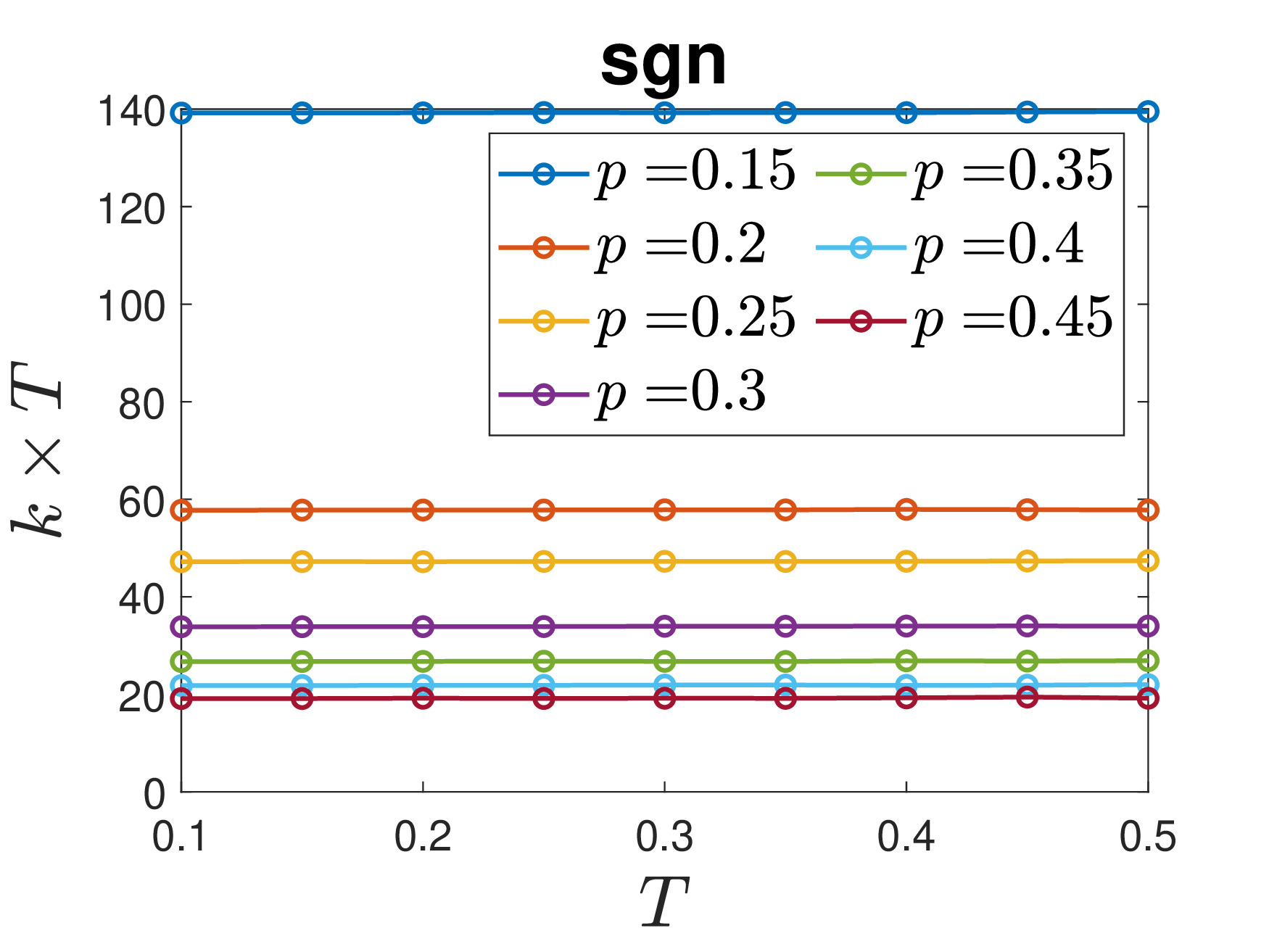}
	\caption{The convergence iteration$\times T$ to reach residual $\overline{F} = 10^{-2}$  versus step size $T$ for sat (level $\kappa=3$) and sgn  ($\nu_1=0.5$, $\nu_2=1.5$) nonlinearity. The connectivity of the ER network is changed via the link density $p$. }
	\label{fig_resid_p_T}
\end{figure}

\subsection{Allocation with Delayed Information-Exchange}
Next, we consider a cyclic network of $n=5$ generators under RRL limits  $\kappa=\frac{1}{60}$ $MW/sec$ in the presence of time delays. The data in Table~\ref{tab_par} is used which resembles IEEE 14-Bus test system \cite{Kar6345156}. The time-evolution of the generated powers and the Lyapunov function $\overline{F}$ (the residual) are simulated for random parameters.
For the quadratic cost \eqref{eq_f_quad}, $u=\max\{\gamma_i\} = 0.04$. We have $\varepsilon = 0.0166$, $K_g = 1$, $\lambda_2 = 1.38$, $T=1$ and $\lambda_n = 3.61$.
Using Eq.~\eqref{eq_Trange}, for any $T<0.045$ the solution converges in the absence of delays. This is a sufficient bound and to some extent not very tight.  

\textbf{Case I:}
In this case, using \eqref{eq_sol_efm_delay2_case}, we update the generator states at every $\overline{\tau}$ steps as in Section~\ref{sec_caseI}.  Following Remark~\ref{rem_delay_1slot}, we consider information exchange over the longer time-scale $\overline{k}$. 
The time-evolution of the Lyapunov function $\overline{F}$ (representing the residual) for some values of $\overline{\tau}$ is given in Fig.~\ref{fig_edp_caseI}. Clearly, the convergence rate decreases with the increase in the time delays.

\begin{figure}[]
	\centering
	\includegraphics[width=3.2in]{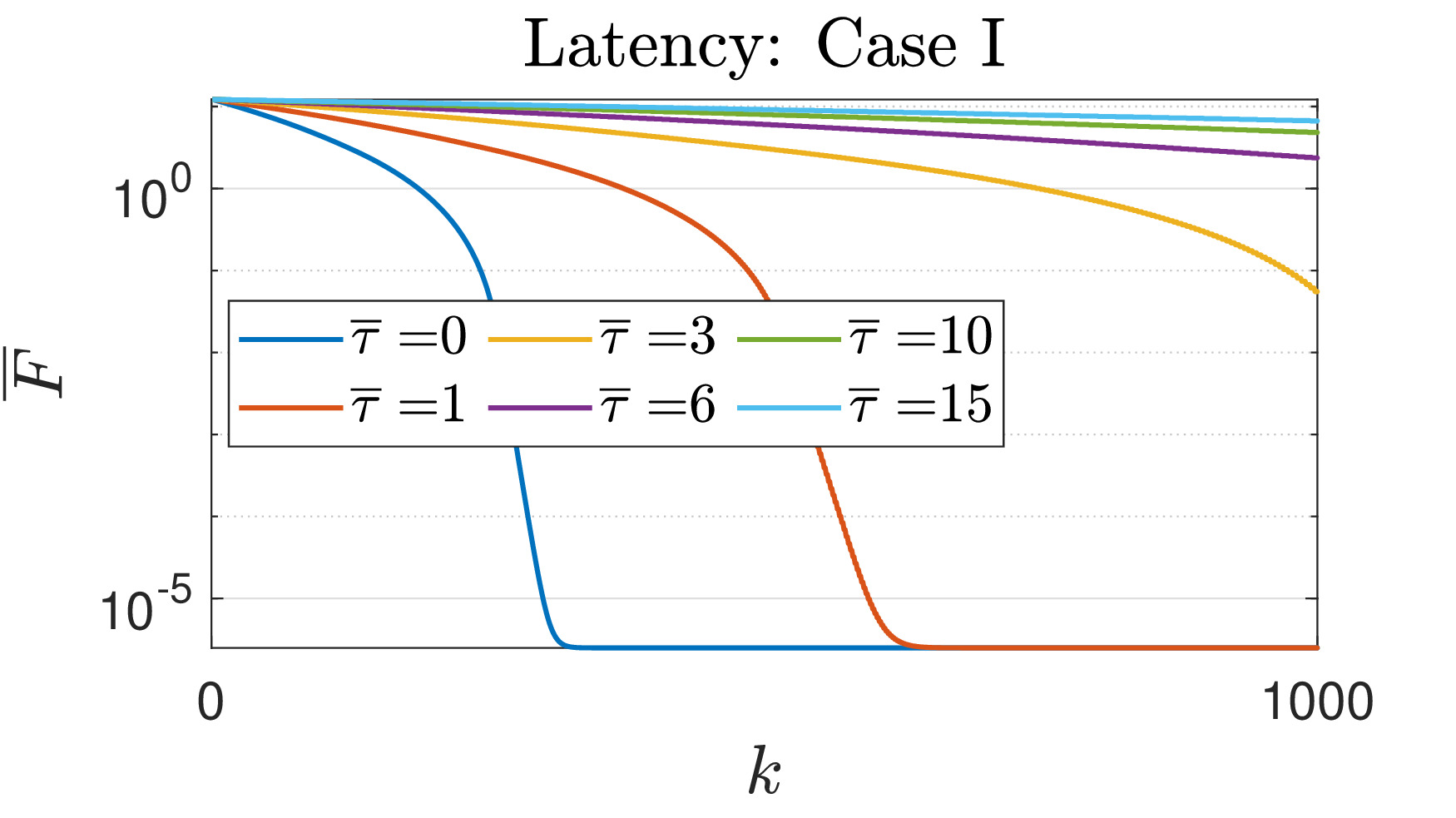}
	\caption{The residual $\overline{F}$ under protocol~\eqref{eq_sol_efm_delay2_case} (Case I) subject to  saturation and time-delays. This method can handle large max delays $\overline{\tau}$, even though the convergence rate is low. }
	\label{fig_edp_caseI}
\end{figure}

\textbf{Case II:}   
In this case, we update the generator states at every step based on all the received (delayed and non-delayed) information using \eqref{eq_sol_efm_delay2} in Section~\ref{sec_caseIII}. 
From Theorem~\ref{thm_converg_delay}, \textit{sufficient} condition for convergence is $T(\overline{\tau}+1) < 0.045$. For this simulation, however, we see that solution converges  $\overline{\tau} \leq 3$. We perform the simulation for both heterogeneous time-varying and time-invariant (fixed) delays as shown in Fig.~\ref{fig_edp_caseII}. 
\begin{figure}[]
	\centering
	\includegraphics[width=3.2in]{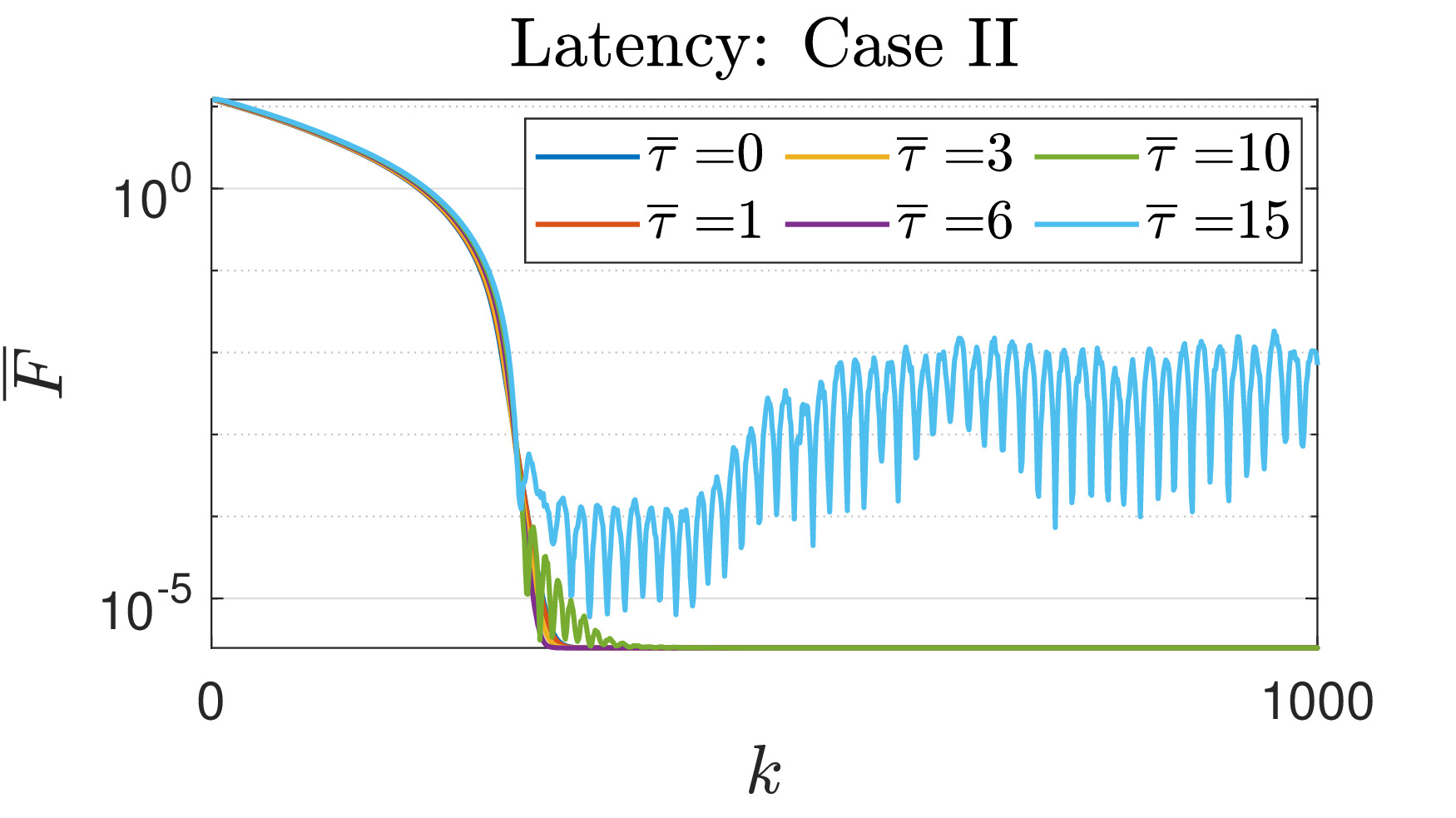}
	\includegraphics[width=3.2in]{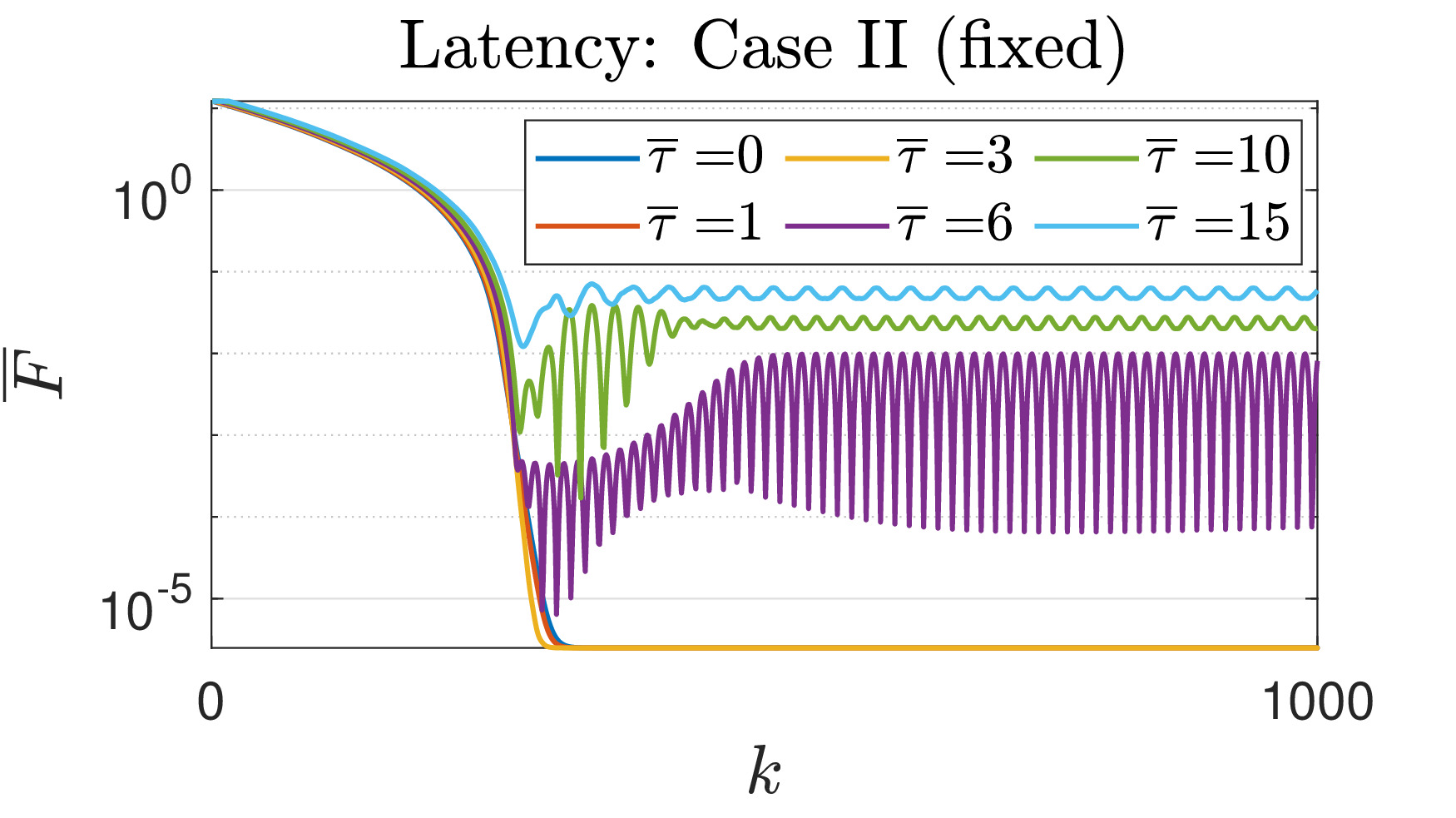}
	\caption{The residual $\overline{F}$ under DT protocol~\eqref{eq_sol_efm_delay2} under Case II with (top) time-varying and (bottom) time-invariant delays. For small $\overline{\tau}$ the convergence is faster than Case I, but for large $\overline{\tau}$  this approach may not converge (e.g., $\overline{\tau} = 15$).}
	\label{fig_edp_caseII}
\end{figure}

\section{Simulation of Non-smooth Quantized Optimization over Uniformly Connected Networks} \label{sec_sim}

In this section, we run the simulations over a random  \textit{dynamic} network of $n=100$ agents. The network is not connected at any time, while it is B-connected over every $B=100$ iterations (Assumption~\ref{ass_G}), i.e.,  $\bigcup_{k}^{k+100}\mc{G}(k)$ is connected. We consider strongly-convex logarithmic functions $f_i(\cdot)$ defined as \cite{boyd2006optimal},
\begin{align} \label{eq_F2}
	f_i(x_i) = \frac{1}{2}\alpha_i(x_i-\gamma_i)^2 + \zeta \log(1+\exp(\beta_i(x_i-\eta_i))),
\end{align}
with  the  coefficients randomly chosen as $0<\alpha_i<0.2$, $-0.2<\beta_i<0.2$, $-0.3<\gamma_i<0.3$, $0<\eta_i<0.6$, $\zeta = 0.2$, and random constraint parameters in problem \eqref{eq_dra0} chosen as $-2<a_i<2$ and $b=10$. To optimize this objective we apply~\eqref{eq_sol_efm2} (with $T=0.1$) under logarithmic quantization $g_{l}(\cdot)$ in Eq. \eqref{eq_quan_log}. This locally Lipschitz function satisfies $\Big(1-\frac{\delta}{2}\Big)z \leq \frac{g_{l}(z)}{z} \leq \Big(1+\frac{\delta}{2}\Big)z $ and, thus, Assumption~\ref{ass_gW} holds. We considered a composition of dynamics \eqref{eq_sol_efm}-\eqref{eq_sol_efm2} via $g_\kappa\Big(g_l(\partial f_i) -  g_l(\partial f_j)\Big)$ with $\kappa = 1$. The time-evolution of the residual $\overline{F}$ for different quantization levels $\delta$ is given in Fig.~\ref{fig_log_quant}. We further compare the performance under different time-delay models by assigning random heterogeneous delays to the links.
For Case II, two scenarios are given (i) time-varying, and (ii) time-invariant (fixed) delays. For Case I, from Remark~\ref{rem_delay_1slot}, we consider updating over time-scale $\overline{k}$ using composition of dynamics~\eqref{eq_sol_efm_delay_case}-\eqref{eq_sol_efm_delay2_case}. 
Simulations are shown in Fig.~\ref{fig_log_quant_delay1}
with parameters: $\delta = 0.125$, $T=0.05$, $\overline{\tau} = 2,6$.
As shown in Fig.~\ref{fig_log_quant_delay1}, the solution by Case II 
does not necessarily converge while the solution by Case I converges. This simulation shows that for small $\overline{\tau}$ (satisfying Theorem~\ref{thm_converg_delay}) Case II leads to faster convergence. On the other hand, for larger $\overline{\tau}$ Case I is a better delay-tolerant mechanism.

Note that the notion of information quantization and time-delays although prevalent in real-world networked systems cannot be addressed by the existing primal-based \cite{boyd2006optimal,gharesifard2013distributed,shames2011accelerated,doan2017scl,doan2017ccta,nedic2018improved,yi2016initialization,yang2013consensus,wang2018distributed,cherukuri2015distributed,lakshmanan2008decentralized} and dual-based solutions \cite{banjac2019decentralized,falsone2020tracking,chang2016proximal,wei_me_cdc,falsone2018distributed,aybat2019distributed,dtac}. In other words, these existing literature assume ideal network condition and there is no guarantee that their solution converge under quantized information and/or data transmission delays. 

\begin{figure}[]
	\centering
	\includegraphics[width=3.3in]{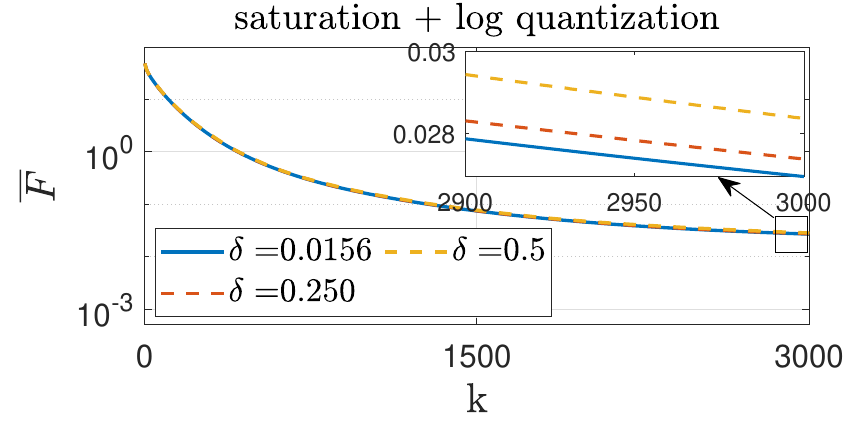}
	\caption{This figure shows the residual  $\overline{F}$ for quantized communications and saturated actuation (composition of nonlinear protocols~\eqref{eq_sol_efm}-\eqref{eq_sol_efm2}) over a dynamic uniformly-connected network. $\delta$ denotes the quantization level.}
	\label{fig_log_quant}
\end{figure}

\begin{figure}[]
	\centering
	\includegraphics[width=3.2in]{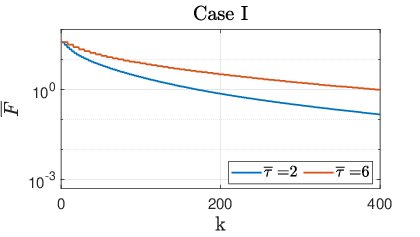}
	\includegraphics[width=3.2in]{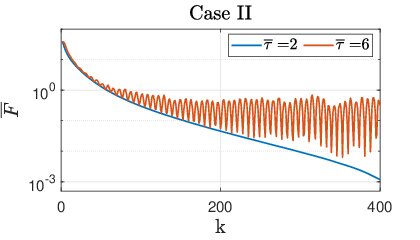}
	\includegraphics[width=3.2in]{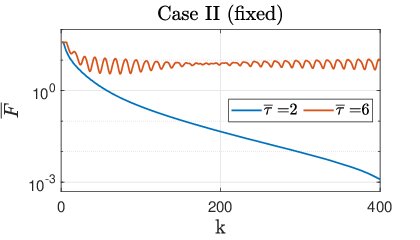}
	\caption{This figure shows the time-evolution of the residual $\overline{F}$ subject to delays $\overline{\tau} = 2,6$ and logarithmic quantization.  For Case II both time-varying and time-invariant (fixed at every link) delays are considered. For the case of a smaller delay value, the solution by Case II converges faster, while, for the larger delay value $\overline{\tau} = 6$, the solution by Case I gives better performance. }
	\label{fig_log_quant_delay1}
\end{figure}

\section{Conclusion and Future Directions} \label{sec_conclusion}
This paper proposes anytime-feasible (Laplacian-gradient) solutions subject to model nonlinearities to solve distributed sum-preserving resource allocation and coupling-constraint optimization over uniformly-connected networks (not necessarily connected at all times). The convergence to the optimal value is proved for general strongly sign-preserving nonlinearities. In addition, two scenarios are proposed to overcome heterogeneous delays over the network. For large delays and low buffers, we proposed to update the agents' states over a longer time scale after receiving (at least) $1$ delayed packet over every link. On the other hand, faster convergence can be achieved, for smaller time delays, by updating at the same time scale of the communication and using all the received (possibly) delayed packets at each iteration. The results are given for undirected and balanced graphs with bounded and link-symmetric delays.   

Allocation strategies over lossy networks with link failure or packet drop are another direction of our current research. As future research directions, application to asynchronous scheduling under quantized dynamics \cite{themis2021cpu}, robust (and noise-resilient) sign-based \cite{wei2017consensus,stankovic2019robust}, or single-bit \cite{taes2020finite} dynamics are of interest. Recall that for non ``strongly'' sign-preserving solutions, the $\ve$-accuracy needs to be addressed to give an estimate of the optimality gap. Another  future research direction is to extend the results to non-convex problems, which is  a bottleneck and more interesting for the industry.   

\bibliographystyle{elsarticle-num} 
\bibliography{bibliography}

\begin{thebibliography}{10}
\expandafter\ifx\csname url\endcsname\relax
  \def\url#1{\texttt{#1}}\fi
\expandafter\ifx\csname urlprefix\endcsname\relax\def\urlprefix{URL }\fi
\expandafter\ifx\csname href\endcsname\relax
  \def\href#1#2{#2} \def\path#1{#1}\fi

\bibitem{svm}
M.~Doostmohammadian, A.~Aghasi, T.~Charalambous, U.~A. Khan, Distributed
  support vector machine over dynamic balanced directed networks, IEEE Control
  Systems Letters 6 (2021) 758--763.

\bibitem{khan2020optimization}
U.~A. Khan, W.~U. Bajwa, A.~Nedi{\'c}, M.~G. Rabbat, A.~H. Sayed, Optimization
  for data-driven learning and control, Proceedings of the IEEE 108~(11) (2020)
  1863--1868.

\bibitem{molzahn2017survey}
D.~K. Molzahn, F.~D{\"o}rfler, H.~Sandberg, S.~H. Low, S.~Chakrabarti,
  R.~Baldick, J.~Lavaei, A survey of distributed optimization and control
  algorithms for electric power systems, IEEE Transactions on Smart Grid 8~(6)
  (2017) 2941--2962.

\bibitem{yang2013consensus}
S.~Yang, S.~Tan, J.~Xu, Consensus based approach for economic dispatch problem
  in a smart grid, IEEE Transactions on Power Systems 28~(4) (2013) 4416--4426.

\bibitem{themis2011asynchronous}
C.~N. Hadjicostis, T.~Charalambous, Asynchronous coordination of distributed
  energy resources for the provisioning of ancillary services, in: 49th IEEE
  Allerton Conference on Communication, Control, and Computing, 2011, pp.
  1500--1507.

\bibitem{cherukuri2015distributed}
A.~Cherukuri, J.~Cort{\'e}s, Distributed generator coordination for
  initialization and anytime optimization in economic dispatch, IEEE
  Transactions on Control of Network Systems 2~(3) (2015) 226--237.

\bibitem{wei2018nonlinear}
J.~Wei, X.~Yi, H.~Sandberg, K.~H. Johansson, Nonlinear consensus protocols with
  applications to quantized communication and actuation, IEEE Transactions on
  Control of Network Systems 6~(2) (2019) 598--608.

\bibitem{Hadjicostis_run:sum}
C.~N. Hadjicostis, N.~H. Vaidya, A.~D. Domínguez-García, Robust distributed
  average consensus via exchange of running sums, IEEE Transactions on
  Automatic Control 61~(6) (2016) 1492--1507.

\bibitem{parsegov2013fixed}
S.~E. Parsegov, A.~E. Polyakov, P.~S. Shcherbakov, Fixed-time consensus
  algorithm for multi-agent systems with integrator dynamics, IFAC Proceedings
  Volumes 46~(27) (2013) 110--115.

\bibitem{garg2019fixed2}
K.~Garg, M.~Baranwal, A.~O. Hero, D.~Panagou, Fixed-time distributed
  optimization under time-varying communication topology, arXiv preprint
  arXiv:1905.10472 (2019).

\bibitem{stankovic2019robust}
S.~S. Stankovi{\'c}, M.~Beko, M.~S. Stankovi{\'c}, Robust nonlinear consensus
  seeking, in: 58th IEEE Conference on Decision and Control, 2019, pp.
  4465--4470.

\bibitem{agarwal2012delay}
A.~Agarwal, J.~C. Duchi, Distributed delayed stochastic optimization, in: 51st
  IEEE Conference on Decision and Control, 2012, pp. 5451--5452.

\bibitem{al2020gradient}
H.~Al-Lawati, S.~C. Draper, Gradient delay analysis in asynchronous distributed
  optimization, in: IEEE International Conference on Acoustics, Speech and
  Signal Processing, 2020, pp. 4207--4211.

\bibitem{wang2018distributed}
D.~Wang, Z.~Wang, M.~Chen, W.~Wang, Distributed optimization for multi-agent
  systems with constraints set and communication time-delay over a directed
  graph, Information Sciences 438 (2018) 1--14.

\bibitem{wang2019distributed}
X.~Wang, Y.~Hong, X.~Sun, K.~Liu, Distributed optimization for resource
  allocation problems under large delays, IEEE Transactions on Industrial
  Electronics 66~(12) (2019) 9448--9457.

\bibitem{hu2023distributed}
C.~Hu, G.~Wen, S.~Wang, J.~Fu, W.~Yu, Distributed multiagent reinforcement
  learning with action networks for dynamic economic dispatch, IEEE
  Transactions on Neural Networks and Learning Systems (2023).

\bibitem{dai2019distributed}
P.~Dai, W.~Yu, G.~Wen, S.~Baldi, Distributed reinforcement learning algorithm
  for dynamic economic dispatch with unknown generation cost functions, IEEE
  Transactions on Industrial Informatics 16~(4) (2019) 2258--2267.

\bibitem{fast}
M.~Doostmohammadian, A.~Aghasi, M.~Pirani, E.~Nekouei, U.~A. Khan,
  T.~Charalambous, Fast-convergent anytime-feasible dynamics for distributed
  allocation of resources over switching sparse networks with quantized
  communication links, in: European Control Conference, 2022, pp. 84--89.

\bibitem{nedic2008distributed}
A.~Nedic, A.~Olshevsky, A.~Ozdaglar, J.~N. Tsitsiklis, Distributed subgradient
  methods and quantization effects, in: 47th IEEE Conference on Decision and
  Control, 2008, pp. 4177--4184.

\bibitem{rikos2020distributed}
A.~I. Rikos, C.~N. Hadjicostis, Distributed average consensus under quantized
  communication via event-triggered mass splitting, IFAC-PapersOnLine 53~(2)
  (2020) 2957--2962.

\bibitem{rikos_quant}
A.~I. Rikos, T.~Charalambous, K.~H. Johansson, C.~N. Hadjicostis,
  Privacy-preserving event-triggered quantized average consensus, in: 59th IEEE
  Conference on Decision and Control, 2020, pp. 6246--6253.

\bibitem{nekouei2016convergence}
E.~Nekouei, T.~Alpcan, G.~N. Nair, R.~J. Evans, Convergence analysis of
  quantized primal-dual algorithms in network utility maximization problems,
  IEEE Transactions on Control of Network Systems 5~(1) (2016) 284--297.

\bibitem{nekouei2016performance}
E.~Nekouei, G.~N. Nair, T.~Alpcan, Performance analysis of gradient-based nash
  seeking algorithms under quantization, IEEE Transactions on Automatic Control
  61~(12) (2016) 3771--3783.

\bibitem{liu2020global}
Z.~Liu, A.~Saberi, A.~A. Stoorvogel, D.~Nojavanzadeh, Global regulated state
  synchronization for homogeneous networks of non-introspective agents in
  presence of input saturation: Scale-free nonlinear and linear protocol
  designs, Automatica 119 (2020) 109041.

\bibitem{taes2020finite}
M.~Doostmohammadian, Single-bit consensus with finite-time convergence: Theory
  and applications, IEEE Transactions on Aerospace and Electronic Systems
  56~(4) (2020) 3332--3338.

\bibitem{chen2016distributed}
G.~Chen, J.~Ren, E.~N. Feng, Distributed finite-time economic dispatch of a
  network of energy resources, IEEE Transactions on Smart Grid 8~(2) (2016)
  822--832.

\bibitem{ning2017distributed}
B.~Ning, Q.~Han, Z.~Zuo, Distributed optimization for multiagent systems: An
  edge-based fixed-time consensus approach, IEEE Transactions on Cybernetics
  49~(1) (2017) 122--132.

\bibitem{wei2017consensus}
J.~Wei, A.~R.~F. Everts, M.~K. Camlibel, A.~J. van~der Schaft, Consensus
  dynamics with arbitrary sign-preserving nonlinearities, Automatica 83 (2017)
  226--233.

\bibitem{mikael2021cdc}
X.~Wu, S.~Magnusson, M.~Johansson, A new family of feasible methods for
  distributed resource allocation, in: 60th IEEE Conference on Decision and
  Control, 2021, pp. 3355--3360.

\bibitem{rikos2021optimal}
A.~I. Rikos, A.~Grammenos, E.~Kalyvianaki, C.~N. Hadjicostis, T.~Charalambous,
  K.~H. Johansson, Optimal {CPU} scheduling in data centers via a finite-time
  distributed quantized coordination mechanism, in: 60th IEEE Conference on
  Decision and Control, 2021, pp. 6276--6281.

\bibitem{zhu2019distributed}
Y.~Zhu, W.~Ren, W.~Yu, G.~Wen, Distributed resource allocation over directed
  graphs via continuous-time algorithms, IEEE Transactions on Systems, Man, and
  Cybernetics: Systems 51~(2) (2019) 1097--1106.

\bibitem{aybat2016distributed}
N.~S. Aybat, E.~Yazdandoost~Hamedani, Distributed primal-dual method for
  multi-agent sharing problem with conic constraints, in: 50th IEEE Asilomar
  Conference on Signals, Systems and Computers, 2016, pp. 777--782.

\bibitem{nedic2018improved}
A.~Nedi{\'c}, A.~Olshevsky, W.~Shi, Improved convergence rates for distributed
  resource allocation, in: IEEE Conference on Decision and Control, IEEE, 2018,
  pp. 172--177.

\bibitem{banjac2019decentralized}
G.~Banjac, F.~Rey, P.~Goulart, J.~Lygeros, Decentralized resource allocation
  via dual consensus {ADMM}, in: IEEE American Control Conference, 2019, pp.
  2789--2794.

\bibitem{falsone2020tracking}
A.~Falsone, I.~Notarnicola, G.~Notarstefano, M.~Prandini, Tracking-{ADMM} for
  distributed constraint-coupled optimization, Automatica 117 (2020) 108962.

\bibitem{chang2016proximal}
T.~Chang, A proximal dual consensus {ADMM} method for multi-agent constrained
  optimization, IEEE Transactions on Signal Processing 64~(14) (2016)
  3719--3734.

\bibitem{wei_me_cdc}
W.~Jiang, M.~Doostmohammadian, T.~Charalambous, Distributed resource allocation
  via {ADMM} over digraphs, in: IEEE 61st Conference on Decision and Control
  (CDC), IEEE, 2022, pp. 5645--5651.

\bibitem{falsone2018distributed}
A.~Falsone, K.~Margellos, M.~Prandini, A distributed iterative algorithm for
  multi-agent milps: Finite-time feasibility and performance characterization,
  IEEE control systems letters 2~(4) (2018) 563--568.

\bibitem{aybat2019distributed}
N.~Aybat, E.~Yazdandoost~Hamedani, A distributed {ADMM}-like method for
  resource sharing over time-varying networks, SIAM Journal on Optimization
  29~(4) (2019) 3036--3068.

\bibitem{dtac}
M.~Doostmohammadian, W.~Jiang, T.~Charalambous, {DTAC-ADMM}: Delay-tolerant
  augmented consensus {ADMM}-based algorithm for distributed resource
  allocation, in: IEEE 61st Conference on Decision and Control (CDC), IEEE,
  2022, pp. 308--315.

\bibitem{Kar6345156}
S.~Kar, G.~Hug, Distributed robust economic dispatch in power systems: A
  consensus + innovations approach, in: IEEE Power and Energy Society General
  Meeting, 2012, pp. 1--8.

\bibitem{hiskens_agc}
A.~N. Venkat, I.~A. Hiskens, J.~B. Rawlings, S.~J. Wright, Distributed mpc
  strategies with application to power system automatic generation control,
  IEEE Transactions on Control Systems Technology 16~(6) (2008) 1192--1206.

\bibitem{ccta_quan}
M.~Doostmohammadian, A.~Aghai, A.~I. Rikos, A.~Grammenos, E.~Kalyvianaki, C.~N.
  Hadjicostis, K.~H. Johansson, T.~Charalambous, Distributed {CPU} scheduling
  subject to nonlinear constraints, in: IEEE Conference on Control Technology
  and Applications, pp. 746--751.

\bibitem{MiadCons}
H.~Sayyaadi, M.~Moarref, A distributed algorithm for proportional task
  allocation in networks of mobile agents, IEEE Transactions on Automatic
  Control 56~(2) (2011) 405--410.
\newblock \href {https://doi.org/10.1109/TAC.2010.2089653}
  {\path{doi:10.1109/TAC.2010.2089653}}.

\bibitem{falsone2017dual}
A.~Falsone, K.~Margellos, S.~Garatti, M.~Prandini, Dual decomposition for
  multi-agent distributed optimization with coupling constraints, Automatica 84
  (2017) 149--158.

\bibitem{boyd2006optimal}
L.~Xiao, S.~Boyd, Optimal scaling of a gradient method for distributed resource
  allocation, Journal of Optimization Theory and Applications 129~(3) (2006)
  469--488.

\bibitem{yi2016initialization}
P.~Yi, Y.~Hong, F.~Liu, Initialization-free distributed algorithms for optimal
  resource allocation with feasibility constraints and application to economic
  dispatch of power systems, Automatica 74 (2016) 259--269.

\bibitem{algorithm}
T.~H. Cormen, C.~E. Leiserson, R.~L. Rivest, C.~Stein, {Introduction to
  Algorithms}, MIT Press, 2009.

\bibitem{doostmohammadian20211st}
M.~Doostmohammadian, A.~Aghasi, M.~Vrakopoulou, T.~Charalambous, 1st-order
  dynamics on nonlinear agents for resource allocation over uniformly-connected
  networks, in: IEEE Conference on Control Technology and Applications, 2022,
  pp. 1184--1189.

\bibitem{lakshmanan2008decentralized}
H.~Lakshmanan, D.~P. De~Farias, Decentralized resource allocation in dynamic
  networks of agents, SIAM Journal on Optimization 19~(2) (2008) 911--940.

\bibitem{bertsekas1975necessary}
D.~P. Bertsekas, Necessary and sufficient conditions for a penalty method to be
  exact, Mathematical programming 9~(1) (1975) 87--99.

\bibitem{bertsekas_lecture}
D.~P. Bertsekas, A.~Nedic, A.~E. Ozdaglar, Convexity, duality, and lagrange
  multipliers, Lecture Notes, MIT Press (2001).

\bibitem{nedic2014distributed}
A.~Nedi{\'c}, A.~Olshevsky, Distributed optimization over time-varying directed
  graphs, IEEE Transactions on Automatic Control 60~(3) (2014) 601--615.

\bibitem{ren2005consensus}
W.~Ren, R.~W. Beard, Consensus seeking in multiagent systems under dynamically
  changing interaction topologies, IEEE Transactions on Automatic Control
  50~(5) (2005) 655--661.

\bibitem{6426252}
N.~H. Vaidya, C.~N. Hadjicostis, A.~D. Dominguez-Garcia, Robust average
  consensus over packet dropping links: Analysis via coefficients of
  ergodicity, in: IEEE 51st IEEE Conference on Decision and Control, 2012, pp.
  2761--2766.

\bibitem{SensNets:Olfati04}
R.~Olfati-Saber, R.~M. Murray, Consensus problems in networks of agents with
  switching topology and time-delays, IEEE Transactions on Automatic Control
  49, no. 9 (2004) 1520--1533.

\bibitem{graph_handbook}
J.~L. Gross, J.~Yellen, Handbook of Graph Theory, CRC Press, 2004.

\bibitem{wang2008algebraic}
H.~Wang, P.~Van~Mieghem, Algebraic connectivity optimization via link addition,
  in: 3rd International Conference on Bio-Inspired Models of Network,
  Information and Computing Systems, 2008, pp. 1--8.

\bibitem{Themis_delay}
C.~N. {Hadjicostis}, T.~{Charalambous}, Average consensus in the presence of
  delays in directed graph topologies, IEEE Transactions on Automatic Control
  59~(3) (2014) 763--768.

\bibitem{Themis_delay_TCNS}
T.~Charalambous, Y.~Yuan, T.~Yang, W.~Pan, C.~N. Hadjicostis, M.~Johansson,
  Distributed finite-time average consensus in digraphs in the presence of time
  delays, IEEE Transactions on Control of Network Systems 2~(4) (2015)
  370--381.

\bibitem{Themis2014delay_cdc}
T.~Charalambous, C.~N. Hadjicostis, Average consensus in the presence of
  dynamically changing directed topologies and time delays, in: 53rd IEEE
  Conference on Decision and Control, 2014, pp. 709--714.

\bibitem{6120272}
K.~I. Tsianos, M.~G. Rabbat, Distributed consensus and optimization under
  communication delays, in: 49th Annual Allerton Conference on Communication,
  Control, and Computing, 2011, pp. 974--982.

\bibitem{maroti2004flooding}
M.~Mar{\'o}ti, B.~Kusy, G.~Simon, A.~L{\'e}deczi, The flooding time
  synchronization protocol, in: Proceedings of the 2nd international conference
  on Embedded networked sensor systems, 2004, pp. 39--49.

\bibitem{995554}
R.~A. Berry, R.~G. Gallager, Communication over fading channels with delay
  constraints, IEEE Transactions on Information Theory 48~(5) (2002)
  1135--1149.

\bibitem{gharesifard2013distributed}
B.~Gharesifard, J.~Cort{\'e}s, Distributed continuous-time convex optimization
  on weight-balanced digraphs, IEEE Transactions on Automatic Control 59~(3)
  (2013) 781--786.

\bibitem{shames2011accelerated}
E.~Ghadimi, M.~Johansson, I.~Shames, Accelerated gradient methods for networked
  optimization, in: IEEE American Control Conference, 2011, pp. 1668--1673.

\bibitem{doan2017scl}
T.~T. Doan, A.~Olshevsky, Distributed resource allocation on dynamic networks
  in quadratic time, Systems \& Control Letters 99 (2017) 57--63.

\bibitem{doan2017ccta}
T.~T. Doan, C.~L. Beck, Distributed lagrangian methods for network resource
  allocation, in: IEEE Conference on Control Technology and Applications
  (CCTA), 2017, pp. 650--655.

\bibitem{quan_cdc2006}
A.~Kashyap, T.~Basar, R.~Srikant, Consensus with quantized information updates,
  in: 45th IEEE Conference on Decision and Control, 2006, pp. 2728--2733.

\bibitem{binazadeh2016design}
T.~Binazadeh, M.~Bahmani, Design of robust controller for a class of uncertain
  discrete-time systems subject to actuator saturation, IEEE Transactions on
  Automatic Control 62~(3) (2016) 1505--1510.

\bibitem{8272410}
Y.~Wang, Y.~Song, D.~J. Hill, M.~Krstic, Prescribed-time consensus and
  containment control of networked multiagent systems, IEEE Transactions on
  Cybernetics 49~(4) (2019) 1138--1147.

\bibitem{6877847}
T.~Charalambous, C.~N. Hadjicostis, M.~Johansson, Distributed minimum-time
  weight balancing over digraphs, in: 6th International Symposium on
  Communications, Control and Signal Processing, 2014, pp. 190--193.

\bibitem{slp_book}
D.~Jurafsky, J.~H. Martin, Speech and Language Processing, Prentice Hall, 2020.

\bibitem{nesterov1998introductory}
Y.~Nesterov, Introductory lectures on convex programming, {volume I}: Basic
  course, Lecture notes 3~(4) (1998) 5.

\bibitem{amjady2009economic}
N.~Amjady, H.~Nasiri-Rad, Economic dispatch using an efficient real-coded
  genetic algorithm, IET generation, transmission \& distribution 3~(3) (2009)
  266--278.

\bibitem{wood1996power}
A.~J. Wood, B.~F. Wollenberg, G.~B. Sheble, Power generation operation and
  control, John Wiley \& Sons, 1996.

\bibitem{git_matpower}
Reliability test system grid modernization lab consortium data,
  \url{https://github.com/GridMod/RTS-GMLC}, [Online; accessed 10-Aug-2022]
  (2022).

\bibitem{themis2021cpu}
A.~Grammenos, T.~Charalambous, E.~Kalyvianaki, {CPU} scheduling in data centers
  using asynchronous finite-time distributed coordination mechanisms, IEEE
  Transactions on Network Science and Engineering (2023).

\end{thebibliography}

\end{document}